\colorlet{darkgreen}{green!80!black}
\colorlet{darkred}{red!80!black}
\tikzset{auto, >= stealth}
\tikzset{every edge/.append style={thick, shorten >= 1pt}}
\tikzset{initial/.style={draw, thick, <-, shorten <=1pt}}
\tikzset{player0/.style = {draw, thick, shape=circle, minimum size=5mm}}
\tikzset{player1/.style = {draw, thick, shape=rectangle, minimum size=5mm}}
\newcommand{\gamegraph}{G}
\newcommand{\spec}{\Phi}
\newsavebox{\@brx}
\newcommand{\llangle}[1][]{\savebox{\@brx}{\(\m@th{#1\langle}\)}%
  \mathopen{\copy\@brx\kern-0.5\wd\@brx\usebox{\@brx}}}
\newcommand{\rrangle}[1][]{\savebox{\@brx}{\(\m@th{#1\rangle}\)}%
  \mathclose{\copy\@brx\kern-0.5\wd\@brx\usebox{\@brx}}}
\newcommand{\set}[1]{\left\lbrace #1\right\rbrace}
\newcommand{\lang}{\mathcal{L}}
\newcommand{\A}{\mathcal{A}}
\newcommand{\Sc}{\mathcal{S}}
\newcommand{\B}{\mathcal{B}}
\newcommand{\fontsmall}{\fontsize{7pt}{6pt}\selectfont}
\newcommand{\ev}[2]{\llbracket {#1}, {#2} \rrbracket}
\newcommand{\1}{\mathbb{I}}
\newcommand{\te}{\mathbf{t}}
\newcommand{\Lpre}{\mathsf{Lpre}}
\newcommand{\Apre}{\mathsf{Apre}}
\newcommand{\Cpre}{\mathsf{Cpre}}
\newcommand{\Npre}{\mathsf{Npre}}
\newcommand{\Pre}{\mathsf{Pre}}
\newcommand{\SOLVE}{\mathsf{SOLVE}}
\newcommand{\SafeReach}{\mathsf{SafeReach}}
\newcommand{\e}{E'}
\newcommand{\Even}{\ensuremath{\textsf{Even}}\xspace}
\newcommand{\Odd}{\ensuremath{\textsf{Odd}}\xspace}
\newcommand{\Ve}{\ensuremath{V_{\Even}}\xspace}
\newcommand{\Vo}{\ensuremath{V_{\Odd}}\xspace}
\newcommand{\We}{\ensuremath{\mathcal{W}_{\Even}}\xspace}
\newcommand{\Wo}{\ensuremath{\mathcal{W}_{\Odd}}\xspace}
\newcommand{\bb}{\ensuremath{\Lambda}\xspace}
\newcommand{\nb}{\ensuremath{{\neg\bb}}\xspace}
\newcommand{\Xsr}{\ensuremath{{\mathcal{X}}}\xspace}
\newcommand{\Vl}{V^{\ell}}
\newcommand{\tup}[1]{\left( #1\right)}
\newcommand{\ltup}[1]{\langle #1\rangle}
\newcommand{\rank}[1]{\mathop{\mathsf{rank}\ifthenelse{\isempty{#1}}{}{(#1)}}}
\newcommand{\ranko}[1]{\mathop{\overline{\mathrm{rank}}\ifthenelse{\isempty{#1}}{}{(#1)}}}
\newif\ifFIRST
\newif\ifSECOND
\let\LISTOP\relax
\newcommand{\List}[4][\;]{#3#1%
        \FIRSTtrue
        \@for\i:=#2\do{%
        \ifFIRST\LISTOP{\i}\FIRSTfalse\else,\LISTOP{\i}\fi%
        }%
        #1#4%
        \let\LISTOP\relax
}
\newcommand{\propNeg}{\@ifstar\propNegStar\propNegNoStar}
\newcommand{\propNegStar}[1]{\ensuremath{\left(\propNegNoStar{#1}\right)}}
\newcommand{\propNegNoStar}[2][\cdot]{\ensuremath{\neg\ifthenelse{\isempty{#2}}{#1}{#2}}}
\newcommand{\propConj}{\@ifstar\propConjStar\propConjNoStar}
\newcommand{\propConjStar}[2]{\ensuremath{\left(\propConjNoStar{#1}{#2}\right)}}
\newcommand{\propConjNoStar}[3][\cdot]{\ensuremath{\ifthenelse{\isempty{#2}}{#1}{#2}\wedge\ifthenelse{\isempty{#3}}{#1}{#3}}}
\newcommand{\propDisj}{\@ifstar\propDisjStar\propDisjNoStar}
\newcommand{\propDisjStar}[2]{\ensuremath{\left(\propDisjNoStar{#1}{#2}\right)}}
\newcommand{\propDisjNoStar}[3][\cdot]{\ensuremath{\ifthenelse{\isempty{#2}}{#1}{#2}\vee\ifthenelse{\isempty{#3}}{#1}{#3}}}
\newcommand{\propImp}{\@ifstar\propImpStar\propImpNoStar}
\newcommand{\propImpStar}[2]{\ensuremath{\left(\propImpNoStar{#1}{#2}\right)}}
\newcommand{\propImpNoStar}[3][\cdot]{\ensuremath{\ifthenelse{\isempty{#2}}{#1}{#2}\Rightarrow\ifthenelse{\isempty{#3}}{#1}{#3}}}
\newcommand{\propAequ}{\@ifstar\propAequStar\propAequNoStar}
\newcommand{\propAequStar}[2]{\ensuremath{\left(\propAequNoStar{#1}{#2}\right)}}
\newcommand{\propAequNoStar}[3][\cdot]{\ensuremath{\ifthenelse{\isempty{#2}}{#1}{#2}\Leftrightarrow\ifthenelse{\isempty{#3}}{#1}{#3}}}
\newcommand{\AllQ}{\@ifstar\AllQStar\AllQNoStar}
\newcommand{\AllQStar}[3][\;]{\ensuremath{\left(\forall #2#1.#1#3\right)}}
\newcommand{\AllQNoStar}[3][\;]{\ensuremath{\forall #2#1.#1#3}}
\newcommand{\AllQu}{\@ifstar\AllQuStar\AllQuNoStar}
\newcommand{\AllQuStar}[3][\;]{\ensuremath{\left(\forall^{\infty} #2#1.#1#3\right)}}
\newcommand{\AllQuNoStar}[3][\;]{\ensuremath{\forall^{\infty} #2#1.#1#3}}
\newcommand{\ExQ}{\@ifstar\ExQStar\ExQNoStar}
\newcommand{\ExQStar}[3][\;]{\ensuremath{\left(\exists #2#1.#1#3\right)}}
\newcommand{\ExQNoStar}[3][\;]{\ensuremath{\exists #2#1.#1#3}}
\newcommand{\NExQ}{\@ifstar\NExQStar\NExQNoStar}
\newcommand{\NExQStar}[3][\;]{\ensuremath{\left(\nexists #2#1.#1#3\right)}}
\newcommand{\NExQNoStar}[3][\;]{\ensuremath{\nexists #2#1.#1#3}}
\newcommand{\UniqueQ}{\@ifstar\UniqueQStar\UniqueQNoStar}
\newcommand{\UniqueQStar}[3][\;]{\ensuremath{\left(\exists! #2#1.#1#3\right)}}
\newcommand{\UniqueQNoStar}[3][\;]{\ensuremath{\exists! #2#1.#1#3}}
  \newlength{\SFS@HEIGHT}
  \newlength{\SFS@WIDTH}
  \newcommand{\SplitX}[2]{
            \settoheight{\SFS@HEIGHT}{$#2$}
            \settowidth{\SFS@WIDTH}{$#2$}
            \mbox{\begin{tikzpicture}[baseline=(current bounding box.center)]
            \node[] (E) at (0,0) {$#1$};
            \node[inner sep=0pt] (F) at ($(E.south west)+(1ex,-1ex)+(3ex+.5\SFS@WIDTH,-\SFS@HEIGHT)$) {$#2$};
            \node[] (E) at (0,0) {\phantom{$#1$}};
            \draw[fill] ($(E.east)+(1ex,0ex)$) circle (.2ex);
            \draw[-] ($(E.east)+(1ex,0ex)$) -- ($(E.south east)+(1ex,-0.5ex)$) -- ($(E.south west)+(1ex,-0.5ex)$) -- ($(E.south west)+(1ex,-1ex)-(0,\SFS@HEIGHT)$) -- ($(E.south west)+(2.5ex,-1ex)-(0,\SFS@HEIGHT)$);
            \draw[fill] ($(E.south west)+(2.5ex,-1ex)-(0,\SFS@HEIGHT)$) circle (.2ex);
            \end{tikzpicture}}}
  \newcommand{\SplitS}[2]{
            \settoheight{\SFS@HEIGHT}{$#2$}
            \settowidth{\SFS@WIDTH}{$#2$}
            \mbox{\begin{tikzpicture}[baseline=(current bounding box.center)]
            \node[] (E) at (0,0) {$#1$};
            \node[inner sep=0pt] (F) at ($(E.south west)+(1ex,0.5ex)+(0ex+.5\SFS@WIDTH,-\SFS@HEIGHT)$) {$#2$};
            \end{tikzpicture}}}
\newcommand{\VSet}[2][]{\let\LISTOP\val\List[#1]{#2}{\{}{\}}}
\newcommand{\VTuple}[2][]{\let\LISTOP\val\List[#1]{#2}{(}{)}}
\newcommand{\UNION}{\@ifstar\UNIONStar\UNIONNoStar}
\newcommand{\UNIONStar}[2]{\ensuremath{\left(\UNIONNoStar{#1}{#2}\right)}}
\newcommand{\UNIONNoStar}[2]{\ensuremath{\ifthenelse{\isempty{#1}}{\cdot}{#1}\cup\ifthenelse{\isempty{#2}}{\cdot}{#2}}}
\newcommand{\UNIOND}{\@ifstar\UNIONDStar\UNIONDNoStar}
\newcommand{\UNIONDStar}[2]{\ensuremath{\left(\UNIONDNoStar{#1}{#2}\right)}}
\newcommand{\UNIONDNoStar}[2]{\ensuremath{\ifthenelse{\isempty{#1}}{\cdot}{#1}\uplus\ifthenelse{\isempty{#2}}{\cdot}{#2}}}
\newcommand{\SETMINUS}{\@ifstar\SETMINUSStar\SETMINUSNoStar}
\newcommand{\SETMINUSStar}[2]{\ensuremath{\left(\SETMINUSNoStar{#1}{#2}\right)}}
\newcommand{\SETMINUSNoStar}[2]{\ensuremath{\ifthenelse{\isempty{#1}}{\cdot}{#1}\setminus\ifthenelse{\isempty{#2}}{\cdot}{#2}}}
\newcommand{\INTERSECT}{\@ifstar\INTERSECTStar\INTERSECTNoStar}
\newcommand{\INTERSECTStar}[2]{\ensuremath{\left(\INTERSECTNoStar{#1}{#2}\right)}}
\newcommand{\INTERSECTNoStar}[2]{\ensuremath{\ifthenelse{\isempty{#1}}{\cdot}{#1}\cap\ifthenelse{\isempty{#2}}{\cdot}{#2}}}
\newcommand{\CARTPROD}{\@ifstar\CARTPRODStar\CARTPRODNoStar}
\newcommand{\CARTPRODStar}[2]{\ensuremath{\left(\CARTPRODNoStar{#1}{#2}\right)}}
\newcommand{\CARTPRODNoStar}[2]{\ensuremath{\ifthenelse{\isempty{#1}}{\cdot}{#1}\times\ifthenelse{\isempty{#2}}{\cdot}{#2}}}
\newcommand{\FINCOUNT}{\@ifstar\FinCountStar\FinCountNoStar}
\newcommand{\FinCountStar}[1]{\ensuremath{\#(\ifthenelse{\isempty{#1}}{\cdot}{#1})}}
\newcommand{\FinCountNoStar}[1]{\ensuremath{\#\left(\ifthenelse{\isempty{#1}}{\cdot}{#1}\right)}}
\title{Solving Odd-Fair Parity Games}%TODO Please add
\author{Irmak Sa\u{g}lam}{Max Planck Institute for Software Systems (MPI-SWS), Kaiserslautern, Germany}{isaglam@mpi-sws.org}{[orcid]}{}
\author{Anne-Kathrin Schmuck}{Max Planck Institute for Software Systems (MPI-SWS), Kaiserslautern, Germany}{akschmuck@mpi-sws.org}{[orcid]}{}
\authorrunning{I. Sa\u{g}lam, A.-K. Schmuck} %TODO mandatory. First: Use abbreviated first/middle names. Second (only in severe cases): Use first author plus 'et al.'
\keywords{parity games, strong transition fairness, algorithmic game theory} %TODO mandatory; please add comma-separated list of keywords
\begin{document}

\maketitle

\begin{abstract}
  This paper discusses the problem of efficiently solving parity games where player \Odd~has to obey an additional \emph{strong transition fairness constraint} on its vertices -- given that a player \Odd~vertex $v$ is visited infinitely often, a particular subset of the outgoing edges (called \emph{live edges}) of $v$ has to be taken infinitely often. Such games, which we call \emph{\Odd-fair parity games}, naturally arise from abstractions of cyber-physical systems for planning and control. 
%   and can currently not satisfactory been dealt with.}
  %for planning and control. %as well as in planning and in resource management.\todo{todo: if we don't add the explanation about resource planning, we should remove it from here}

  In this paper, we present a new Zielonka-type algorithm for solving \Odd-fair parity games. This algorithm not only shares \emph{the same worst-case time complexity} as Zielonka's algorithm for (normal) parity games but also preserves the algorithmic advantage Zielonka's algorithm possesses over other parity solvers with exponential time complexity. %, such as the recently introduced \emph{fixed-point algorithm} for \Odd-fair parity games by Banerjee et. al.~\cite{banerjee2022fast}.
 
 We additionally introduce a formalization of \Odd~player winning strategies in such games, which were unexplored previous to this work. %To the best of our knowledge, this is the first time such a formalization has been introduced.  %We represent these strategies as \enquote{almost positional} strategy templates, and prove their existence from all player \Odd~winning vertices. 
 This formalization serves dual purposes: firstly, it enables us to prove our Zielonka-type algorithm; secondly, it stands as a noteworthy contribution in its own right, augmenting our understanding of additional fairness assumptions in two-player games.\end{abstract}

\vspace*{-0.3cm}
\section{Introduction}
\vspace{-1mm}

\emph{Parity games} are a canonical representation of $\omega$-regular two-player games over finite graphs, which arise from many core computational problems in the context of correct-by-construction synthesis of reactive software or hardware. In particular, two player games on graphs have been extensively used in the context of cyber-physical system design \cite{Tabuada2009,belta2017formal}, showing their practical importance. 
% 
% On the other hand, parity games are interesting from a theoretical point of view as they lie in the complexity class  $UP \cap coUP$, which is contained in $NP \cap coNP$~\cite{JurdzinskiUPcoUP,EmersonJutlaModelCheckingMuCalculus}. %It is widely believed that there exists a polynomial time algorithm to solve them, but non is known so far. 
% Further, parity games are polynomial-time equivalent to the model checking problem of modal $\mu-$calculus \cite{EmersonJutlaModelCheckingMuCalculus,ColinModelChecking,Walukiewicz_muCalculusMSOL}. 
% \vspace{-1mm}
% 
\emph{Fairness}, on the other hand, is a property that widely occurs in this context - both as a desired property to be enforced (e.g., requiring a synthesized scheduler to fairly serve its clients), as well as a common assumption on the behavior of other components (i.e., assuming the network to always eventually deliver a data packet). 
While \emph{strong fairness} encoded by a Streett condition necessarily incurs a high additional cost in synthesis \cite{EJ99}, it is known that the general reactivity(1) (GR(1)) fragment of linear temporal logic (LTL)~\cite{BJPPS2006} allows for efficient synthesis in the presence of very restricted fairness conditions. Due to its efficiency, it is extensively used in the context of cyber-physical system design, e.g.~\cite{WEK18,AMT2013,MR2015,KFP2007,KFP2009,SKCCB2015}. %The GR(1) fragment was particularly well received in the robotics and cyber-physical systems community and was extensively used to synthesize controllers for physical systems \cite{}.
% \vspace{-1mm}

\begin{comment}

Despite the omnipresence of fairness in synthesis problems and the success of the GR(1) fragment, not much else is known about tractable fairness constraints in synthesis. A notable exception is the recent work by Banerjee et. al.~\cite{banerjee2022fast} which shows that the sub-class of \emph{strong transition fairness assumptions}~\cite{QS83,Francez,baierbook} can be handled efficiently in synthesis. In particular, they consider Rabin games where the environment player has to obey additional strong transition fairness constraints on its vertices  -- i.e., if the environment player vertex $v$ is visited infinitely often, a particular subset of the outgoing edges (called \emph{live edges}) of $v$ has to be taken infinitely often.

Banerjee et. al.~\cite{banerjee2022fast} show that such games can be solved via a symbolic fixed-point algorithm in the $\mu$-calculus that has almost the same computational worst case complexity as the algorithm for (normal) Rabin games. %This makes strong transition fairness a promising candidate for a tractable class of fairness constraints in synthesis. 
% The \emph{main contribution} of this paper is to show that this insight caries over to \emph{Parity games under strong transition fairness assumptions}. Here, player \Odd has to obey the additional strong transition fairness constraint. We call such games \emph{\Odd-fair Parity games}.
% \todo{we repeat \emph{the main contribution} twice and say different things. Maybe we should use something else here... Or express it in some other way.}
% 
The interesting insight that Banerjee et. al.~\cite{banerjee2022fast} exploit, is that the required modification of the symbolic algorithm are very local and syntactic. In particular, Banerjee et. al. also show that this intuition carries over to symbolic fixed-point algorithms for many other games. In particular, this provides a symbolic algorithm for \Odd-fair parity games.\todo{IS : I think we can make this paragraph. An alternative is below}
\end{comment}

% \todo{AKS:changed the below paragraph, please check. Checked, looks good.}
%\begin{comment}
Despite the omnipresence of fairness in such synthesis problems and the success of the GR(1) fragment, not much else is known about tractable fairness constraints in synthesis via two player games on graphs. A notable exception is the recent work by Banerjee et.\ al.~\cite{banerjee2022fast} which considers the sub-class of \emph{strong \textbf{transition} fairness assumptions}~\cite{QS83,Francez,baierbook} which require that whenever the environment player vertex $v$ is visited infinitely often, a particular subset of the outgoing edges (called \emph{live edges}) of $v$ has to be taken infinitely often. In other words, \emph{strong \textbf{transition} fairness assumptions} limit \emph{strong fairness assumptions} to individual transitions.
% The work by Banerjee et. al.~\cite{banerjee2022fast} shows that this limitation allows to handle them efficiently in synthesis.
% While this result makes these fairness conditions theoretically interesting, they are also practically important as they are known to
Despite their limited expressive power, such restricted fairness constrains do naturally arise in resource management \cite{CAFMR13}, in abstractions of continuous-time physical processes for planning \cite{CPRT03, DTV99, PT01, DIRS18, RS14, AGR20} and controller synthesis \cite{thistle1998control, NOL17, MMSS2021}, which makes them interesting to study.
% \todo{I filled in citations and cited that work you recommended. In planning we have plenty papers, hoerver on resource management I only have Rupak's paper. And about that I have the concerns I texted you about from mattermost. should we still cite it? }
%In particular, Banerjee et. al.~\cite{banerjee2022fast} consider $\omega$-regular games where the environment player has to obey additional \emph{strong transition fairness constraints} on its vertices  -- i.e., if the environment player vertex $v$ is visited infinitely often, a particular subset of the outgoing edges (called \emph{live edges}) of $v$ has to be taken infinitely often.
 
Concretely, Banerjee et.\ al.~\cite{banerjee2022fast} show that \emph{parity games} with strong transition fairness assumptions on player \Odd\ -- which we call \emph{\Odd-fair parity games} -- can be solved via a symbolic fixed-point algorithm in the $\mu$-calculus with almost the same computational worst case complexity as the algorithm for the \enquote{normal} version of the same game. %This makes strong transition fairness a promising candidate for a tractable class of fairness constraints in synthesis. 
% In particular, this provides a nested fixed-point equation for \Odd-fair parity games. 
The existence of quasi-polynomial time solution algorithms for \Odd-fair parity games then follows as a corollary of their nested fixed-point characterization~\cite{HS21, ANP21, JMT22}. %\todo{IS : Here, I have removed the intuition behind Banerjee et al.'s construction (syntactic and local change)}
%\end{comment}
% 
%Here, player \Odd has to obey the additional strong transition fairness constraint. We call such games \emph{\Odd-fair Parity games}. While Banerjee et. al.'s work allows to compute the winning region and a winning strategy of player \Even in such games, 
Unfortunately, it is well known that symbolic fixed-point computations become cumbersome very fast for parity games, as the number of priorities in the game graph increases, leading to high computation times in practice. 
Given the known inefficiency of existing quasi-polynomial algorithms for parity games \cite{oink, Parys19}, despite their theoretical advantages, they are not viable candidates for adoption in the development of efficient solution algorithms for \Odd-fair parity games either.
%The failure of existing quasi-polynomial algorithms for parity games to provide efficiency, despite their theoretical superiority remove them from the candidacy  %fail to meet the efficiency of the mentioned exponential algorithms in practice\todo{citation}, the most celebrated one being Zielonka's algorithm.
% IS: I removed the sentence below, because I din't see how it contributes to our point
%While Banerjee et. al. \cite{banerjee2022fast} provide a tool called \texttt{FairSyn} which exploits multi-threaded executions to allow for efficient synthesis over Rabin games with fairness constrains, this multi-threading has no effect on Parity games (which could be interpreted as special Rabin games), as the nesting of their $\mu$-calculus formula only allows for a single thread in \texttt{FairSyn}. 
% 
For (normal) parity games, computational tractability can be achieved by other algorithms, such as Zielonka's algorithm \cite{Zielonka98}, tangle learning \cite{van-Dijk-tangle-learning} or strategy-improvement \cite{Schewe-strategy-improvement}, implemented in the state-of-the-art tool \texttt{oink}~\cite{oink}, with Zielonka's algorithm being widely recognized as the most prominent approach.

% \todo{IS: I changed the following paragraph, can you check? I cemmented out the previous version below}
The \emph{\textbf{main contribution}} of this paper is a Zielonka-type algorithm, referred to as \enquote{\emph{\Odd-fair Zielonka's algorithm}}, for solving \Odd-fair parity games. This novel algorithm meets the efficiency of Zielonka's algorithm while maintaining the same computational worst-case complexity (which is exponential just like the worst-case complexity of the fixed-point algorithm from \cite{banerjee2022fast}).
%Using a prototype implementation, we experimentally verify that it retains the algorithmic advantage Zielonka's algorithm holds over fixed-point algorithms for classical parity games. %exhibits similar advantages to Zielonka's algorithm over fixed-point algorithms for classical parity games,
Using a prototype implementation, we experimentally verify its efficiency, demonstrating that it matches Zielonka’s algorithm in speed, thereby highlighting its comparable performance to fixed-point algorithms for classical parity games.

In contrast to the work by Banerjee et. al.~\cite{banerjee2022fast}, the adaptation and the correctness proof of \Odd-fair \emph{Zielonka's algorithm} requires the understanding of \Odd player strategies, while \cite{banerjee2022fast} studies the solution of such games solely from the \Even player's perspective. Unfortunately, \Odd strategies are substantially more complex than \Even strategies in such games, as they are not positional -- while player \Even strategies still are (see \cite[Thm.3.10]{banerjee2022fast}). %Positionality of \Even strategies is due to live edges only originating from \Odd vertices, and can be achieved as... . 
The \emph{\textbf{second contribution}} of this paper is therefore the formalization of \Odd player strategies in \Odd-fair parity games, via so called \emph{strategy templates}, which was unexplored prior to this work.
%We note that this is the first time \Odd player strategies are explored in these games since in~\cite{banerjee2022fast} the game is studied only from \Even player's perspective. 
We give a constructive proof for the existence of strategy templates winning for \Odd from all vertices in the winning region of \Odd.
This serves dual purposes: firstly, it enables us to prove the correctness of the \Odd-fair Zielonka's algorithm; secondly, it stands as a noteworthy contribution in its own right, augmenting our understanding of additional fairness assumptions in two-player games which are currently only unsatisfactorily adressed in various practically motivated synthesis problems.

\vspace*{-0.3cm}
\section{Preliminaries}
\noindent\textbf{Notation.}
We use $\mathbb{N}$ to denote the set of natural numbers including zero and $\mathbb{N}^+$ to denote positive integers. Let $\Sigma$ be a finite set. Then  $\Sigma^*$ and $\Sigma^\omega$ denote the sets of finite and infinite words over $\Sigma$, respectively. %, and $\Sigma^\infty$ is equal to $\Sigma^*\cup \Sigma^\omega$.
% Given two natural numbers $a,b\in\mathbb{N}$ with $a<b$, we use $[a;b]$ to denote the set $\set{n\in\mathbb{N} \mid a\leq n\leq b}$.
% For any given set $[a;b]$, we write $i\ineven [a;b]$ and $i\inodd [a;b]$ as short hand for $i\in [a;b]\cap \set{0,2,4,\ldots}$ and $i\in [a;b]\cap \set{1,3,5,\ldots}$ respectively.
% Given two sets $A$ and $B$, a relation $R\subseteq A\times B$, and an element $a\in A$, we write $R(a)$ to denote the set $\set{b\in B\mid (a,b)\in R}$.

\smallskip
\noindent\textbf{Game graphs.}
A \emph{game graph} is a tuple $\gamegraph= \tup{V,V^0,V^1,E}$ where $(V,E)$ is a finite directed graph with  \emph{edges} $ E $ and \emph{vertices} $ V $ partioned into player $0$ and player $1$ vertices, $V^0$ and $V^1$, respectively. 
Without loss of generality, we can assume
that all nodes in $V$ have at least one outgoing edge. Under this assumption, there exist plays from each vertex.
A \emph{play} originating at a vertex $v_0$ is an infinite sequence of vertices $\pi=v_0v_1\ldots \in V^\omega$. 
For $v \in V$, $E(v)$ denotes its successor set $\{w \mid (v, w) \in E\}$. 

\smallskip
\noindent\textbf{LTL winning conditions.}
Given a game graph $\gamegraph$, we consider winning conditions specified using a formula $\spec$ in \emph{linear temporal logic} (LTL) over the vertex set $V$, that is, we consider LTL formulas whose atomic propositions are sets of vertices. 
In this case the set of desired infinite plays is given by the semantics of $\spec$ which is an $\omega$-regular language $\lang(\spec)\subseteq V^\omega$. 
%In this case, an $\omega$-regular winning condition expressed as the LTL formula $\spec$ can be interpreted as the $\omega$-regular language $\lang(\spec)\subseteq V^\omega$ containing all desired infinite sequences of vertices. 
The standard definitions of $\omega$-regular languages and LTL are omitted for brevity and can be found in standard textbooks \cite{baierbook}. A game graph $\gamegraph$ under the winning condition $\spec$ is written as $\ltup{\gamegraph, \spec}$. A play $\pi$ is winning for player $0$ in $\ltup{\gamegraph, \spec}$ if $\pi\in\lang(\spec)$, i.e. $\pi\models\spec$.

\smallskip
\noindent \textbf{Strategies.}
A \emph{strategy} for player $j$ over the game graph $\gamegraph$ is a function $\rho^j : V^* \cdot V^j \to V$ with the constraint that for all $u\cdot v \in V^* \cdot V^j$ it holds that $\rho^j(u \cdot v) \in E(v)$. A play $\pi=v_0v_1\ldots \in V^\omega$ is compliant with $\rho^j$ if for all $i\in \mathbb{N}$ holds that $v_i\in V^j$ implies $v_{i+1}=\rho^j(v_0 \ldots v_i)$. A strategy $\rho^j$ is winning from a subset $V'$ of vertices of the game $\ltup{\gamegraph,\Psi}$ if all plays $\pi$ in $G$ that start at a vertex in $V'$ and are compliant with $\rho^j$ are winning w.r.t.\ $\Psi$. 
A strategy $\rho$ is called \emph{positional} iff for all $w_1, w_2 \in V^*$, $\rho(w_1 \cdot v) = \rho(w_2 \cdot v)$. %That is, a positional strategy $\rho_0$ for
 %\Even can be defined from $\Ve \to V$, instead of $V^* \cdot \Ve \to V$. 
%\AKS{misses definition of compliant plays}

\smallskip
\noindent\textbf{Parity Games.}
Parity games are particular two player games over a game graph $\gamegraph$ where the winning condition is given by a particular mapping of vertices. Formally, a parity game is a tuple $\mathcal{G} = \ltup{ V, \Ve, \Vo, E, \chi}$, where $\tup{V, \Ve, \Vo, E}$ is a game graph and $\chi : V \to \mathbb{N}^+$ is a function which labels each vertex with an integer value, called a \emph{priority}. The players $0$ and $1$ are called $\Even$ and $\Odd$ in a parity game and 
% 
% between player \Even and player \Odd, where
% \begin{inparaenum}[(i)]
%     \item $V$ is the vertex set, which is a disjunction of $\Ve$, the vertices owned by player \Even and $\Vo$, the vertices owned by player \Odd;
%     \item $E \subseteq V \times V $ is the edge set; and
%     \item $\chi : V \to \mathbb{N}^+$ is the priority function, which labels each vertex with an integer value.
% \end{inparaenum}
% % 
% The two players -- \Even and \Odd -- get to chose the next edge to take from their respective vertices $\Ve$ and $\Vo$. A play is
% an infinite sequence $v_1 v_2 v_3 \ldots \in V^\omega$ 
% \AKS{Usually we start counting at $0$, should we change it?}
% of vertices, where for all $i\in\mathbb{N}$, $(v_i, v_{i+1}) \in E$. Without loss of generality, we can assume
% that all nodes in $V$ have at least one outgoing edge. Under this assumption, there exists plays from each vertex.
a play $\pi = v_1 v_2 \ldots$ is winning for \Even iff $\max\{\inf(\pi)\}$ is \emph{even}, where $\inf(\pi)$ is the set of vertices visited infinitely often in $\pi$. Otherwise the play is winning for \Odd. %We define the sets $C_i := \{ v \in V \mid \chi(v) = i\}$ and $\overline{C_i} := V \setminus C_i$ to ease notation.

% \smallskip
% \noindent \textbf{Winning Parity Games.}
A node $v\in V$ is said to be won by \Even, if \Even has a (winning) strategy $\rho$ such that 
all plays $\pi = v\cdot \pi'$ that are compliant with $\rho$ are won by \Even. %\IS{Q: Is there a difference between saying (i) 'all plays that are compliant wih $\rho_0$ are won by... and (ii) for all player \Odd strategies $\rho_1$, all plays that are compliant with $\rho_0$ and $\rho_1$ are won by...}
The winning region of \Even is the set of all nodes won by \Even and is denoted by $\We$. The winning region of \Odd, $\Wo$, is defined similarly. 
It is well-known that parity games are determined, that is, all nodes are either in $\mathcal{W}_{Even}$ or in $\mathcal{W}_{Odd}$; and that both players have positional winning strategies from their respective winning regions \cite{EJ89}.

\smallskip
\noindent\textbf{\Odd-Fair Parity Games.} %\todo{I removed the 'fair parity' part upon the correct complaints of a review that they are not well-defined.}
An \emph{\Odd-fair parity game} 
$\mathcal{G}^\ell$ is a tuple $\ltup{\mathcal{G},E^\ell}$, where $\mathcal{G} = \langle V, \Ve, \Vo, E, \chi\rangle$ is a parity game, $E^\ell \subseteq E $ is a set of \emph{live edges} that originate from \Odd player vertices %inducing a \emph{strong transition fairness constraint}. 
and $V^\ell \subseteq \Vo$, the domain of the relation $E^\ell$, is the set of \emph{live vertices}.
The live edges induce a \emph{strong transition fairness constraint} -- whenever a live vertex $v$ is visited infinitely often, every outgoing live edge $(v, w') \in E^\ell$ needs to be taken infinitely often.
Formally, a play $\pi$ in $\mathcal{G}$ \emph{complies} with $E^\ell$ if the LTL formula%
\footnote{Here, $\square$, $\diamondsuit$ and $\bigcirc$ stand for the LTL operators 'always', 'eventually' and 'next'.}
\begin{equation}\label{eq:fairness-ltl}
    \textstyle\alpha := \bigwedge_{(v, w)\in E^\ell} (\,\square\, \diamondsuit \,v \implies \square\, \diamondsuit\, (v \wedge \bigcirc w))\vspace{-1mm}
\end{equation} 
holds along $\pi$, i.e.\ $\pi\models\alpha$. %\footnote{$\square$ stands for "always", $\diamondsuit$ for "eventually" and $\bigcirc$ for "next". For more detailed information on LTL \cite{}}
A play $\pi$ is winning for \Even in $\mathcal{G}^\ell$ if and only if $\pi \models \neg\alpha$ or $\max\{\inf(\pi)\}$ is \emph{even}. Dually, $\pi$ is winning for \Odd iff $\pi \models \alpha$ and $\max\{\inf(\pi)\}$ is odd. A strategy $\rho$ over $\gamegraph$ is therefore winning for \Even (resp.\ \Odd) in $\mathcal{G}^\ell$ if all plays compliant with $\rho$ are winning for \Even (resp. \Odd) in $\mathcal{G}^\ell$.

As the winning condition of a parity game can be equivalently modeled by a suitably defined LTL winning condition, we see that \Odd-fair parity games are a special $\omega$-regular game with perfect information. This implies that \Odd-fair parity games are determined (by the Borel determinacy theorem \cite{Martin75}) and whenever there exists a winning strategy for \Even/\Odd in such a game, then there also exists one with \emph{finite} memory \cite{GH82}.

\vspace{-0.15cm}

\section{Strategy Templates}\label{sec:templates}
% \vspace{-0.5em}

%Motivated by the known practical importance of exponential algorithms, such as Zielonka's algorithm, for solving Parity games fast in practice, the goal of this paper is to extend Zielonka's algorithm to \Odd-fair Parity games while retaining its efficiency. This, however, requires the formalization of player \Odd wining strategies in \Odd-fair parity games.
In this section, we introduce a formalization of player \Odd strategies in \Odd-fair parity games via \emph{strategy templates}.
% 
% under the name \Odd \emph{strategy templates}. This formalization will be needed in Sec. 4 while proving the correctness of Zielonka's algorithm for \Odd-fair parity games. 
% We state the existence of winning \Odd-strategy templates on \Wo in Prop~\ref{prop:existence-maximaloddstrategytemplates} but do not provide its proof until Sec. 5. 
% This is to avoid dividing the reader's attention from the extension of Zielonka's algorithm to Odd-fair parity games, which is the main emphasis of this paper and is introduced in Section 4.
% 
In contrast to player \Even, player \Odd winning strategies are no longer positional in \Odd-fair parity games, as illustrated by the following example. %that requires the same number of symbolic steps as the algorithm computing winning strategies for \Even in \enquote{normal} parity games.
% \vspace{-0.5em}
\begin{example}\label{ex:strategytemplates}
Consider the three different parity games depicted in Fig.~\ref{fig:Oddstrategies1}. %, three \Odd-fair parity games are depicted, with circles indicating \Ve and squares indicating \Vo. Edges in $E^\ell$ are shown by dashed lines. All nodes are labeled with their priorities.
   In all three games, \Odd has a winning strategy from all vertices, i.e., $\mathcal{W}_{Odd}=V$. %The red-colored edges indicate \Odd's strategy: if \Odd takes the red edges alternatingly from the source nodes, it wins from all nodes. 
  However, in order to win, the vertex $3$ has to be seen infinitely often in game (a) and (b), which forces \Odd to use its live edge\textbackslash s infinitely often. This prevents the existence of a positional strategy for \Odd in games (a) and (b): In (a) it needs to somehow alternate between (it's only) live edge to $4$ and a \enquote{normal} edge to $7$ (both indicated in red) in order to win, and in (b) it needs to somehow alternate between all its live edges (also indicated in red). In the game (c), \Odd can win by 'escaping' its live vertex $3$ to a \enquote{normal} vertex $5$, and thereby has a positional strategy. % (again indicated in red).
   
  Now consider the subgraph of each game formed by all colored edges (red and blue), which include the strategy choices from \Vo and \emph{all} outgoing edges from \Ve. As we have seen that \Odd needs to play all red edges repeatably, this subgraph represents the paths that \emph{can} be seen in the game depending on the \Even strategy. Hence, a node $v\in\Vl\subseteq\Vo$ can be seen infinitely often in a play (compliant with \Odd's strategy), if it lies on a cycle in this subgraph. We observe that, in games (a) and (b), node $3$ lies on cycles in this subgraph, whereas in game (c), it does not. 
  We further see that whenever a vertex  $v\in\Vl$ lies on a cycle, \Odd needs to take all its outgoing live edges (as for vertex $3$ in example (b)) and possibly one more edge (as for vertex $3$ in example (a)), for all other vertices in $\Vo$ a positional strategy suffices (as for vertex $5$ in all examples, and for vertex $3$ in example (c)). This shows that \Odd strategies are intuitively still \enquote{almost positional}.
% %   
%   
% Another intuition we gather is that, in all of these examples for any node in \Wo it is sufficient for \Odd to take either one outgoing edge (i.e. \Odd had a positional strategy on the node) or, to take all its outgoing live edges and possibly one more edge. We express this feature of a strategy by \enquote{almost positionality}.
   
%    The difference between the games (a)-(b) and (c) is that, in the first two node $3$ has to be seen infinitely often whereas in the later this is not the case. 
%    In game (a), \Odd does not have a positional winning strategy from node $3$; however, it has a winning strategy that allows it to take its live edge to $4$ together with its edge to $7$ infinitely often to win the game.
% In (b), once more \Odd does not have a positional winning strategy from node $3$, since in all winning plays it needs to take both its live outgoing edges infinitely often.
% On the other hand game (c) is an example where \Odd has a positional winning strategy that ignores the live outgoing edge of $3$.
\end{example}%\vspace{-2em}

\begin{figure}
\begin{center}
    \includegraphics[width=0.8\textwidth]{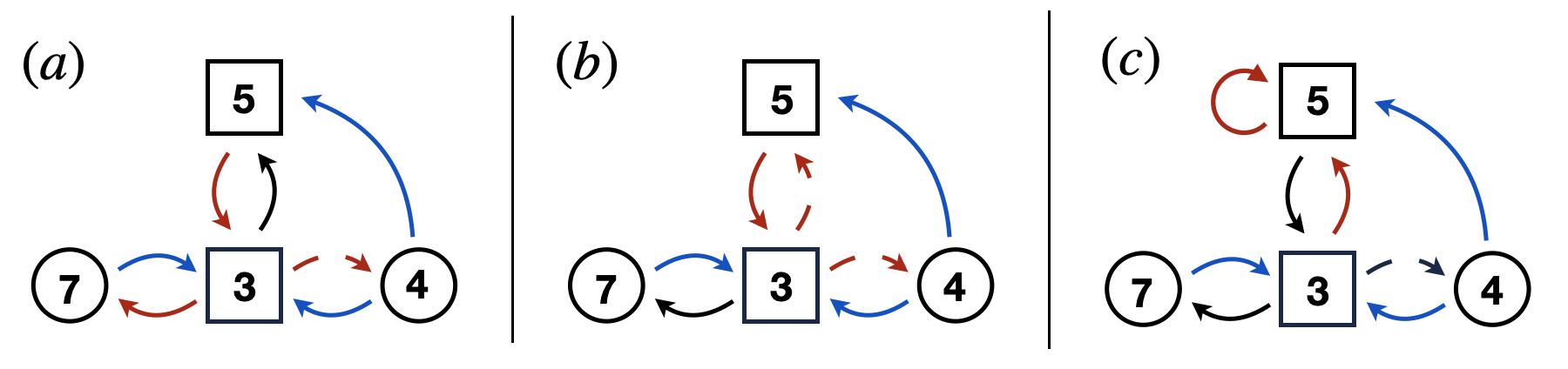}
    \end{center}
    \vspace{-0.8cm}
     \caption{\Odd-fair games with player even \Ve (circles) and player odd \Vo (squares) vertices (labeled with their priorities). Live edges $E^\ell$ (dashed) originate from $\Vo$. Colored player \Odd (red) and player \Even (blue) edges belong to player \Odd's strategy template.}\label{fig:Oddstrategies1}%\vspace{-2em}
     \vspace*{-0.5cm}
    \end{figure}

% The difference between the games (a)-(b) and (c) is that, in the first two node $3$ has to be seen infinitely often whereas in the later this is not the case. Observe that, for each game in Fig.~\ref{fig:Oddstrategies1} if we mark all outgoing edges of \Even nodes with red in addition to the red \Odd edges, we get a subgraph of the game.

\vspace*{-0.2cm}

The intuitions conveyed by Ex.~\ref{ex:strategytemplates} are formalized by the following definitions. % for \Odd strategy templates.

% are captured by so called \emph{strategy templates}, data structures that allow us to define an infinite number of winning player \Odd strategies for an \Odd-fair parity game in a finitary manner that is \emph{almost positional}.

\begin{definition}[\Odd Strategy Template]\label{def:Oddstrategytemplate}
 Given an \Odd-fair parity game $\mathcal{G}^\ell = \ltup{\mathcal{G}, E^\ell}$ with \newline $\mathcal{G} = \langle V, \Ve, \Vo, E, \chi\rangle$, an \Odd \emph{strategy template} $\mathcal{S}$ over $\mathcal{G}^\ell$ is a subgraph of $\mathcal{G}$ given as follows: $\mathcal{S}:=\tup{V',E'}$ where $V'\subseteq V$ and $E'\subseteq E \cap (V' \times V')$ such that the following hold,
\begin{compactitem}\label{item:Oddstrtemprules}
 \item if $v \in \Vo \cap V'$ does not lie on a cycle in $(V',E')$, then $|E'(v)|=1$,
 \item if $v \in \Vo \cap V'$ lies on a cycle in $(V',E')$ then $E^\ell(v) \subseteq E'(v)$ and  $1\leq |E'(v)|\leq |E^\ell(v)| + 1$,
 \item if $v \in \Ve \cap V'$, then  $E'(v) = E(v)$.
\end{compactitem}
\end{definition}
%Intuitively, whenever an \Odd vertex lays on a cycle in $(V', E')$, we expect it
%to have all its outgoing live edges, and possibly one more edge. Whenever it does not lay on a cycle, it has exactly one of its outgoing edges. All vertices $v\in\Ve$ have all of their outgoing edges. Moreover, all of the vertices in $V'$ have at least one outgoing edge in $E'$. \AKS{maybe move to the example above}\IS{I am not sure, we can remove if space doesn't suffice.}
% We call a tuple $(V', V'_\Even, V'_\Odd, E')$ a \textit{plausible} strategy template to indicate that the sets $V', V'_\Even, V'_\Odd $ and $E'$ satisfy the restrictions given in the definition above. \IS{??}
\begin{definition}\label{def:compliantstrat}
 Let  $\mathcal{G}^\ell = \ltup{\mathcal{G}, E^\ell}$ be an \Odd-fair parity game with \Odd strategy template $\mathcal{S}=\tup{V',E'}$, and $V'_\Odd := V' \cap V_\Odd$. Then an
\Odd strategy $\rho$ is said to be \textbf{compliant} with $\mathcal{S}$ if % the restriction of $\rho$ to $V'$ is a winning strategy in the game $\ltup{\mathcal{S},\alpha'}$ where 
it is a winning strategy in the game $\ltup{\gamegraph,\alpha'}$ where $\gamegraph= \tup{V,\Ve,\Vo,E}$ and 
\begin{subequations}
 \begin{align}
 \alpha':= &\textstyle\bigwedge_{v\in\Vo'}(\,\square\, (\,v \implies \bigvee_{(v,w)\in E'} \bigcirc\, w\,))\,\label{equ:alpha:a}\\
 & \textstyle\wedge \bigwedge_{v\in\Vo'} (\,\square \,\diamondsuit\, v \implies \bigwedge_{(v,w)\in E'}\square\, \diamondsuit\, (\,v \wedge \bigcirc \,w\,)).\label{equ:alpha:b}
\end{align}
\end{subequations}
\end{definition}

Intuitively, for all \Odd vertices in $\mathcal{S}$, the strategy $\rho$ compliant with $\mathcal{S}$ takes only their outgoing edges in $\mathcal{S}$ \eqref{equ:alpha:a}, and if a play visits an \Odd node $v$ infinitely often, then $\rho$ takes each of $v$'s outgoing edges in $\mathcal{S}$ infinitely often \eqref{equ:alpha:b}.
For an \Odd strategy template $\mathcal{S}$, if $v \in V'_\Odd$ lies on a cycle in $\mathcal{S}$, then by Def. \ref{def:Oddstrategytemplate}, $\mathcal{S}$ contains all live outgoing edges of $v$. By \eqref{equ:alpha:b} any \Odd strategy $\rho$ compliant with $\mathcal{S}$ satisfies the fairness condition in \eqref{eq:fairness-ltl} for $v$. 
On the other hand, if $v \in V'_\Odd$ does not lie on a cycle in $\mathcal{S}$, then by \eqref{equ:alpha:a} any such $\rho$ sees $v$ at most once. Thus $\rho$ trivially satisfies \eqref{eq:fairness-ltl} for $v$. 
% That is, any \Odd strategy $\rho$ compliant with an \Odd strategy template satisfy the fairness condition \eqref{eq:fairness-ltl}. 
This observation is stated in the following proposition.

%With this, it becomes clear that any \Odd strategy $\rho$ compliant with any strategy template $\mathcal{S}$, obeys the fairness condition.
% \vspace{-1mm}
\begin{proposition}%\footnote{The proof of the proposition can be found in the appendix, Sec.~\ref{app:S-proof}. }\label{prop:S:fair}
 Given the premisses of Def.~\ref{def:compliantstrat} let $\pi$ be a play starting from a node in $V'$ that complies with $\rho$. Then $\pi \models \alpha$ where $\alpha$ is the LTL formula in~\eqref{eq:fairness-ltl}.%\vspace{-2mm}
\end{proposition}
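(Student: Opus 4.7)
The plan is to verify the conjunct of $\alpha$ associated with an arbitrary live edge $(v,w)\in E^\ell$ separately, distinguishing on whether the antecedent $\square\diamondsuit\, v$ holds along $\pi$. If $v$ is visited only finitely often, the implication is vacuous, so I assume $\square\diamondsuit\, v$ holds along $\pi$ and aim to derive $\square\diamondsuit(v\wedge\bigcirc w)$.

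The first step is to establish the invariant that $\pi$ stays inside $V'$, so that the template clauses \eqref{equ:alpha:a} and \eqref{equ:alpha:b} can actually be applied at the vertices that appear along $\pi$. By Def.~\ref{def:Oddstrategytemplate} we have $E'\subseteq V'\times V'$; moreover, for every $u\in\Ve\cap V'$ the clause $E'(u)=E(u)$ combined with $E'\subseteq V'\times V'$ forces $E(u)\subseteq V'$, so any move the opponent can make from an \Even node in $V'$ stays in $V'$. At every \Odd node in $V'$, compliance of $\rho$ with \eqref{equ:alpha:a} means $\rho$ only selects successors in $E'(v)\subseteq V'$. A short induction on prefix length then gives $\pi\in (V')^\omega$, and in particular every edge used by $\pi$ belongs to $E'$.

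Given this invariant, the assumption that $v$ is visited infinitely often, together with the finiteness of $(V',E')$, implies that $v$ lies on a cycle of $(V',E')$. Hence the second clause of Def.~\ref{def:Oddstrategytemplate} yields $E^\ell(v)\subseteq E'(v)$, and in particular $(v,w)\in E'$. Applying the $v$-conjunct of \eqref{equ:alpha:b} to $\pi$ under the hypothesis $\square\diamondsuit\, v$ then gives $\square\diamondsuit(v\wedge\bigcirc w)$, which is exactly the consequent demanded by the $(v,w)$-conjunct of $\alpha$ in~\eqref{eq:fairness-ltl}. Since $(v,w)\in E^\ell$ was arbitrary, $\pi\models\alpha$.

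The only delicate step is the confinement $\pi\in (V')^\omega$: without it, the syntactic guards \enquote{$v\in V'_\Odd$} in $\alpha'$ do not yet give information about the edges taken along $\pi$, and the rest of the argument would not go through. I expect this to be the main obstacle, though it is resolved cleanly by the two observations above about $E'\subseteq V'\times V'$ and $E(u)=E'(u)\subseteq V'$ for \Even nodes $u\in V'$.
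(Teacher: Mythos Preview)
Your argument is correct and essentially matches the paper's reasoning: both reduce the fairness of $\pi$ at a live vertex $v$ to the dichotomy ``$v$ lies on a cycle of $(V',E')$'' versus ``$v$ is seen at most once,'' invoking clause~\eqref{equ:alpha:b} in the first case and clause~\eqref{equ:alpha:a} in the second. Your case split on $\square\diamondsuit v$ is the contrapositive of the paper's split on whether $v$ lies on a cycle, and you make explicit the confinement invariant $\pi\in(V')^\omega$ (via $E'\subseteq V'\times V'$ and $E(u)=E'(u)$ for $u\in\Ve\cap V'$), which the paper uses implicitly when asserting that a non-cyclic $v$ is visited at most once.
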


Next, we define \Even strategy templates. Each \Even strategy template encodes a unique \Even positional strategy, which is known to exist in \Odd-fair parity games~\cite{Klarlund94}, due to the lack of fair edges defined on \Even vertices. %, \Even strategy templates are very simple\footnote{In fact, \Even strategy templates simply encode a positional strategy and are only re-defined to make further arguments more symmetric for both players.}.
\begin{definition}\label{def:Evenstrategytemplate}
    Given an \Odd-fair parity game $\mathcal{G}^\ell = \ltup{\mathcal{G}, E^\ell}$ with \newline $\mathcal{G} = \langle V, \Ve, \Vo, E, \chi\rangle$, an \Even \emph{strategy template} $\mathcal{S}$ over $\mathcal{G}^\ell$ is a subgraph of $\mathcal{G}$ given as $\mathcal{S}:=\tup{V', E'}$ where $V'\subseteq V$ and $E'\subseteq E \cap (V' \times V')$ such that,    \begin{compactitem}\label{item:Evenstrtemprules}
     \item if $v \in \Ve \cap V'$, then $|E'(v)|=1$,
     \item if $v \in \Vo \cap V'$, then  $E'(v) = E(v)$.
    \end{compactitem}
\end{definition}

\vspace*{-0.1cm}

An \Even strategy $\rho$ is compliant with the \Even strategy template $\mathcal{S} = \tup{V', E'}$ if for all $v \in V'_\Even$, $\rho(v) = E'(v)$. In other words, $\rho$ is the positional strategy defined by $\mathcal{S}$.

Let $\rho$ be an \Odd (\Even) strategy, compliant with the \Odd (\Even) strategy template $\mathcal{S}$ and let $\pi$ be a play compliant with $\rho$. Then we call $\pi$ a play \emph{compliant with $\mathcal{S}$}.

\vspace*{-0.1cm}

\begin{definition}
An \Odd (\Even) strategy template $\mathcal{S}=\ltup{V', E'}$ is \emph{winning} in the \Odd-fair parity game $\mathcal{G}^\ell$ if all \Odd (\Even) strategies $\rho$ compliant with $\mathcal{S}$ are winning for player \Odd (\Even) in $\mathcal{G}^\ell$ from $V'$. A winning \Odd (\Even) strategy template $\mathcal{S}$ is called \emph{maximal} if $V'=\Wo$ ($\We$).%\vspace{-2mm}
\end{definition}
%\IS{in the previous defn, we used \Even strategy templates without defining them first. Either define them beforehand or remove them from this definition.}

\vspace*{-0.2cm}
We note that maximal winning \Odd (\Even) strategy templates $\mathcal{S}$ immediately imply that for every vertex $v\in \Wo$ ($\We$) there exists a winning strategy for player \Odd (\Even) from $v$ that is compliant with $\mathcal{S}$.
The existence of maximal winning \Even strategy templates follows from the existence of positional \Even strategies~\cite{Klarlund94}. 
The first main contribution of this paper is a constructive proof showing the existence of maximal winning \Odd strategy templates given in the next section. 
This result is then used in Sec.~\ref{sec:zielonka} to prove the correctness of \Odd-fair Zielonka's algorithm, which is introduced there.

\vspace*{-0.25cm}
\section{Existence of Maximal Winning \Odd Strategy Templates}\label{sec:strat-templates}

This section proves the existence of maximal winning \Odd strategy templates\footnote{In the rest of this section, we will sometimes call \Odd strategy templates simply, \emph{strategy templates}, since these are the only strategy templates we will be dealing with.} in \Odd-fair parity games, formalized in the following theorem.

\begin{theorem}\label{thm:existence-maximaloddstrategytemplates}
    Given an \Odd-fair parity game $\mathcal{G}^\ell$, there exists a maximal winning \Odd strategy template. 
\end{theorem}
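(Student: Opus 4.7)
The plan is to prove \REFthm{thm:existence-maximaloddstrategytemplates} by strong induction on $|V|$, using a Zielonka-style recursive decomposition that foreshadows the algorithm to be developed in \REFsec{sec:zielonka}. The base case of a single-vertex game is immediate: if \Odd wins the vertex, the template is that vertex with its self-loop including all live edges; otherwise the template is empty. For the inductive step, let $p = \max_{v\in V}\chi(v)$ and $U = \chi^{-1}(p)$.

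If $p$ is even, I would compute \Even's classical attractor $A$ of $U$, which is unaffected by fairness since live edges originate only from $\Vo$. Recursing on $\mathcal{G}^\ell \setminus A$ yields winning regions and maximal winning templates in the subgame. If \Odd's subgame winning region is empty, then $\Wo = \emptyset$ and the empty template works. Otherwise, I would compute an \emph{Odd-fair attractor} of that region in the full game, build the \Odd template on the attractor by combining progress edges (toward the subgame region) with the recursively obtained template, and recurse on the complement. The case $p$ odd is symmetric, with \Even and \Odd exchanged and the Odd-fair attractor applied directly to $U$.

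The Odd-fair attractor of a target $T$ is defined as the least fixed point $A$ such that a vertex $v$ joins $A$ when either (i) $v \in \Ve$ and some successor lies in $A \cup T$, (ii) $v \in \Vo \setminus V^\ell$ and some successor lies in $A \cup T$, or (iii) $v \in V^\ell$ and \emph{all} live successors lie in $A \cup T$, together with at least one edge serving as progress. For each \Odd vertex in $A$ I record the edges surviving in the template: at a non-live vertex, a single progress edge of strictly smaller rank; at a live vertex on a cycle, all live edges plus at most one additional progress edge, consistent with \REFdef{def:Oddstrategytemplate}.

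The main obstacle is verifying correctness of the template: every compliant play must respect fairness (immediate from (iii) and clause~\eqref{equ:alpha:b}) and reach $T$ infinitely often. The progress argument uses a rank function on $A$ recording the fixed-point round in which each vertex enters. The delicate case is a live vertex $v$ where fairness forces all live edges to fire infinitely often and these may target vertices of various ranks, not all strictly smaller. The remedy is that, at such $v$, the template retains a dedicated progress edge to a vertex of strictly smaller rank; clause~\eqref{equ:alpha:b} then guarantees that this progress edge is also taken infinitely often, ensuring the rank eventually drops to $0$ so that $T$ is visited. Combining this with the inductive hypothesis on $\mathcal{G}^\ell \setminus A$ and gluing the resulting templates along the attractor yields the desired maximal winning \Odd strategy template on $\Wo$.
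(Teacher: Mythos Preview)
Your approach differs from the paper's: the paper proves \REFthm{thm:existence-maximaloddstrategytemplates} via the $\mu$-calculus fixed point~\eqref{eq:fp-odd} and a ranking argument (\REFsec{sec:strat-templates}), whereas the Zielonka algorithm in \REFsec{sec:zielonka} \emph{uses} the theorem rather than establishing it. A Zielonka-style proof could in principle work, but your execution has a concrete gap.

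The failure is in the $p$-even case, where you take $A$ to be \Even's \emph{classical} attractor of $U$. In an \Odd-fair game the complement $V\setminus A$ is not closed under live edges: an \Odd vertex $v\notin A$ may have a live edge into $A$, which is simply deleted in the subgame $\mathcal{G}^\ell[V\setminus A]$. The inductive hypothesis then hands you a winning template on the subgame's \Odd region, but that region need not be \Odd-winning in the full game. Concretely, take $u\in\Ve$ with $\chi(u)=2$ and a self-loop, and $v\in V^\ell$ with $\chi(v)=1$, a non-live self-loop, and a single live edge $v\to u$. Then $A=\{u\}$, the subgame on $\{v\}$ is trivially \Odd-winning, and your construction places $v$ in the template; but in $\mathcal{G}^\ell$ every fair play from $v$ must eventually take $v\to u$ and is then trapped at priority~$2$, so $\Wo=\emptyset$ and no template containing $v$ can be winning. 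The paper's remedy is to replace the classical \Even attractor by $\SafeReach^f_\Even$ from~\eqref{equ:Xsr2a}, whose $\Lpre^\exists$ term absorbs any such $v$; once you do that the two parities are no longer handled symmetrically, contrary to your claim that ``the case $p$ odd is symmetric''. Separately, your ``\Odd-fair attractor'' is mis-specified: for \Odd to force a visit to $T$, clause~(i) at $v\in\Ve$ must require \emph{all} successors in $A\cup T$ (not some), and clause~(iii) is too strong, since \Odd can reach $T$ once via a non-live edge and never revisit the live vertex, making fairness vacuous --- as the paper notes, \Odd's safe reachability in \Odd-fair games is just the ordinary $\Cpre_\Odd$ fixed point~\eqref{equ:Xsr1}.
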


% Since the existence of maximal winning \Even strategy templates follow from previous work (as mentioned in Sec. 3), the proof of existence of maximal winning \Odd strategy templates
% allow us to use strategy templates as a formalization of strategies of both player in \Odd-fair parity games. This is the first formalization of both players' strategies in \Odd-fair games known to the authors.
% Furthermore, we use this formalization in Sec. 4 in the correctness proof of Zielonka's algorithm for \Odd-fair parity games.
We prove Thm.~\ref{thm:existence-maximaloddstrategytemplates} by giving an algorithm which constructs $\mathcal{S}$ from a ranking function induced by a fixed-point algorithm in the $\mu$-calculus which computes \Wo. Towards this goal, Sec.~\ref{sec:assump:prelim} first introduces necessary preliminaries, Sec.~\ref{sec:templates:solving} gives the fixed-point algorithm to compute \Wo and Sec.~\ref{sec:templates:ranking} formalizes how to extract a strategy template $\mathcal{S}$ from the ranking induced by this fixed-point and proves that $\mathcal{S}$ is indeed maximal and winning. % from this computation and Sec.~\ref{sec:templates:result} finally shows how this ranking can be used to construct maximal winning strategy templates and proves their correctness. 

While this section uses fixed-point algorithms extensively to \emph{construct} a maximal winning \Odd strategy template towards a \emph{proof} of Thm.~\ref{thm:existence-maximaloddstrategytemplates}, we note again that the proof of the new Zielonka's algorithm given in Sec.~\ref{sec:zielonka} only uses the \emph{existence} of templates (i.e., the fact that Thm.~\ref{thm:existence-maximaloddstrategytemplates} holds) and does not utilize their \emph{construction} via the algorithm presented here. %.\todo{But in the proofs we sometimes construct a strategy template, no? Maybe we can say "does not construct one" instead of "does not require the construction"}

\subsection{Preliminaries on Fixed-Point Algorithms}\label{sec:assump:prelim}

This subsection contains the basic notation used in this section. % to define and evaluate the fixed-point algorithms for computing $\Wo$.

\smallskip
\noindent\textbf{Set Transformers.}  Let $ \gamegraph=(V,\Ve, \Vo, E) $ be a game graph, $ S,T\subseteq V $ and $\bb$ be the player index.\footnote{$\bb \in \{\Even,\Odd\} $ where $\bb=\Even$ implies $\neg \bb=\Odd$, and vice versa.} Then we define the following predecessor operators: 
\begin{subequations}\label{equ:Pres}
 \begin{align*}    
    \Pre_\bb^\exists(S) &:= \{ v \in V_\bb \mid E(v) \cap S \neq \emptyset \} && 
        \Lpre^\exists(S) := \{v \in \Vo \mid E^\ell(v) \cap S \neq \emptyset\} \notag \\ 
        \Pre_\bb^\forall(S) &:= \{ v \in V_\bb \mid E(v) \subseteq S  \} &&
    \Lpre^\forall(S) := \{v \in \Vo \mid E^\ell(v) \subseteq S \}\quad  (3)
     \end{align*}
\end{subequations}

The predecessor operators $\Pre_\bb^\exists(S) $ and $\Pre_\bb^\forall(S)$ compute the sets of vertices with \emph{at least one} successor and with \emph{all} successors in $ S $, respectively. The live predecessor operators  $ \Lpre^\exists(S) $ and $\Lpre^\forall(S)$ restrict this analysis to live edges.
We see that 
% \vspace{-2mm}
% \begin{subequations}
 \begin{align}    \label{equ:Preseq}
   \neg \Pre_\bb^{\exists}(\neg S)&= V_{\neg \Lambda} \cup \Pre_{\neg \bb}^{\forall}(S)&&\text{and}&&
   \neg \Lpre^{\exists}(\neg S)= \Ve \cup \Lpre^{\forall}(S)%\vspace{-2mm}
 \end{align}
% \end{subequations}
% \vspace{-1.5mm}
where for a set $X \subseteq V$, $\neg X$ stands for $V \setminus X$. We combine the pre-operators from \eqref{equ:Pres} into the combined set:\footnote{Note that $\Apre(S,T)$ and $\Npre(S,T)$ are meaningful only when $T \subseteq S$ and $S \subseteq T$, respectively. Otherwise they are equivalent to $\Cpre_\Even(T)$ and $\Cpre_\Odd(T)$. We note that these preconditions will always be satisfied in our calculations due to the monotonicity of fixed-point computations.} 
\begin{subequations}\label{equ:combindedPres}     
     \begin{align}
    \Cpre_\bb(S) &:= \Pre_\bb^\exists(S) \cup \Pre_{\nb}^\forall(S)\label{equ:cpre}\\
    \Apre(S, T) &:= \Cpre_\Even(T) \cup (\Lpre^{\exists}(T) \cap \Pre_\Odd^{\forall}(S))\label{equ:apre}\\
    \Npre(S,T) &:= \Cpre_\Odd(T) \cap (\Ve \cup \Lpre^\forall(T) \cup \Pre_\Odd^{\exists}(S))\label{equ:npre}       
    \end{align}
\end{subequations}
% \vspace{-1mm}
The \emph{controllable predecessor operator} $\Cpre_\bb(S)$ computes the set of vertices from which player $\bb$ can force visiting $ S $ in \emph{one} step. It immediately follows that 
%\begin{subequations}\label{equ:combindedPreseqal}   
%\vspace{-0.5cm}
\begin{align}
\neg \Cpre_\Even(\neg S)&:= \Cpre_\Odd(S)\label{equ:cpre_equal}.
\end{align}
% 
% $\neg \\Cpre_\Even(\neg S):= \Cpre_\Odd(S)~\refstepcounter{equation}(\theequation)\label{equ:cpre_equal}$.
% 
%$~\refstepcounter{\neg \Cpre_\Even(\neg S)&:= \Cpre_\Odd(S)}(\theequation)\label{equ:cpre_equal}$
The \emph{almost-sure controllable predecessor} operator $\Apre(S,T)$ computes the set of states that can be controlled by Player \Even to stay in $T$ (via $\Cpre_\Even(T ))$ as well as all Player \Odd states in $V^\ell$ that
(a) will eventually make progress towards $T$ if Player \Odd obeys its fairness-assumptions (via $\Lpre^{\exists}$) and (b) will never leave $S$ in the \enquote{meantime} (via $\Pre_\Odd^{\forall}(S))$). Using \eqref{equ:Preseq} and \eqref{equ:cpre_equal} we have 
  $\Npre(S,T):= \neg \Apre(\neg S, \neg T)$.

\smallskip
\noindent\textbf{Fixed-point Algorithms in the $ \mu $-calculus.} 
The $ \mu $-calculus offers a succinct representation of symbolic algorithms (i.e., algorithms manipulating sets of vertices instead of individual vertices) over a game graph $ \gamegraph $. 
% The formulas of the $ \mu $-calculus, interpreted over a 2-player game graph $ \gamegraph $, are given by the grammar %\vspace{-2mm}
% % \[ 
% $\phi\coloneqq p \mid X \mid \phi\cup\phi \mid \phi\cap\phi \mid \mathit{pre}(\phi) \mid \mu X.\phi \mid \nu X.\phi %\vspace{-1mm}
% $
% % \]
% where $ p $ ranges over subsets of $ V $, $ X $ ranges over a set of formal variables, $ pre $ ranges over the monotone set transformers in  \eqref{equ:Pres} and \eqref{equ:combindedPres}, %$ \{\textsf{pre}, \textsf{cpre}^a, \textsf{attr}^a  \} $, 
% and $ \mu $ and $ \nu $ denote, respectively, the least and the greatest fixed-point of the functional defined as $ X\mapsto \phi(X) $. 
% Since the operations $ \cup, \cap $, and the set transformers are all monotonic, the fixed-points are guaranteed to exist, due to Knaster-Tarski Theorem \cite{KnasterTarski:TraskiKnasterTheorem}.
We omit the (standard) syntax and semantics of $ \mu $-calculus formulas (see \cite{Kozen:muCalculus}) and only discuss their evaluation
%  
% \smallskip
% \noindent\textbf{Evaluating Fixed-point Algorithms.} 
% A $ \mu $-calculus formula evaluates to a set of vertices over $ \gamegraph $, and the set can be computed by induction over the structure of the formula, where the fixed-points are evaluated by iteration. 
% The reader may note that $ \textsf{pre} $ and $ \textsf{cpre} $ can be computed in time polynomial in number of vertices, and since the game graph is finite, $ \textsf{attr} $ is also computable in polynomial time. %: by applying the respective predecessor operators repeatedly until the set stabilizes. %Hence, $ \textsf{tpre} $ is computable in polynomial time as well.
% 
on an example fixed-point algorithm given by a 2-nested $ \mu $-calculus formula of the form $Z=\mu Y.~\nu X.~\phi(X,Y)$, where  $ X,Y \subseteq V$ are subsets of vertices
 and $ \mu $ and $ \nu $ denote, respectively, the least and the greatest fixed-point. $\phi$ is a formula composed from the \emph{monotone set transformers} in  \eqref{equ:Pres} and \eqref{equ:combindedPres}. % of the functional defined as $ X\mapsto \phi(X) $. 
 
 Given this formula, first, both formal variables $X$ and $Y$ are initialized. As $Y$ (resp. $X$) is preceded by $\mu$ (resp. $\nu$) it is initialized with $Y^0:=\emptyset$ (resp. $X^0:=V$). Now we first keep $Y$ at its initial value and iteratively compute $X^k=\phi(X^{k-1},Y^0) $ until $X^{k+1}=X^k$. At this point $X$ saturates, denoted by $X^\infty$. We then \enquote{copy} $X^\infty$, to $Y$, i.e., have $Y^1:=X^\infty$, reinitialize $X^0:=\emptyset$, and re-evaluate $X^k=\phi(X^{k-1},Y^1) $ with the new value of $Y$. This calculation terminates if $Y$ saturates, i.e.,  $Y^\infty=Y^{l+1}=X^l$ for some $l\geq 0$, and outputs $Z=Y^\infty$. In order to remember all intermediate values of $X$ we use $X^{l,k}$ to denote the set computed in the $k$-th iteration over $X$ during the computation of $Y^l$. I.e., $Y^l=X^{l,\infty}$.

\smallskip
\noindent\textbf{Additional Notation.} 
We will use the letters $l,m$ and $n$ exclusively to denote \emph{even} positive integers. For $a \leq b \in \mathbb{N}$, we will use the regular set symbol $[a,b]$ to denote the set of all integers between $a$ and $b$, i.e., $[a,b]:=\{a, a+1 , \ldots , b\}$; and $\ev{a}{b}$ to denote all the \emph{even} integers between $a$ and $b$. %, including $a$ or $b$ as well given that it is even,
E.g. $\ev{2}{7} = \{2, 4,  6\}$.
In addition, given an \Odd-fair parity game $\mathcal{G}^\ell$, we define the sets $C_i := \{ v \in V \mid \chi(v) = i\}$ and $\overline{C_i} := V \setminus C_i$ to ease notation. We say $\mathcal{G}^\ell$ has 
the least even upper bound $l$ if $C_l \cup C_{l-1}\neq\emptyset$ and $C_i=\emptyset$ for all $i>l$.

\vspace{-0.1cm}
\subsection{A Fixed-Point Algorithm for $\mathcal{W}_{\Odd}$}\label{sec:templates:solving}

%\begin{align*}
%    \mathcal{W}_{Even}= \nu {Y_l} \mu {X_{l-1}} \ldots \nu {Y_2} \mu {X_1}. A_i\quad &(C_l \cap \Cpre_\Even(Y_l)) \cup \\
%                                                                        &(\bigcup_{i \in [1, l-1]} C_i \cap \Apre(Y_l, X_{l-1})) \cup \nonumber \\
%                                                                        & \ldots\nonumber \\
%                                                                        &(C_2 \cap \Cpre_\Even(Y_2)) \cup \nonumber \\
%                                                                        & (C_1 \cap \Apre(Y_2, X_1))\nonumber
%\end{align*}

Given an  \Odd-fair parity game $\mathcal{G}^\ell = \ltup{\langle V, \Ve, \Vo, E, \chi \rangle, E^\ell}$ this section presents a fixed-point algorithm in the $\mu$-calculus which computes the winning region $\Wo$ of player $\Odd$ in \Odd-fair parity games. It is obtained by negating the fixed-point formula computing \We \,in~\cite{banerjee2022fast}, formalized in the following proposition and proven in App.~\ref{app:fp-proof}.

%\vspace*{-0.15cm}
%It was recently shown in \cite{banerjee2022fast} that the winning region $\We$ for \Even in an \Odd-fair parity game $\mathcal{G}^\ell$ with least even upperbound color $l\geq 0$ can be computed by a fixed-point formula that preserves the complexity of parity fixed-point formula.
 %As \Odd-fair parity games are determined, we can simply compute the winning region for player $\Odd$ by negating this fixed-point formula, and obtain the formula in Prop.~\ref{prop: W_Odd}.
\begin{proposition}\label{prop: W_Odd}
Given an \Odd-fair parity game $\mathcal{G}^\ell = (\ltup{V, \Ve, \Vo, E, \chi}, E^\ell)$ with least even upper bound $l\geq 0$ it holds that $Z=\Wo$, where
\begin{small}
\begin{align}\label{eq:fp-odd}
    Z &:=\textstyle \mu {Y_l}.~  \nu {X_{l-1}}.~  \ldots \mu{Y_2}.~  \nu{X_1}.~  \bigcap_{j \in \ev{2}{l}} \B_j[Y_j, X_{j-1}], \\ \vspace{0.1cm}
    &\text{ where} \quad
    \B_j[\mathbf{Y}, \mathbf{X}] := \left(\textstyle\bigcup_{i \in [j+1,l]} C_i\right) \cup \left(\overline{C_j} \cap \Npre(\mathbf{Y}, \mathbf{X}) \right) \cup \left(C_j \cap \Cpre_\Odd(\mathbf{Y})\right).\nonumber
\end{align}
\end{small}
% then $Z=\Wo$.
% Further, it takes $\mathcal{O}(n^{l+1})$ symbolic steps to compute $Z$.
\end{proposition}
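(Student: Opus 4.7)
The plan is to derive the $\mu$-calculus characterization of $\Wo$ by taking the complement of the known fixed-point formula for $\We$ from \cite{banerjee2022fast}. Since \Odd-fair parity games are determined (as noted after Eq.~\eqref{eq:fairness-ltl} via Borel determinacy), we have the identity $\Wo = V \setminus \We$, so it suffices to rewrite the complement of the $\We$-formula and verify that it coincides with $Z$ as displayed in Prop.~\ref{prop: W_Odd}.

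First, I would recall the explicit nested fixed-point from \cite{banerjee2022fast}, which for a game with least even upperbound $l$ has the dual quantifier alternation $\nu Y_l.~\mu X_{l-1}.~\ldots~\nu Y_2.~\mu X_1.~\bigcup_{j\in\ev{2}{l}} \A_j[Y_j,X_{j-1}]$, and whose building block $\A_j$ is constructed from the set transformers $\Cpre_\Even$ and $\Apre$ together with the partition of $V$ according to priority classes $C_i$. Taking complements commutes with fixed-points on a complete lattice while swapping $\mu \leftrightarrow \nu$ and replacing each monotone transformer by its dual; concretely, the outer $\nu Y_l$ turns into $\mu Y_l$ (with the new variable standing for $V\setminus Y_l$), then $\mu X_{l-1}$ turns into $\nu X_{l-1}$, and so on down to $\mu X_1$ becoming $\nu X_1$. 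This is precisely the alternation exhibited in~\eqref{eq:fp-odd}.

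Next, I would push the negation through the body. The outer union $\bigcup$ becomes an intersection $\bigcap$ by De Morgan. For each $j\in\ev{2}{l}$, the complement of the block $\A_j[Y_j,X_{j-1}]$ is computed using the identities supplied in Sec.~\ref{sec:assump:prelim}: the higher-priority union $\bigcup_{i>j}C_i$ in $\A_j$ becomes the analogous union inside the complement, the $C_j$-branch involving $\Cpre_\Even$ dualizes via $\neg\Cpre_\Even(\neg S)=\Cpre_\Odd(S)$ (Eq.~\eqref{equ:cpre_equal}), and the $\overline{C_j}$-branch involving $\Apre$ dualizes via $\Npre(S,T)=\neg\Apre(\neg S,\neg T)$. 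Putting the three complemented branches together, and using that the union/intersection over the priority partition $\{C_i\}_{i\leq l}$ respects complementation, yields exactly
\[
\B_j[Y_j,X_{j-1}]=\Bigl(\textstyle\bigcup_{i\in[j+1,l]}C_i\Bigr)\cup\bigl(\overline{C_j}\cap\Npre(Y_j,X_{j-1})\bigr)\cup\bigl(C_j\cap\Cpre_\Odd(Y_j)\bigr),
\]
matching the definition in~\eqref{eq:fp-odd}. Combining this with the quantifier-inversion of the previous paragraph gives $V\setminus\We=Z$, and by determinacy $Z=\Wo$.

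The main obstacle I anticipate is the bookkeeping: one has to verify carefully that the rewriting of $\A_j$ (which in \cite{banerjee2022fast} is tailored to the \Even-winning perspective using \Apre, a transformer asymmetric in its two arguments) matches the correct branches of $\B_j$ after complementation, and that the role of the partition of priorities on \Even-owned vs.~\Odd-owned vertices interacts correctly with the $\Lpre^\exists/\Lpre^\forall$ duality inside $\Apre$ and $\Npre$. In addition, one must justify that the whole chain of complementations is valid by appealing to the Knaster--Tarski theorem and monotonicity of all transformers in~\eqref{equ:Pres} and~\eqref{equ:combindedPres} -- a standard step which is nevertheless necessary to make the $\mu\leftrightarrow\nu$ swap rigorous. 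The full algebraic verification, though routine, is deferred to App.~\ref{app:fp-proof} as indicated in the statement.
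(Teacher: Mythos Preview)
Your approach is exactly the one in the paper: negate the $\We$ fixed-point from \cite{banerjee2022fast} via the $\mu$-calculus negation rule, then push the complement through using $\neg\Cpre_\Even(\neg\cdot)=\Cpre_\Odd(\cdot)$ and $\neg\Apre(\neg\cdot,\neg\cdot)=\Npre(\cdot,\cdot)$.

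One point of caution in your bookkeeping: $\A_j$ from \cite{banerjee2022fast} has only \emph{two} branches, namely $\A_j=(C_j\cap\Cpre_\Even(Y_j))\cup\bigl((\bigcup_{i\in[1,j-1]}C_i)\cap\Apre(Y_j,X_{j-1})\bigr)$, so its complement is a conjunction of two disjunctions. Distributing gives \emph{four} terms, not three; after absorbing $\Cpre_\Odd(Y_j)\cap\bigcup_{i>j}C_i$ into $\bigcup_{i>j}C_i$ you are still left with the extra term $\Cpre_\Odd(Y_j)\cap\Npre(Y_j,X_{j-1})$. The paper's proof disposes of it by a case split on whether $\chi(v)=j$: if so, $v$ already lies in $C_j\cap\Cpre_\Odd(Y_j)$; if not, $v\in\overline{C_j}\cap\Npre(Y_j,X_{j-1})$. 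This is the only step that is not pure De~Morgan, so you should make it explicit rather than asserting that ``three complemented branches'' line up directly with the three terms of $\B_j$.
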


%\vspace*{-0.15cm}

Before utilizing \eqref{eq:fp-odd} we illustrate its computations via an example. %IRMAAAAK IRMAK

% In order to ease the understanding of the subsequent use of \eqref{eq:fp-odd} for the construction of strategy templates, let us consider the following example. 
%\vspace{-1cm}
%\vspace*{-0.15cm}

\begin{example}\label{ex:1}
Consider the \Odd-fair parity game $\mathcal{G}^\ell $ depicted in Fig.~\ref{fig:ex1} (left). Here, the name of the vertices coincide with their priorities, e.g., $C_2=\set{2a, 2b, 2c}$. $\Ve$ and $\Vo$ are indicated by circles and squares, respectively. Edges in $E^\ell$ are shown by dashed lines. 
% Let $\langle \mathcal{G}, E^\ell \rangle $ for $ \mathcal{G} = \langle V, \Ve, \Vo, E, \chi \rangle$ with $V = \{1a, 2a, 2b, 2c, 3a, 3b, 4a\}$, $\Ve=\{1a,2a,3a,3b,2c\}$, $\Vo=\{2b,4a\}$, $E^\ell = \{(4a, 2a), (4a,3a), (2b,2c)\}$, $E = E^\ell \cup \{(2a,2a),(2a,4a),(3a,4a),(1a,4a),(1a,2c),(2c,2b),(2b,3b),(3b,2b) \}$ and $C_1 = \{1a\}$, $C_2=\{2a,2b,2c\}$, $C_3 = \{3a,3b\}$, $C_4=\{4a\}$. The game graph is depicted in figure \ref{gamegraph} where the live edges are shown by dashed lines. 
% 
As the least even upper bound in this example is $l=4$, 

\vspace*{-0.3cm}
\begin{small}
\begin{align}\label{equ:fpexample}
    &Z = \mu Y_4.~ \nu X_3.~ \mu Y_2.~ \nu X_1.~ \Phi^{Y_4, X_3, Y_2, X_1}~\quad  \text{where}\\
    &\Phi^{Y_4, X_3, Y_2, X_1}:= (\overline{C_4} \cap \Npre(Y_4, X_3)) \cup (C_4 \cap \Cpre_\Odd(Y_4)))\nonumber\\
    & \hspace{2.03cm}\cap (\overline{C_2} \cap \Npre(Y_2, X_1)) \cup (C_2 \cap \Cpre_\Odd(Y_2)) \cup C_4 \cup C_3)\nonumber.
\end{align}
\end{small}

\vspace{-0.2cm}

\begin{figure}%\label{fig:ex1}
\begin{center}
  \includegraphics[width=0.7\textwidth]{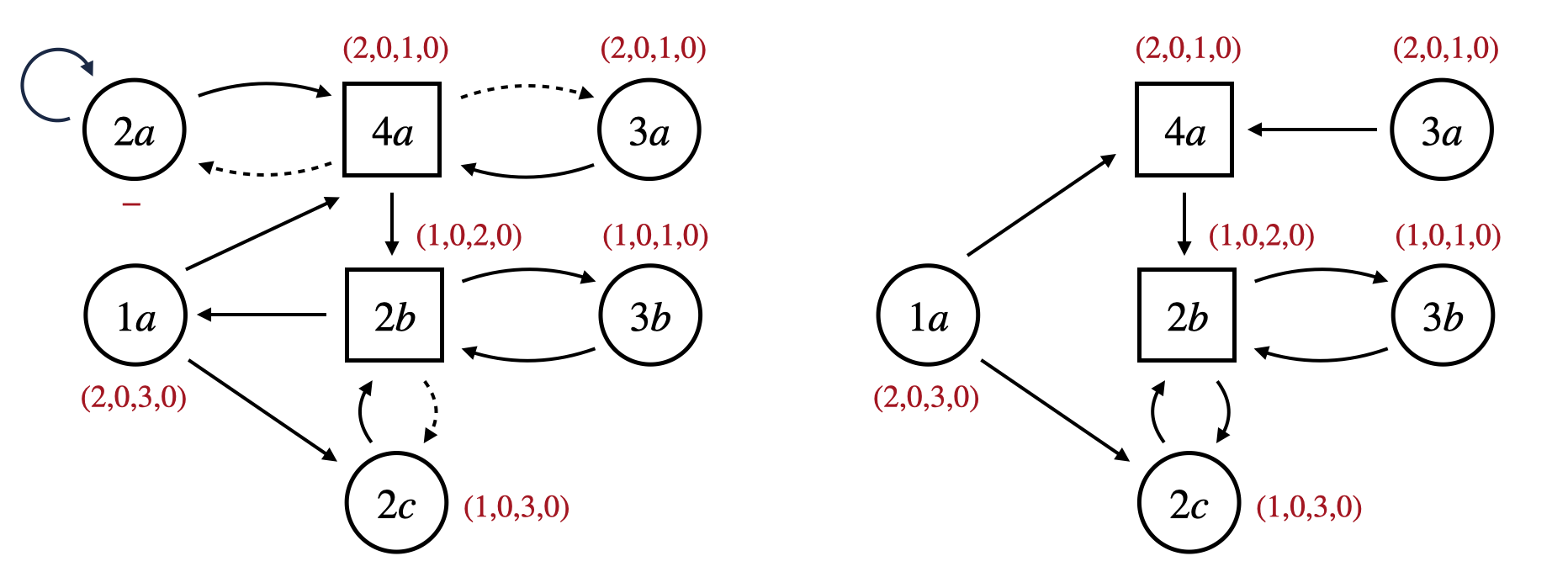}
\end{center}
\vspace{-0.7cm}
       \caption{ \Odd-fair parity game $\mathcal{G}^\ell$ discussed in Ex.~\ref{ex:1}, \ref{ex:2}, and \ref{ex:3} (left) and its corresponding minimum rank based maximal \Odd strategy template $\Sc^{\mathcal{G}^\ell}$ as defined in Def.~\ref{def:S} (right).}\label{fig:ex1} 
       \vspace{-1em}
\end{figure}

\vspace{-0.2cm}

Using the notation defined in Sec.~\ref{sec:assump:prelim}, we initialize  \eqref{equ:fpexample} by $Y_4^{0} = \emptyset$, $X_3^{0, 0} = V$, $Y_2^{0,0,0} = \emptyset$ and $X_1^{0,0,0,0} = V$ and observe from \eqref{equ:combindedPres} that  $\Cpre_\Odd(\emptyset)=\emptyset$ and $\Npre(\emptyset, V)=V$. We obtain 
\begin{small}
\begin{align*}
 X_1^{0,0,0,1} &= \Phi^{Y_4^{0}, X_3^{0, 0}, Y_2^{0,0,0}, X_1^{0,0,0,0} }
 =((\overline{C_4} \cap \Npre(\emptyset, V)) \cup (C_4 \cap \Cpre_\Odd(\emptyset)))\cap ((\overline{C_2} \cap \Npre(\emptyset, V)) \\ 
 & \quad \,\, \cup (C_2 \cap \Cpre_\Odd(\emptyset)) \cup C_4 \cup C_3) =(\overline{C_4} ) \cap (\overline{C_2} \cup C_4 \cup C_3) =C_3 \cup C_1\\
%   \end{align*}
%   \begin{align*}
  X_1^{0,0,0,2} &= \Phi^{Y_4^{0}, X_3^{0, 0}, Y_2^{0,0,0}, X_1^{0,0,0,1} }\\
  &= C_3 \cup (C_1 \cap \Npre(Y_2^{0,0,0}, X_1^{0,0,0,1})) = C_3 \cup (C_1 \cap \Npre(\emptyset, C_3\cup C_1))=C_3
%   \quad
%  &=X_1^{0,0,0,3}
\end{align*}
\end{small}

\vspace{-0.1cm}

where $ \Npre(\emptyset, C_3\cup C_1)=\emptyset$ as $v \in \Npre(\emptyset, C_3 \cup C_1)$ implies $v\in \Cpre_\Odd(C_3 \cup C_1) = \{2b,4a\}$ and $v\in \Ve \cup \Lpre^\forall(C_3 \cup C_1)$. However, $2b, 4a$ are \Odd vertices with live outgoing edges to $2a,2c\in (V \setminus (C_3 \cup C_1))$.
In the next iteration, we again get $X_1^{0,0,0,3} = C_3$ and thus $X_1$ saturates with $C_3$. Therefore, $Y_2^{0,0,1}=C_3$. Now the next round of computations of $\Phi$ results in 
\begin{small}
\begin{align*}
   X_1^{0,0,1,1} &= \Phi^{Y_4^{0}, X_3^{0, 0}, Y_2^{0,0,1}, X_1^{0,0,1,0} } =  C_3 \cup (C_1 \cap \Npre(Y_2^{0,0,1}, X_1^{0,0,1,0})) \cup (C_2 \cap \Cpre_\Odd(Y_2^{0,0,1}))\\
 & =C_3 \cup (C_1 \cap \Npre(C_3, V)) \cup (C_2 \cap \Cpre_\Odd(C_3))=C_3 \cup C_1 \cup \{2b\}\\
    X_1^{0,0,1,2} &= \Phi^{Y_4^{0}, X_3^{0, 0}, Y_2^{0,0,1}, X_1^{0,0,1,1} } = C_3 \cup \{2b\}=X_1^{0,0,1,3} 
% = \Phi^{Y_4^{0}, X_3^{0, 0}, Y_2^{0,0,1}, X_1^{0,0,0,2} } = C_3 \cup \{2b\}\\
\end{align*}
\end{small}
Here $C_1$ and $\{2b\}$ get added in $X_1^{0,0,1,1}$ as $1a \in \Npre(C_3, V)$ trivially and $2b \in \Cpre_\Odd (C_3)$ due to the edge $(2b,3b)$. $C_1$ is removed from $X_1^{0,0,1,2}$ since
$1a$ cannot be forced by \Odd to $C_1 \cup C_3 \cup \{2b\}$ in the next step.
%$1a \not \in ( C_1 \cap \Npre(Y_2^{0,0,1}, X_1^{0,0,1,1} )) = (C_1 \cap \Npre(C_3, C_1 \cup C_3 \cup \{2b\}))$ since $1a \not \in \Cpre_\Odd(C_1\cup C_3 \cup \{2b\})$.
The fixed-point calculation proceeds in a similar fashion, until $Y_4$ reaches its saturation value $V \setminus \{2a\}$. 
The full computation of $Z$ is given in App.~\cite{app:example}. %\vspace{-2mm}
\end{example}

%\begin{align}\label{eq:fp-odd2}
%    \Wo = \mu {Y_l} \nu {X_{l-1}} \ldots \mu{Y_2} \nu{X_1}. \quad & (\overline{C_l} \cap \Npre(Y_l, X_{l-1})) \cup (C_l \cap \Cpre_\Odd(Y_l))\cap \\
%%                                                                                & \ldots \nonumber \\
 %%                                                                               & ((\bigcup_{i \in [j+1, l]} C_i) \cup (\overline{C_j} \cap \Npre(Y_j, X_{j-1})) \cup (C_j \cap \Cpre_\Odd(Y_j))) \cap \nonumber \\
 %                                                                               & \ldots \nonumber \\
 %                                                                               & ((\bigcup_{i \in [3, l]} C_i) \cup (\overline{C_2} \cap \Npre(Y_2, X_{1})) \cup (C_2 \cap \Cpre_\Odd(Y_2))) \nonumber
 %   \end{align}

\subsection{Construction of a Rank-based Strategy Template}\label{sec:templates:ranking}
Given an \Odd-fair parity game $\mathcal{G}^\ell$ with the least even priority upper bound $l\geq 0$, we define a ranking function $\rank{}: \Wo \to \mathbb{N}^{l}$ first introduced in~\cite{SE84} and highly related to \enquote{progress measures}~\cite{KlarlundKozen91,Klarlund94,Klarlund90,Jurdzinski00}. Intuitively, $\rank{v}$ indicates in which iteration $v$ was added to $Z$ in \eqref{eq:fp-odd} and  never got removed from $Z$ again, as illustrated by the following example. %We show that there exists a strategy template $\Sc=(V',\Ve',\Vo', E')$ of $\mathcal{G}^\ell$, constructed according to $r$, for which all compliant player \Odd strategies are winning in $\mathcal{G}^\ell$.

%Let us show the saturation values of a variable $Z$ with $Z^\infty$. That is, $Z^\infty = Z^n = Z^{n+1}$. %Also for all even $j \in \ev{2}{l}$, let us fix the notation

\begin{example}\label{ex:2}
 Consider again the \Odd-fair parity game depicted in Fig.~\ref{fig:ex1}. Here, $\rank{v}$ of each $v \in \Wo = V \setminus \{2a\}$ is shown in red next to the node in the figure. Intuitively, the $4-$tuple is associated with the subscript $Y_4,Y_3,Y_2,Y_1$ of $\Phi$ in \eqref{equ:fpexample}. For instance $\rank{3a}=(2,0,1,0)$ indicates that $3a$ was added to $Z$ 
 during the first iteration of $Y_2$ inside the second iteration of $Y_4$.
 More concretely, $3a \not \in Y_4^{0}, 3a \not \in Y_4^1, 3a \in Y_4^2$. So $2$ is the first iteration of the $Y_4$ variable in which $3a$ got included in the variable. For $Y_2$, $3a \not \in Y_2^{2,0, 0}$ and $3a \in Y_2^{2,0,1}$, and therefore $\rank{3a} =  (2,0,1,0)$.
\end{example}

The intuition of Ex.~\ref{ex:2} is formalized in the following definition.

% Formally, ranks are defined as follows:
\begin{definition}[rank]\label{def:rank}
Given an \Odd-fair parity game $\mathcal{G}^\ell = (\ltup{V, \Ve, \Vo, E, \chi}, E^\ell)$ with least even upper bound $l\geq 0$ and winning region $\Wo\subseteq V$, we define the ranking function $\rank{}: \Wo \to \mathbb{N}^{l}$ for $v\in \Wo$ such that 
 \begin{equation}\label{eq:rank}
  \textstyle\rank{v}=(r_l,0,r_{l-1},0\hdots r_2, 0) \quad\text{if}\quad v\in \bigcap_{j\in\ev{2}{l}}Y_j^{r_l,0,\hdots ,r_j}\setminus Y_j^{r_l,0,\hdots ,r_j-1}.
 \end{equation}
where the valuations of the variables $Y_j$ are obtained from the iterations of the fixed-point calculation in~\eqref{eq:fp-odd} as illustrated in Ex.~\ref{ex:1}.
\end{definition}

% The formal definition of the rank of $v \in \Wo$ is the following,
% \begin{definition}
% \[ \rank{v} = (j_l+1, 0, j_{l-2}+1, 0, \ldots, j_4+1, 0, j_2+1, 0) \]
% where $(j_l, 0, j_{l-2}, 0, \ldots, j_2, 0)= min\{ (a_l, 0, a_{l-2}, 0, \ldots, a_2, 0) \mid \forall (t_l, 0, t_{l-2}, 0, \ldots, t_2, 0 )  \in \mathbb{N}^l \geq (a_l, 0, a_{l-2}, 0, \ldots, a_2, 0), \, v \in X_1^{t_l, 0, \ldots, t_2, 0} \}$.
% \end{definition}
% Note that the even indices of a rank are always $0$. Even though this introduces a redundancy in the representation, keeping the ranks in this forms helps to provide intuition in the proofs. 
% Also note that, the reason why we seem to consider only the $Y_j$'s iteration counts comes from the fact that $X_j$ variables are least fixpoint variables; i.e. $ X_j^ 0 \supseteq X_j^1 \supseteq \ldots$ is a decreasing sequence, and $X_j^0$ is initialized as $V$, 
% so the smallest iteration count that contains $v$ is always $0$ for all $X_j$. On the other hand, $Y_j$s are least fixpoint variables, are initialized as $\emptyset$ and $Y_j^0 \subseteq Y_j^1 \subseteq \ldots $ is an increasing sequence. Thus, there exist a unique iteration count $j_j$ for all $Y_j$ such that $Y_j^{\ldots, j_j}$ contains $v$ for the first time.
% It was not needed to add $+1$ to $j_l, j_{l+1}, \ldots$ but the authors found it easier to work with this definition, since $\rank{v} = (j_l, 0, j_{l-2}, \ldots, j_2, 0)$ implies that for all $Y_m$, $v \in Y_m^{j_l, 0, \ldots, j_m, 0, 0, \ldots, 0}$ and $v \not \in Y_m^{j_l, 0, \ldots, j_m-1, 0, 0, \ldots, 0}$.

% \section{Strategy Templates}
A ranking function obtained from a fixed-point computation as in \eqref{eq:rank} naturally gives rise to a positional winning strategy for the respective player in (normal) $\omega$-regular games that allow for positional strategies. The corresponding positional strategy is obtained by always choosing a \emph{minimum ranked successor} in the winning region.\footnote{See \cite{banerjee2022fast} for a similar construction of the positional winning strategy of \Even in \Odd-fair parity games} 
% 
% acquired in this exact manner from the $\mu-$calculus formula that solves parity games \IS{ref}, gives a positional winning \Even, or \Odd strategy. The only thing \Even, or \Odd needs
% to do to win, is to take its \emph{minimum ranked successor} from each vertex that is hers. A similar strategy would work for the \Even player in \Odd-fair parity games, with the formula that is the negation of ~\eqref{eq:fp-odd}. This is because \Even has positional strategies in \Odd-fair parity games.
We use this insight to obtain a \emph{candidate} maximal strategy template for player \Odd (which we prove to be also \emph{winning} in Prop.~\ref{prop:mainresult}) as follows.
% We simply start with the outlined minimum ranked positional strategy induced by the ranking in \eqref{} and 
% 
% 
% Our contribution in this section is to show that a minimum ranked successor strategy based on the ranking function we get from \eqref{eq:fp-odd} gives us a maximal winning \Odd strategy template if we make sure all \Odd nodes that are seen infinitely often, see both their minimum ranked successors and all their live edges infinitely often. This is what we called the \emph{"almost positionality"} of \Odd strategies in \Odd-fair parity games.
% 
% 
% According to the definition of an \Odd strategy template, all \Odd nodes that lie on a cycle in the template, needs to have all their live outgoing edges in the template. 
% And all \Odd strategies compliant with the template, sees each edge in the template infinitely often if its source node is seen infinitely often. 
% Therefore, all we need to do to get the above-mentioned \Odd strategy template, is to
We start with a subgraph on \Wo defining the minimum ranked successor strategy for \Odd induced by the ranking in \eqref{eq:rank}, and then iteratively add all live edges of nodes that lie on a cycle in the subgraph, to the subgraph. The saturated subgraph then defines a strategy template for \Odd, as formalized next. 

\begin{definition}[Rank-based Strategy Template]\label{def:S}
    Given an \Odd-fair parity game $\mathcal{G}^\ell = (\ltup{V, \Ve, \Vo, E, \chi}, E^\ell)$ with least even upper bound $l\geq 0$ on the priorities of nodes, winning region $\Wo\subseteq V$ and the ranking function $\rank{}: \Wo \to \mathbb{N}^{l}$ from Defn.~\ref{def:rank}, we define a strategy template $\Sc^{\mathcal{G}^\ell}=(\Wo,E')$ where $E'$ is constructed as follows:
   \begin{enumerate}\label{const:S}
   \item[(S1)] for all $v \in \Ve \cap \Wo$, add all $(v, w)\in E$ to $E'$;
   \item[(S2)] for all $v \in \Vo \cap \Wo$, add $(v,w)\in E$ to $E'$ for a $w$ with %$w$ is the successor of $v$ with minimum rank, i.e., 
   $w=argmin_{w'\in E(v)}\rank{w'}$ ($w$ is arbitrarily picked amongst the successors with the mimimum ranking);
   \item[(S3)] for all $v \in V^\ell\cap \Wo$, add all $(v,w)\in E^\ell$ to $E'$ if $v$ lays on a cycle in $\mathcal{S}^{\mathcal{G}^\ell}$;
   \item[(S4)] repeat item (S3) until no new edges are added.
   \end{enumerate}
   We call $\Sc^{\mathcal{G}^\ell}$ the \emph{minimum rank based maximal \Odd strategy template of $\mathcal{G}^\ell$}.
   \end{definition}
   
      \begin{example}\label{ex:3}
    $\Sc^{\mathcal{G}^\ell}$ for $\mathcal{G}^\ell$ from Ex.~\ref{ex:1} is depicted in Fig.~\ref{fig:ex1} (right). %We see that the live edges originating from vertex $4a$ are not contained in the template as $4a$ cannot be seen infinitely often if player \Odd chooses the minimal rank successor (i.e., moves to $2b$) upon the first visit to $4a$. After that, $4a$ cannot be visited again if player \Odd plays a strategy compliant with the strategy template.
   \end{example}
   
   It is clear from the definition that $\Sc^{\mathcal{G}^\ell}$ is an \Odd strategy template in $\mathcal{G}^\ell$. It is also maximal since each $v \in \Wo$ is assigned a rank. %The next subsection proves that it is also \emph{winning}.
   It remains to show that it is winning:

%    \subsection{Soundness of the Candidate Template}\label{sec:soundness_candidate}
%   It is easy to see that Thm.~\ref{thm:existence-maximaloddstrategytemplates} is proven, if we can show that $\Sc^{\mathcal{G}^\ell}$ is indeed winning. This is the main result of this section and formalized next.
%     In the main theorem of this section Thm. \ref{thm:mainresult} we claim that $\Sc^{\mathcal{G}^\ell}$ is winning.
    \begin{proposition}\label{prop:mainresult}
        Every player \Odd strategy compliant with $\Sc^{\mathcal{G}^\ell}$ is winning for \Odd in $\mathcal{G}^\ell$.
    \end{proposition}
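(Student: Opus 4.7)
The proposition amounts to showing that, along any play $\pi$ compliant with $\mathcal{S}^{\mathcal{G}^\ell}$: (a) $\pi$ stays in $\Wo$, (b) $\pi \models \alpha$, and (c) $\max\{\inf(\pi)\}$ is odd. Item (a) is immediate: the template only contains edges within $\Wo$, and $\Wo$ is closed under \Ve-moves by a standard determinacy argument (any \Even edge escaping to $\We$ would hand \Even a win from a supposedly \Odd-winning vertex). Item (b) is the content of the proposition preceding Def.~\ref{def:Evenstrategytemplate}, once we verify that $\mathcal{S}^{\mathcal{G}^\ell}$ satisfies Def.~\ref{def:Oddstrategytemplate}: rule (S2) gives the out-degree-$1$ case for \Vo-vertices not on cycles, while rules (S3) and (S4) saturate all live edges for \Vo-vertices on cycles.

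The heart of the proof is item (c), which I would establish by contradiction using the rank function of Def.~\ref{def:rank} as a lexicographic progress measure. Assume $p := \max\{\inf(\pi)\}$ is even, say $p = j \in \ev{2}{l}$. Pick an index $i^*$ past which only priorities $\leq j$ occur and priority $j$ appears infinitely often. I would then prove two claims about the rank sequence on $\pi_{i^*}, \pi_{i^*+1}, \ldots$. \emph{(Monotonicity)} For every $j' > j$, the component $r_{j'}$ is non-increasing across each edge of $\pi$ taken after $i^*$, so the prefix $(r_l, r_{l-2}, \ldots, r_{j+2})$ eventually stabilizes at some $(\bar{r}_l, \ldots, \bar{r}_{j+2})$. \emph{(Progress at priority $j$)} After this stabilization, every visit to a priority-$j$ vertex $v$ yields a successor $w$ with $r_j(w) < r_j(v)$. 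Together they give an infinite strict descent of the natural number $r_j$, contradicting that priority $j$ is visited infinitely often.

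Monotonicity is proved by case analysis on $\B_{j'}[Y_{j'}, X_{j'-1}]$: at a \Vo-vertex \Odd fires the minimum-rank template edge by rule (S2) and so cannot increase $r_{j'}$; at a \Ve-vertex the $\Npre$-clause forces every successor into the already-converged $Y_{j'}$ at the current stratum. Progress uses that $v \in C_j \cap \Cpre_\Odd(Y_j^{\bar{r}_l, \ldots, \bar{r}_{j+2}, r_j(v)-1})$, so either \Odd's minimum-rank choice (if $v \in \Vo$) or \emph{every} legal successor (if $v \in \Ve$, by the $\Pre_\Even^\forall$ part of $\Cpre_\Odd$) has strictly smaller $r_j$.

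The main obstacle is monotonicity: because \Even's moves are unconstrained by the template apart from staying in $\Wo$, one must rule out the scenario that \Even jumps to a successor whose outer rank $r_{j'}$ is strictly larger and thereby ``resets'' the inner rank $r_j$. The closing observation is that past $i^*$ no priority above $j$ is realized, so the disjunct $\bigcup_{i>j'} C_i$ of $\B_{j'}$ is dormant on the tail of $\pi$, and only the $\Npre$- or $\Cpre_\Odd$-clauses, which do bound successor ranks, can justify membership of each visited vertex in the corresponding $Y_{j'}^{r_{j'}}$ stratum.
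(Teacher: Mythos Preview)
Your argument has a genuine gap at the point where you invoke monotonicity and progress at \Vo-vertices. You write that ``at a \Vo-vertex \Odd fires the minimum-rank template edge by rule (S2),'' but this is only true for \Vo-vertices that do \emph{not} lie on a cycle in $\mathcal{S}^{\mathcal{G}^\ell}$. For a vertex $v\in\Vo\cap V^\ell$ that does lie on a cycle, rules (S3)--(S4) add \emph{all} live outgoing edges of $v$ to the template, and a compliant strategy $\rho$ is required by \eqref{equ:alpha:b} to take each of these infinitely often. There is no reason a live edge $(v,w)\in E^\ell$ should satisfy $\rank{v}\geq_{l+1-\chi(v)}\rank{w}$; indeed, the paper's Lemma establishing this inequality for \Vo-vertices is existential (``$\exists(v,w)\in E$''), not universal over template edges. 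So along the actual play $\pi$, the prefix $(r_l,\ldots,r_{j+2})$ can increase at such steps, and your stabilization argument breaks. The same issue wrecks the progress claim at priority $j$: if $v\in C_j\cap\Vo$ lies on a template cycle, the step actually taken in $\pi$ need not be the $\Cpre_\Odd$-witnessing minimum-rank successor.

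The paper's proof sidesteps this by never attempting a step-by-step progress measure on $\pi$ itself. Instead it isolates the set $M=\{v\mid\rank{v}=(1,0,\ldots,1,0)\}$ and proves two structural facts: (i) \emph{every} cycle in $\mathcal{S}^{\mathcal{G}^\ell}$ through $M$ --- including those that use live edges --- is \Odd-winning, because vertices in $M$ lie in $\Npre(\emptyset,Y_n^{\mathbf{1}})$ and hence all their live successors, and inductively all template successors, remain in a set whose maximum priority is odd; and (ii) the \emph{minimal} play embedded in $\pi$ (obtained by following only min-rank successors, which a compliant strategy must do infinitely often from each infinitely-visited \Vo-vertex) visits $M$ infinitely often, via the rank argument you sketch but now applied only to minimal cycles where it is valid. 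The conclusion then comes from decomposing a tail of $\pi$ into cycles through a fixed $x\in M$. Your lexicographic-descent idea is essentially the engine of step (ii), but it cannot be applied directly to $\pi$; the passage through the embedded minimal play is what absorbs the rank-increasing live edges.
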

    
    \noindent The full proof of Prop.~\ref{prop:mainresult} can be found in App.~\ref{app:counter-strategy-templates} and we only give a proof-sketch here.
    
    First, recall that $\Sc^{\mathcal{G}^\ell}$ is obtained by extending a minimum-rank based strategy as formalized in Def.~\ref{def:S}. Based on this we call a play $v_1 v_2 \ldots$ in $\Sc^{\mathcal{G}^\ell}$ \emph{minimal} if for all $v_i \in V_\Odd$, $v_{i+1}$ is the minimum ranked successor of $v_i$. We further call a cycle minimal, if it is a section of a minimal play.
    Now consider a play $\pi= v_0v_1\ldots$ which is compliant with $\Sc^{\mathcal{G}^\ell}$ and $v_0 \in \Wo$.  Since $\pi$ is compliant with an \Odd strategy template, it obeys the fairness condition. It is left to show that $\pi$ is \Odd winning.
    %This gives by induction that, a minimal cycle that passes through $v$ should be visited infinitely often. 
    %The main idea for proving Prop.~\ref{prop:mainresult} is now to show that any play that embeds an infinite minimal play is winning. 
    %The main idea for proving Prop.~\ref{prop:mainresult} is now to show that these minimal cycles are actually winning. 
    We do this by a chain of three observations,% formally proven in App.~\ref{app:counter-strategy-templates}:
    \begin{enumerate}
     \item If $\Wo \neq \emptyset$, there exists a non empty set $M := \{ v \in \Wo \mid \rank{v} = (1, 0, 1, 0, \ldots, 1, 0)\}$ (see Prop.~\ref{app-prop:Mexists}).
     \item All cycles in $\Sc^{\mathcal{G}^\ell}$ that pass through a vertex in $M$ are \Odd winning (see Prop.~\ref{app-prop:cycle-through-M}).
     \item All infinite minimal plays in $\Sc^{\mathcal{G}^\ell}$ visit $M$ infinitely often (see Prop.~\ref{app-prop:minimal-play-visits-M}).
    \end{enumerate}
    
    While item 1 simply follows from the observation that $(1,0,1,0 ,\ldots, 1, 0)$ is the minimum rank the ranking function assigns to a vertex and the set of nodes with this rank cannot be empty due to the monotonicity of \eqref{eq:fp-odd}, the proofs for item 2 and 3 are rather technical. %They require a careful analysis of the fixed-point algorithm in \eqref{eq:fp-odd} w.r.t.\ ranks over cycles within $\Sc^{\mathcal{G}^\ell}$ and are given in full detail in App.~\ref{app:counter-strategy-templates}.
    
    With the observations in item 1-3 being proven, we are ready to show that $\pi$ is \Odd winning. 
    Observe that $\pi = v_1 v_2 \ldots$ \enquote{embeds} an infinite minimal play, that is, there exists a subsequence $\pi' = v_{j_1} v_{j_2} \ldots$ of $\pi$ where $j_1 < j_2 < \ldots$ that is a minimal play. This is because whenever a $v \in V_\Odd \cap \Wo$ is seen infinitely often in $\pi$, $(v, v_{\min})$ is seen infinitely often as well, where $v_{\min}$ is the minimum-rank successor of $v$ in $\Sc^{\mathcal{G}^\ell}$.
    Since $\pi'$ visits $M$ infinitely often (from item 3), $\pi$ does so too.
    %This is because it embeds $\pi'$, which visists $M$ infinitely often (from item 3). %for any $v$ that is visited infinitely often in $\pi$, it's minimum-rank successor $v_{\min}$ is also visited infinitely often. 
    %This gives us an infinite minimal subsequence of $\pi$ and we know that all infinite minimal plays visit $M$ infinitely often (by item 3). %. Since all minimal plays visit $M$ infinitely often (from item 3) $\pi$ visits $M$ infinitely often. 
    Then due to pigeonhole principle, there exists an $x\in M$ that is visited infinitely often by $\pi$. Thus, a tail of $\pi$ can be seen as consecutive cycles over $x$. Since all cycles that pass through $M$ are \Odd winning (from item 2), we conclude that $\pi$ is \Odd winning.  
    
    Thm.~\ref{thm:existence-maximaloddstrategytemplates} now follows as a corollary of Prop.~\ref{prop:mainresult}.

\section{Zielonka's Algorithm for \Odd-Fair Parity Games}\label{sec:zielonka}
In this section, we construct a Zielonka-like algorithm that solves \Odd-fair parity games. We call this algorithm \emph{\Odd-fair Zielonka's algorithm}. We first recall Zielonka's original algorithm in Sec.~\ref{sec:zielonka:orig} and outline the changes imposed for our new \Odd-fair version in Sec.~\ref{sec:zielonka:fair}. We then discuss the correctness of this new algorithm in Sec.~\ref{sec:zielonka:correct}.

% and preview it for player \Even and \Odd, respectively, in Alg.~\ref{algo:fair-zielonka-odd} and Alg.~\ref{algo:fair-zielonka-even}.
% The differences between Zielonka's algorithm and \Odd-fair Zielonka's algorithm very minor and will be illustrated in Sec.~\ref{}.

% are small in sight. Namely, if one slightly changes the definition of $\SafeReach^f_\Even$ in Alg.~\ref{alg:fair-zielonka} and 
% replace the $\SafeReach^f_\Odd(V, G, \mathcal{G}^\ell)$ with $G$ on the last line of procedure $\SOLVE_\Odd(l, \mathcal{G}^\ell)$, one gets Zielonka's original algorithm.
% 
% The changes in the algorithm are actually small enough not to alter its worst-case computational complexity at all. However, the proof of one recursive call  of $\SOLVE_\Odd$ becomes substantially more complex with the involvement of live edges. 
% Due to space concerns, we only try to convey the main idea of the proof here and give the full proof in the appendix (App.~\ref{app:zielonka-proof}).

%  \vspace{0.5cm}
From now on we take $\mathcal{G}^\ell = \ltup{(V, V_\Even, V_\Odd, E, \chi), E^\ell}$ to be an \Odd-fair parity game. 

\subsection{Zielonka's Original Algorithm}\label{sec:zielonka:orig}
Intuitively, Zielonka's algorithm consists of two nested recursive functions, 
$\SOLVE_{\Even}(n,\mathcal{G})$ and $\SOLVE_{\Odd}(n,\mathcal{G})$ which compute \We and \Wo in a given parity game $\mathcal{G}$ with, respectively, even or odd upper bound priority $n$. Both functions recursively call each other on a sequence of sub-games that is constructed during the run of the algorithm. 

The main difference between Zielonka's original algorithm \cite{Zielonka98} and our new \Odd-fair version in Alg.~\ref{algo:fair-zielonka-bb} %(note that they are symmetrical except for the different \textbf{return} conditions) 
is the computation of the safe reachability set, denoted by $\SafeReach^f_\bb$ within the algorithms. 
%\footnote{Note that Alg.~\ref{algo:fair-zielonka-odd} and Alg.~\ref{algo:fair-zielonka-even} are symmetrical except for the different \textbf{return} conditions. In Alg.~\ref{algo:fair-zielonka-even} the last line can be changed to `\textbf{return} $\SafeReach_\Even^f(V, X, \mathcal{G}^\ell)$' to make the algorithms symmetrical, since it is equal to $X$ in this algorithm. However, we did not want to make this visual change to emphasize that the algorithm for \Even is slightly simpler than the one for \Odd, and to make it more visible that it mimics the original Zielonka's algorithm except for the definition of safe reachability sets.}. 
Intuitively, the safe reachability set of player \bb %in a (normal) game $G$ with safety set $S\subseteq V$ and reach set $R\subseteq V$, 
is the set of vertices from which \bb has a strategy to force the game into the reach set $R\subseteq V$, while staying in the safety set $S\subseteq V$. 
In a (normal) parity game $\mathcal{G}$ (without live edges), this set %the safe reachability set $\Xsr_\bb$ for player \bb 
can be computed via the single-nested fixed-point formula
\begin{equation}\label{equ:Xsr1}
 \Xsr_\bb:=\mu X~.~(S \cap (R \cup \Cpre_\bb(X))).
\end{equation}
If one interpretes Alg.~\ref{algo:fair-zielonka-bb} over (normal) parity games $\mathcal{G}$, defines $\SafeReach^f_\bb$ via \eqref{equ:Xsr1} for the respective player, and replaces $\SafeReach^f_\Odd(\cdot,X,\cdot)$ in the last return statement with $X$ (so, the algorithm returns $X$ for any $\Lambda$), one gets exactly Zielonka's algorithm for parity games.

\vspace*{-0.5cm}

\begin{figure}[htbp]
    \centering
    \begin{minipage}[b]{0.43\textwidth}
        \begin{algorithm}[H]
            \caption{\Odd-Fair Zielonka's Algo.}
            \label{algo:fair-zielonka-bb}
            \begin{algorithmic}\fontsmall
                \Procedure{$\SOLVE_{\bb}$}{$n$, $\mathcal{G}^\ell$} \vspace{0.05cm} 
                \State $X \gets V$ \vspace{0.1cm} 
                \State $Z_{\neg \bb} \gets G$ \label{lineo:Z_bb_initialize} \vspace{0.1cm} 
                \While{$Z_{\neg \bb} \neq \emptyset$}\label{lineo:while_start} \vspace{0.1cm} 
                    \State $N \gets \{ v \mid v \in X \text{ with } \chi(v) = n\}$ \label{lineo:N}
                    \State $Z \gets X \setminus \SafeReach^f_\bb(X, N, \mathcal{G}^\ell)$ \label{lineo:Z}
                    \State $Z_{\neg \bb} \gets \SOLVE_{\neg \bb}(n-1,\mathcal{G}^\ell[Z])$ \label{lineo:Z_bb}
                    \State $X \gets X \setminus \SafeReach^f_{\neg \bb}(X, Z_{\neg \bb}, \mathcal{G}^\ell)$\label{lineo:G_update}
                \EndWhile \label{lineo:while_end} \vspace{0.1cm}
                \If{\bb = \Even} \Return $X$  \vspace{0.1cm} \Else \,
                 \Return $\SafeReach_\Odd^f(V, X, \mathcal{G}^\ell)$ \EndIf \vspace{0.1cm} 
                \EndProcedure
            \end{algorithmic}
        \end{algorithm}
    \end{minipage}
    \hfill
    \begin{minipage}[b]{0.48\textwidth}
      \centering
      \includegraphics[width=1\textwidth]{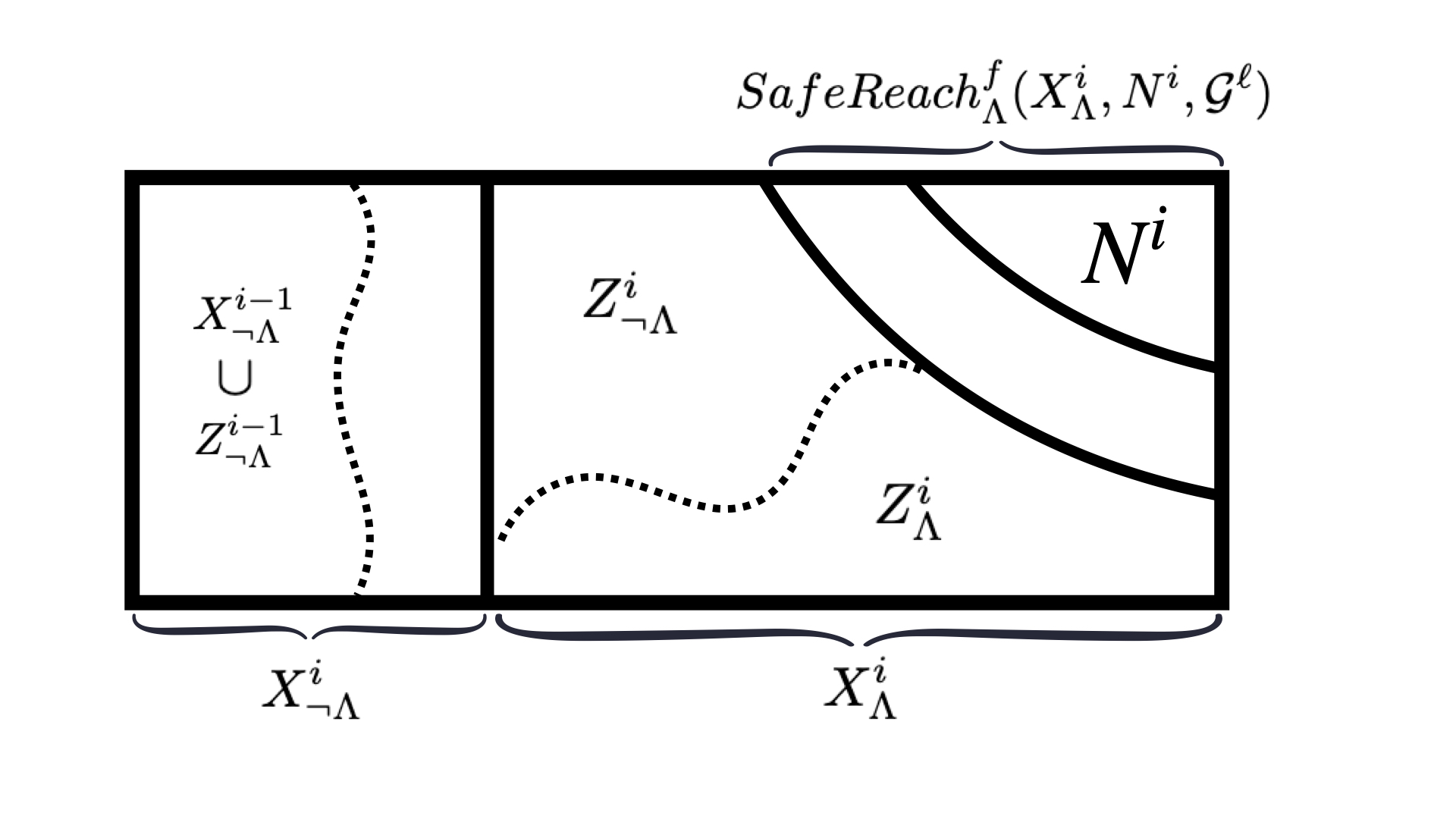}
      \caption{Visualization of the sets in Alg.~\ref{algo:fair-zielonka-bb}}
      \label{fig:kuesters-figure-extended}
    \end{minipage}
  \end{figure}

 %  \vspace*{-0.5cm}

\subsection{The \Odd-fair Zielonka's Algorithm}\label{sec:zielonka:fair}
We are now considering an \Odd-fair parity game $\mathcal{G}^\ell$. % with live edges on \Odd player vertices.
As discussed before, the main difference of the \Odd-fair Zielonka's algorithm from the original one lies in the construction of the safe reachability sets denoted by $\SafeReach^f_\bb$ in Alg.~\ref{algo:fair-zielonka-bb}. We therefore start by discussing its computation for both players.

\smallskip
\noindent\textbf{The \Odd Player.}
The first, somehow surprising, observation is that for player \Odd in \Odd-fair parity game $\mathcal{G}^\ell$, the safe reachability set $\Xsr_\Odd$ can still be computed via \eqref{equ:Xsr1}. This is due to the fact that $R$ only needs to be visited once, and 
\Even vertices do not have live outgoing edges that might prevent player \Odd from forcing a visit to $R$. 

In addition, we can extract a \emph{partial strategy template} for player \Odd from the iterative computation of \eqref{equ:Xsr1} via a similar, but much simpler ranking argument as used in Sec.~\ref{sec:strat-templates}. Here, $\rank{v} = 1$ for $v \in R$ and for the remaining vertices, 
$\rank{v}$ is the minimum integer $j$ for which $v \in X^j:=(S \cap (R \cup \Cpre_\Odd(X^{j-1})))$ where $X^0=\emptyset$. The positional strategy of \bb is then to take the minimum ranked successor from each \Odd node. 

Another way to think about this strategy is in the form of an acyclic subgraph of $\mathcal{G}^\ell$ on $\Xsr_\Odd$, where nodes in $R$ have no outgoing edges,
and for the remaining nodes, \Odd nodes have one outgoing edge and \Even nodes have all their outgoing edges. This is because if $v \in X^j\cap \Ve$, all outgoing edges achieve positive progress towards $R$, i.e. for all $(v, w) \in E$, $w \in X^{j-1}$.
Now it is easy to see that this subgraph almost defines a strategy template, i.e., on $\Xsr_\Odd\setminus R$, \Even nodes have all their outgoing edges in the subgraph, no \Odd node lies on a cycle and all of them have one outgoing edge. However, vertices in $R$ are dead-ends. We therefore call the strategy template induced by  \eqref{equ:Xsr1} \emph{partial} and denote it by $sr$. %\AKS{I think we need to properly formalize this template to use it in the next section}

\smallskip
\noindent\textbf{The \Even Player.}
It follows from the results of Banerjee et. al.~\cite{banerjee2022fast} that the safe reachability set $\Xsr_\Even$ of player \Even in \Odd-fair parity games requires the 2-nested fixed-point formula $\nu Y.\mu X.S \cap (R \cup \Apre(Y,X))$, which (via the operators defined in Sec.~\ref{sec:assump:prelim}) equals
\begin{equation}\label{equ:Xsr2}
 \Xsr_\Even: =~\nu Y~.~\mu X~.~S \cap (R \cup (\Cpre_\Even(X) \cup (\Lpre^{\exists}(X) \cap \Pre_\Odd^{\forall}(Y))))
\end{equation}%=&~\nu Y~.~\mu X~.~S \cap (R \cup \Apre(Y,X))\\
% \vspace{-0.1cm}
%with all predecessor operators defined in Sec.~\ref{sec:assump:prelim}.
% We denote this formula by $\SafeReach^l(S, R, \mathcal{G}^\ell)$. 
Intuitively, the necessity of a 2-nested formula arises from the following lack of information: we do not know in advance, which \Odd nodes need to lie on a cycle on a strategy template required for \Odd to win. If any positional strategy that lets \Odd win (i.e., to avoid $R$ or leave $S$) from a $v\in V^\ell$
requires $v$ to lie on a cycle, then \Odd has to take $v$'s live outgoing edges as well, and thus, it can enter $\Xsr_\Even$ and lose.
The calculation of \eqref{equ:Xsr2} starts with $Y^0 := V$, resulting in $\Pre_\Odd^{\forall}(V)=V$, hence
% 
% \vspace{-0.5cm}
\begin{equation}\label{equ:Xsr2a}
 Y^{1}:=\mu X~.~S\cap (R \cup \Cpre_\Even(X) \cup \Lpre^\exists(X)).
\end{equation}
% 
% \vspace{-0.1cm}
Due to the disappearence of $\Pre^\forall_\Odd(Y)$ in this iteration, intuitively all $v \in V^\ell$ are treated as if they do not have any positional winning \Odd strategy on them, so as if all \Odd strategies have to take all the live edges in the game. 
%This is due to the triviality of $\Pre^\forall_\Odd(Y) = V$ in this iteration. 
%When  $\Pre^\forall_\Odd(Y)$ vanishes from the equation, 
$Y^1$ includes any \Odd vertex that progresses towards $R$ while staying in $S$ with using either all its edges (due to $\Cpre_\Even(X)$) or through one live edge (due to $\Lpre^\exists(X)$). Thus, any vertex that manages to stay in $V \setminus Y^1$ does so due to being won by \Odd even if \Even could force all the live outgoing edges to be taken. 
Note that due to the monotonicity of fixed-point operators, for all $j$, $V \setminus Y^1 \subseteq V \setminus Y^j$.

Throughout the calculation, $ V \setminus Y^j$ keeps track of the nodes that have managed to escape $S$ or avoid $R$ in the previous iteration, so are `already' won by \Odd in the first $j$ iterations. The inner fixed-point calculation in the $(j+1)^{th}$ iteration treats $V \setminus Y^j$ as a subset of \Odd's winning region and it deems any node that can be forced by \Odd to reach $V \setminus Y^j$, lost by \Even.
When the algorithm saturates, $Y^\infty$ contains only those \Odd nodes that cannot be forced by \Odd to reach $V \setminus Y^\infty$, i.e., are won by \Even. Here it is important to observe that, $V \setminus Y^\infty$ contains some \Odd nodes that are not $V \setminus Y^1$. Since they are in $Y^1$, these nodes inductively %\todo{IS:iteratively? any other word?} 
reach \Even winning vertices through live edges. %(or, reach some vertices that have this property\IS{how to put this?}). 
This reveals that, all nodes in $V \setminus Y^j$ but not in $V \setminus Y^1$ win due to a positional \Odd strategy that reaches $V \setminus Y^{j-1}$. 
Iteratively, this reveals that all such nodes have positional \Odd strategies that make them reach $V \setminus Y^1$.%, and this is the key observation we will explot in the next section. 

%The key observation from the previous discussion is the following: if $v$ is won by \Odd even under the assumption that all the live edges in the game have to be taken by \Odd (which is an assumption that increases \Even's power), then $v$, or any node that can be forced by \Odd to reach $v$, is won by \Odd. Also, note that these nodes are exactly the nodes in $V \setminus Y^1$ as explained above.
The above alternative interpretation of the computation of $\Xsr_\Even$ in \eqref{equ:Xsr2} is the key insight that we utilize to define our new \Odd-fair Zielonka's algorithm, as discussed next.
% This is the key observation we will utilize within our new \Odd-fair Zielonka's algorithm as discussed next. %, and show that it carries over from \Odd-fair safe-reachability games to \Odd-fair parity games.
% \todo{IS: maybe we can get rid of the 'and show that it carries over...' part.}
%We utilize this key observation within our new \Odd-fair Zielonka's algorithm as discussed next, and thereby show that it carries over from \Odd-fair safe-reachability games to \Odd-fair parity games.

%As Banerjee et. al.~\cite{banerjee2022fast} have shown that this intuition carries over from safe-reachability games to different $\omega$-regular games under strong transition fairness, including Rabin and Parity games, we can utilize this observation for our \Odd-fair Zielonka's algorithm as follows.

\smallskip
\noindent\textbf{The \Odd-fair Zielonka's Algorithm.} %\todo{IS:I made some changes starting from here - until the end of this subsection}
% It remains to formally define the safe-reachability procedures used in Alg.~\ref{algo:fair-zielonka-odd} and Alg.~\ref{algo:fair-zielonka-even}. 
% 
Following up on the previous discussion, we use the following insight within the construction of the \Odd-fair Zielonka's algorithm. We assume the existence of a core subset $\Wo' \subseteq \Wo$ %\todo{this was in the text, I am not sure if it should be..: (actually corresponding to $ V \setminus Y^j$ in the above argumentation for \Odd-fair safe-reachability games)} 
that player \Odd can force all nodes in $\Wo$ %(i.e., the winning region of \Odd in the \Odd-fair parity game $\mathcal{G}^\ell$) 
to, that is winning for \Odd even under the assumption that \Even can force all the live edges in the game to be taken. %\footnote{We note that this is a similar insight used in the proof of winning strategy templates discussed in Sec.~\ref{sec:strat-templates}.}.  
% 
% We exploit this observation as follows: 
% \emph{Any $v \in \Wo$ can be made by \Odd to reach a core subset $\Wo' \subseteq \Wo$, that is won by \Odd even if all $v \in V^\ell \cap \Wo'$ have to take all their live edges.} 
% In $\SafeReach^l(S, R, \mathcal{G}^\ell)$, the first iteration of the $Y$ variable calculates this $\Wo'$. 
Since Zielonka's algorithm solves parity games by a sequence of nested safe-reachability calculations for alternating players, we apply the following trick:
Instead of computing $\Xsr_\Even$ via \eqref{equ:Xsr2} in each recursive call of Alg.~\ref{algo:fair-zielonka-bb}, we only compute $Y^1$ via \eqref{equ:Xsr2a} and use it as an \emph{overapproximation} of $\Xsr_\Even$ (which is indeed the case due to the monotonicity of \eqref{equ:Xsr2} in $Y$). 
That is, while we take the \Odd safe reachability set $\SafeReach^f_\Odd$ as the original (linear) \Odd safe reachability computation known for these games (given in~\eqref{equ:Xsr1}), we do not take \Even safe reachability formula $\SafeReach^f_\Even$ to be the (quadratic) \Even safe reachability computation known for these games (given in~\eqref{equ:Xsr2}),
but we instead take it as its (linear) subformula given in~\eqref{equ:Xsr2a} and arrive at an overapproximation of the \Even safe reachability region at the end of each $\SafeReach^f_\Even$ calculation. We finalize the recursive call $\SOLVE_\Odd$ by an extra call of $\SafeReach^f_\Odd$ applied to the (thus) underapproximated \Odd winning region in the sub-game, therefore expanding the returned \Odd winning region of the sub-game.  

By this, it turns out that the recursive call of $\SOLVE_\Odd(n, \mathcal{G}^\ell)$ actually computes $\Wo'$ as the set $X$ and we ensure that $\Wo$ is returned by the additional (linear) computation of $\SafeReach^f_\Odd$ over $X$ in the last return statement of Alg.~\ref{algo:fair-zielonka-bb}.
This instantiation of the safe-reachability computations is formalized next.

\begin{definition}\label{def:safereach}
 Given an \Odd-fair parity game $\mathcal{G}^\ell=\ltup{(V, V_\Even, V_\Odd, E, \chi), E^\ell}$ the safe-reachability procedures $\SafeReach^f_\Odd(S, R, \mathcal{G}^\ell)$ and $\SafeReach^f_\Even(S, R, \mathcal{G}^\ell)$ in Alg.~\ref{algo:fair-zielonka-bb} denote the iterative fixed-point computations in \eqref{equ:Xsr1} for \Odd and \eqref{equ:Xsr2a} for \Even.
\end{definition}
% 
% \noindent\textbf{The reason behind the computational advantage of Alg~\ref{algo:fair-zielonka-bb}.}\todo{One of the reviewers suggested that we highlight this part, since we have this part as the 'main achievement' of this technique. I think maybe we can add a small title like this one, or maybe some other title like just "computational advantage", or "reason of efficiency" or "computational succinctness" or something like this. }

\subsection{Complexity of the \Odd-fair Zielonka's Algorithm}\label{sec:zielonka:complexity}
The safe-reachability computations defined in Def.~\ref{def:safereach} have the same complexity as their computations via \eqref{equ:Xsr1} in Zielonka's original algorithm. The only difference is in the number of calculated $\Pre$ operations: while $\SafeReach_\Even$ from Zielonka's original algorithm~\eqref{equ:Xsr1} require the calculation of only one $\Pre$ operator, $\SafeReach_\Even^f$ from~\eqref{equ:Xsr2a} requires the calculation of 2 $\Pre$ operators. The additional final call of $\SafeReach^f_\Odd$ in $\SOLVE_\Odd$ procedure also has linear complexity and requires one $\Pre$ calculation. 
Therefore, not only the worst-case time complexity of Alg.~\ref{algo:fair-zielonka-bb} is equivalent to that of Zielonka's original algorithm (which would be the case even if we used the quadratic safe reachability formula from~\eqref{equ:Xsr2} for \Even since the overall complexity of the algorithm is exponential) but we create almost no additional computational overhead in the algorithm by introducing the fairness assumptions.

We further remark that Alg.~\ref{algo:fair-zielonka-bb} is not a straight-forward interpretation of the nested fixed-point in~\eqref{eq:fp-odd}, and its negation (see (14) in App. A.1 of~\cite{extended}) in the form of Zielonka's algorithm. 
\begin{comment}
%Such a straight forward approach would increase the number of $\Pre$ calls in each recursive step polynomially in the number of priorities, whereas in Alg.~\ref{algo:fair-zielonka-bb} we have at most 2 times (plus 1 extra for $\SOLVE_\Odd$) as many $\Pre$ calls compared to Zielonka's original algorithm.
Such a conversion would require the nested fixed-point formula to be turned into a fixed-point parity game \cite{}, which would basically be equivalent to the gadget-enhanced parity game from \cite{} and the state space of the resulting parity game would be polynomially increased with respect to the state space of the \Odd-fair parity game. More precisely, an efficient fixed-point parity game would have $\frac{3}{2}\cdot l \cdot|V|$-many states where $l$ is the number of priorities and $|V|$ is the number of states of the original \Odd-fair parity game, while keeping the same number of priorities. This approach would not only require a precomputation, but would also require a polynomial increase in the total number of $\Pre$ calculations performed by the resulting Zielonka's algorithm. However, in Alg.~\ref{algo:fair-zielonka-bb} we require at most 2 times (plus 1 extra for $\SOLVE_\Odd$) as many $\Pre$ calls compared to Zielonka's original algorithm at each recursive call.
\end{comment}
%\begin{comment}
%Such a conversion would be non-trivial due to $\Apre$ and $\Npre$ taking 2 different variables from 2 different iterations of the fixed-point calculation. 
Firstly, such a straightforward approach is non-trivial due to $\Apre$ and $\Npre$ operators taking two variables from two different iterations of the fixed-point calculation. 
Furthermore, 
at each \Even safe-reachability call of Alg.~\ref{algo:fair-zielonka-bb}, as mentioned we compute 2 $\Pre$ operators (equation~\ref{equ:Xsr2a}), whereas in each such corresponding step in the fixed-point iteration, we would have to compute 3 $\Pre$ operators due to the expansion of $\Apre$~\eqref{equ:apre} and $\Npre$~\eqref{equ:npre}.
%\end{comment}

%This is slightly surprising given that even the algorithm by Banerjee et. al. in \cite{banerjee2022fast}  %introduces a small overhead from $|V|^n$ (the complexity of \enquote{normal} parity fixed-point) to $|V|^{n+1}$ (the complexity of \Odd-fair parity fixed-point) where $n$ is the number of colors.
%\AKS{we should give the actual complexity here.}
% 
% As in the original Zielonka's algorithm, the Alg.~\ref{algo:fair-zielonka-odd} and Alg.~\ref{algo:fair-zielonka-even}
% 
% We first recall that the determinacy of \Odd-fair parity games follows from the results of Banerjee et. al.~\cite{banerjee2022fast}. 
% 
% We use two recursive calls to calculate \We and \Wo in an \Odd-fair parity game.
% $\SOLVE_\Odd(n,\mathcal{G}^\ell)$ takes an \Odd-fair parity game $\mathcal{G}^\ell$ with an \emph{odd} upper bound $n$ on the priorities of $V$ and returns $\Wo$ of $\mathcal{G}^\ell$.
% $\SOLVE_\Even(n,\mathcal{G}^\ell)$ returns \We for a $\mathcal{G}^\ell$ and an \emph{even} priority upper bound $n$ on $V$. Two calls recursively call each other, from a subgame $\mathcal{G}^\ell[X]$ that has priorities at most $ n - 1$.
% Since in the base cases $n = 0$ and $G = \emptyset$, the calls correctly return $\emptyset$, in order to prove the correctness of the algorithm, it is enough to 
% prove the correctness of each procedure, assuming the correctness of the other. 

It remains to show that \Odd-fair Zielonka's algorithm solves \Odd-fair parity games.

\subsection{Correctness of the \Odd-fair Zielonka's Algorithm}\label{sec:zielonka:correct}
% 
% 
% Now we will try to convey the idea of the correctness proof of Alg.~\ref{alg:fair-zielonka}. 
We first recall that \Odd-fair parity games are determined. %the results of Banerjee et. al.~\cite{banerjee2022fast}. Given an \Odd-fair parity game $\mathcal{G}^\ell=\ltup{(V, V_\Even, V_\Odd, E, \chi), E^\ell}$, we therefore know that \We and \Wo partition $V$. Following the original Zieloka's algorithms proof, it therefore remains to show that $\SOLVE_{\Even}(n,\mathcal{G}^\ell)$ and $\SOLVE_{\Odd}(n,\mathcal{G}^\ell)$ as in Alg.~\ref{algo:fair-zielonka-odd} and Alg.~\ref{algo:fair-zielonka-even} actually compute \We and \Wo, respectively.
Next, we prove the correctness of the algorithm by induction on $n$. Since in the base case $n = 0$ the calls correctly return $\emptyset$, it suffices to prove the correctness of each function, assuming the correctness of the other. This is formalized next. %for $\SOLVE_{EVEN}$ (Alg.~\ref{algo:fair-zielonka-even}), where $\SOLVE_{ODD}$ (Alg.~\ref{algo:fair-zielonka-odd}) follows from a symmetrical argument with odd $n$.
\begin{comment}
\begin{theorem}[Correctness of $\SOLVE_{\bb}$, Alg.~\ref{algo:fair-zielonka-bb}]\label{thm:solvebb}
Let $\mathcal{G}^\ell=\ltup{(V, V_\Even, V_\Odd, E, \chi), E^\ell}$ be an \Odd-fair parity game with \textsf{parity(\bb)}\footnote{\textsf{parity(\Odd)} is odd and  \textsf{parity(\Even)} is even.} upper bound priority $n$. Further, assume that for any \Odd-fair parity game $\mathcal{G'}^\ell$ with  \textsf{parity($\bb$)} upper bound priority $n'<n$ holds that 
 $\mathcal{W}_\bb[\mathcal{G'}^\ell]=\SOLVE_{\bb}(n',\mathcal{G'}^\ell)$ and for any  \Odd-fair parity game $\mathcal{G''}^\ell$ with \textsf{parity($\neg \bb$)} upper bound priority $n''<n$ holds that 
 $\mathcal{W}_{\neg \bb}[\mathcal{G''}^\ell]=\SOLVE_{\neg \bb}(n'',\mathcal{G''}^\ell)$. Then, $\mathcal{W}_\bb[\mathcal{G}^\ell]=\SOLVE_{\bb}(n,\mathcal{G}^\ell)$.
\end{theorem}
\end{comment}

\begin{theorem}[Correctness of $\SOLVE_{\bb}$, Alg.~\ref{algo:fair-zielonka-bb}]\label{thm:solvebb}
Assume that for any \Odd-fair parity game $\mathcal{G}'^\ell$ where $n' < n$ is an odd (resp. even) upper bound on the priorities of the game, $SOLVE_\Odd(n', \mathcal{G}')^\ell$ correctly returns the \Odd winning region (resp. $\SOLVE_\Even(n', \mathcal{G}')^\ell$ correctly returns the \Even winning region) in $\mathcal{G}'^\ell$. Then $SOLVE_\bb(n, \mathcal{G}^\ell)$ correctly returns the winning region of player $\bb$ where $n$ is even if $\bb= \Even$ and odd if $\bb = \Odd$.
\end{theorem}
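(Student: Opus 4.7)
The plan is to proceed by induction on $n$. Letting $X_i, N_i, Z_i, Z_{\neg\bb,i}$ denote the values of the corresponding variables after the $i$-th iteration of the while loop and $X^*$ the terminal value of $X$, I would establish both (a) $\SOLVE_\bb(n, \mathcal{G}^\ell) \subseteq \mathcal{W}_\bb[\mathcal{G}^\ell]$ and (b) $V \setminus \SOLVE_\bb(n, \mathcal{G}^\ell) \subseteq \mathcal{W}_{\neg\bb}[\mathcal{G}^\ell]$; together with the determinacy of Odd-fair parity games, these yield correctness.

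For direction (b), I would maintain the loop invariant $V \setminus X_i \subseteq \mathcal{W}_{\neg\bb}[\mathcal{G}^\ell]$. The base case $X_0 = V$ is trivial. The inductive step uses the induction hypothesis to conclude $Z_{\neg\bb,i} = \mathcal{W}_{\neg\bb}[\mathcal{G}^\ell[Z_i]]$, then invokes a \emph{lifting lemma} (discussed below) to promote this to $Z_{\neg\bb,i} \subseteq \mathcal{W}_{\neg\bb}[\mathcal{G}^\ell]$; the standard ``reach-then-win'' construction then yields $\SafeReach^f_{\neg\bb}(X_i, Z_{\neg\bb,i}, \mathcal{G}^\ell) \subseteq \mathcal{W}_{\neg\bb}[\mathcal{G}^\ell]$, propagating the invariant. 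For $\bb = \Odd$, since $X^* \subseteq \SafeReach^f_\Odd(V, X^*, \mathcal{G}^\ell)$, the complement of the returned set lies in $V \setminus X^* \subseteq \We[\mathcal{G}^\ell]$, yielding (b).

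For direction (a), I would analyze loop termination, where $Z_{\neg\bb}^* = \emptyset$ combined with the induction hypothesis and subgame determinacy gives $\mathcal{W}_\bb[\mathcal{G}^\ell[Z^*]] = Z^*$; the lifting lemma then promotes this to $Z^* \subseteq \mathcal{W}_\bb[\mathcal{G}^\ell]$. Next, I would construct a $\bb$-winning strategy on $X^* = Z^* \cup \SafeReach^f_\bb(X^*, N^*, \mathcal{G}^\ell)$ by combining the lifted subgame strategy on $Z^*$ with a safe-reach strategy that forces the play to $N^*$ (priority $n$, of $\bb$'s parity). Every compliant play either visits $N^*$ infinitely often (so the dominant recurrent priority is $n$) or eventually remains in $Z^*$ (won by the subgame). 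For $\bb = \Odd$, this extends to the returned set $\SafeReach^f_\Odd(V, X^*, \mathcal{G}^\ell)$ by prepending a linear reach-to-$X^*$ phase.

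The main obstacle is the \emph{lifting lemma}, namely that any subgame winning strategy on $\mathcal{G}^\ell[Z_i]$ extends to a full-game winning strategy. Two closure properties of $Z_i$, derived from the fixed-point definitions of the $\SafeReach^f_\bb$ operators, make this work. First, $Z_i$ is \emph{live-closed} for Odd: every $v \in V^\ell \cap Z_i$ has all its live edges inside $Z_i$, since otherwise the $\Lpre^\exists$ summand in equation~\eqref{equ:Xsr2a} would force $v$ into $Y^1 = \SafeReach^f_\Even(X_i, N_i, \mathcal{G}^\ell)$, contradicting $v \in Z_i$; this is combined with the inductive invariant that $X_i$ is itself live-closed. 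Second, the subgame winner's opponent cannot profitably escape $Z_i$ in the full game, as the appropriate $\Cpre$ term in the safe-reach fixpoint forbids edges from $Z_i$ into $\SafeReach^f_\bb(X_i, N_i, \mathcal{G}^\ell)$ that would benefit that opponent. Combined with the outer invariant $V \setminus X_i \subseteq \mathcal{W}_{\neg\bb}[\mathcal{G}^\ell]$, every escape attempt from $Z_i$ either remains in $Z_i$ (covered by the subgame strategy) or lands in $V \setminus X_i$ (covered by $\neg\bb$'s pre-existing winning strategy). When the subgame winner is Odd, Theorem~\ref{thm:existence-maximaloddstrategytemplates} supplies a maximal winning Odd strategy template in the subgame, and the live-closedness of $Z_i$ ensures that its fairness obligations transfer faithfully to the full game.
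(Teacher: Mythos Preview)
Your overall structure matches the paper's: induction on $n$, establishing that the returned set is a $\bb$-paradise and its complement a $\neg\bb$-paradise, with lifting of subgame winners as the central step. Your loop-invariant formulation of direction~(b) and your treatment of the $\bb=\Even$ case in direction~(a) are essentially the paper's argument in different clothing.

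There is, however, a genuine gap in direction~(a) for $\bb=\Odd$. You propose to build an \Odd-winning strategy on $X^* = Z^* \cup \SafeReach^f_\Odd(X^*, N^*, \mathcal{G}^\ell)$ by gluing the subgame template on $Z^*$ to the positional safe-reach strategy $sr$ on the remainder. But this glued object is \emph{not} an \Odd strategy template: a live vertex $v \in V^\ell$ lying in $\SafeReach^f_\Odd(X^*, N^*, \mathcal{G}^\ell)\setminus Z^*$ can sit on a cycle of the combined subgraph (e.g.\ via an \Even escape edge from $Z^*$ into the safe-reach region, through $N^*$, and back), yet $sr$ assigns it only the single rank-decreasing edge and omits its live edges. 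Any \Odd strategy following $sr$ at such $v$ then violates the fairness constraint~\eqref{eq:fairness-ltl} whenever $v$ is seen infinitely often, so the play is won by \Even regardless of priorities. Your appeal to Theorem~\ref{thm:existence-maximaloddstrategytemplates} covers only the subgame on $Z^*$, and your live-closedness argument establishes the needed property for $Z^*$ but says nothing about the safe-reach region.

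The paper closes this gap by iteratively expanding the combined edge set with all live edges of \Odd vertices that lie on cycles (exactly as in steps (S3)--(S4) of Def.~\ref{def:S}), and then proving that every cycle containing a newly added edge must pass through $N^k$. This last step is not routine: added live edges can cascade, and the argument traces a finite chain of such edges, each forcing the next, until one whose source is in $N^k$ or whose target lies in the safe-reach region is reached. Only after this expansion and cycle analysis does your dichotomy ``visits $N^*$ infinitely often or eventually remains in $Z^*$'' become valid for plays compliant with a \emph{fair} \Odd strategy.
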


%While the proof of Alg.~\ref{algo:fair-zielonka-odd} follows essentially the proof by Ralf K{\"u}sters \cite{Kuesters2002} of the original Zielonka's algorithm \cite{zielonkas-alg}, the proof of Alg.~\ref{algo:fair-zielonka-even} formalized by Thm.~\ref{thm:solveodd} becomes substantially more complex. First, our instantiation of $\SafeReach^f_\Even(S, R, \mathcal{G}^\ell)$ via \eqref{equ:Xsr2} only computes an \emph{overapproximation} of the safe reachability set $\Xsr_\Even$, and second, we must use \Odd \emph{winning strategy templates} instead of positional winning strategies, to prove a vertex to be winning. While the complete correctness proofs of both algorithms can be found in App.~\ref{app:zielonka-proof}, we give the intuition of Thm.~\ref{thm:solveodd} here, as this is the main contribution of this section. % In order to do so, we first define some preliminaries in Sec.~\ref{sec:zielonka:correctness:prelim}.
% We follow the notation of Ralf K{\"u}sters proof \cite{Kuesters2002} of the original Zielonka's algorithm \cite{Zielonka98}.% Let us first set up some preliminaries.

% \vspace{0.2cm}

\noindent\textbf{Notation.}
We follow the notation of K{\"u}sters' proof \cite{Kuesters2002} of Zielonka's original algorithm \cite{Zielonka98}.
%For the remainer of this section, take $\mathcal{G}^\ell = \ltup{(V, V_\Even, V_\Odd, E, \chi), E^\ell}$. 
Recall that $\mathcal{G}^\ell$ has no dead-ends. For some $X \subseteq V$, we call $\mathcal{G}^\ell[X] = \ltup{(X, X \cap V_\Even, X \cap V_\Odd, X \times X \subseteq E, \chi \mid_X), X \times X \subseteq E^\ell }$ 
a \emph{subgame} of $\mathcal{G}^\ell$ if it has no dead-ends. Here, $\chi\mid_X$ is the priority function $\chi : V \to \mathbb{N}$ restricted to domain $X$. Let $n$ be an upper bound on the priorities in $V$. If the parity of $n$ is even, set $\bb$ to $\Even$; if it's odd, set $\bb$ to \Odd. 

\vspace{0.2cm}

\noindent\textbf{\bb-trap and \bb-paradise.}
A $\bb$-trap is a subset $T \subseteq V$ for $\bb \in \{\Even, \Odd\}$ such that,
$\forall v \in T \cap V_{\nb},\,\, \exists (v, w)\in E \,\,\text{ with } w \in T$ and $\forall v \in T \cap V_{\bb},\,\, (v, w) \in E \implies w \in T$. 
A $\bb$-paradise in $\mathcal{G}^\ell$ is a subset $T \subseteq V$ which is a $\nb$-trap in $V$ and there exists a winning $\bb$ strategy template $(T, \e)$ in $\mathcal{G}^\ell$. %\AKS{don't we need that the vertices contained in this template are precisely $T$.}

\vspace{0.2cm}

The recursive calls of $\SOLVE_\bb$ and $\SOLVE_{\neg \bb}$ on subgames within Alg.~\ref{algo:fair-zielonka-bb} induce a characteristic partition of the game graph. For the correctness proof, 
we need to remember a series of these subgames that are constructed through previous recursive calls. The partition of these subsets is illustrated in Fig.~\ref{fig:kuesters-figure-extended} and formalized as follows.
% 
% \begin{figure}\centering 
%     \vspace{-0.5cm}\label{fig:kuesters-figure-extended}
%     \includegraphics[scale=0.10]{figures/Kuesters-figure-extended.jpeg}
%      \raisebox{1.8cm}[0pt][0pt]{%
%      \hspace{-10cm}
%      \parbox{10cm}{\caption{}\label{fig:kuesters-figure-extended}}}
%      \vspace{-1cm}
% \end{figure}

\vspace{-0.4cm}

\begin{align}\label{equ:seriesZielonka}
    &X_\bb^i := V \setminus X_\nb^i \quad \quad \quad &&N^i:= \{v \in X^i_\bb \mid \chi(v) = n\}\\
    &Z^i:= X^i_\bb \setminus \SafeReach^f_\bb(X^i_\bb, N^i, \mathcal{G}^\ell) \quad &&X^{i+1}_\nb :=  \SafeReach^f_\nb(V, X_\nb^{i} \cup Z_\nb^{i}, \mathcal{G}^\ell)\nonumber % X_\Even^{i} \cup \SafeReach_\Even^f(X^{i}_\Odd, Z_\Even^{i}, \mathcal{G}^\ell) )%\text{\todo{IS: I know the equality is not completely justified. The first one is cheaper for an algorithm pov, whereas the second one is easier to justify that $X^i_\Odd$ is an \Even-trap.}} 
\end{align}

%\vspace{-0.3cm}

where, in addition $Z_\bb^i$ is the \bb winning region in the subgame $\mathcal{G}^\ell[Z^i]$. Intuitively, the sets constructed in \eqref{equ:seriesZielonka} correspond to the sets with the same name within Alg.~\ref{algo:fair-zielonka-bb}.
    
We collect the following observations on these sets, which are proven in App.~\ref{app:zielonka-proof}. %and mimic the corresponding properties in the proof of the original Zielonka's proof \cite{Kuesters2002}.
\begin{enumerate}\label{it:zlk-observations}
\item  \textbf{(App. - Obs.~\ref{app-obs:traps-subgames})} $X^i_\nb$ is an \bb-trap, $X^i_\bb$, $Z^i$ and $Z_\bb^i$ are \nb-traps in $V$. $Z^i$ is in \nb-trap in $X_\bb$ and $Z_\nb^i, Z_\bb^i$ are \bb- and \nb-traps in $Z^i$, respectively. Therefore, $\mathcal{G}^\ell[Y]$ is a subgame of $\mathcal{G}^\ell$ with $Y$ being any of these sets.\label{it:obs1} %(see Obs.~\ref{obs:traps-subgames} in App.~\ref{}).
 \item \textbf{(App. - Lem.~\ref{app-lem:X_nb-equivalence})} $X_\nb^{i} \cup \SafeReach_\nb^f(X^{i}_\bb, Z_\nb^{i}, \mathcal{G}^\ell) =  \SafeReach_\nb^f(V, X_\nb^{i} \cup Z_\nb^{i}, \mathcal{G}^\ell)$.\label{it:obs2}%(see Lem.~\ref{lem:X_nb-equivalence} in App.~\ref{}).
 \item \textbf{(App.  - Cor.~\ref{app-cor:increasing-decreasing-sequences})} As a consequence of the previous item, $\{X_\nb^{i}\}_{i\in \mathbb{N}}$ is an increasing sequence. Consequently, $\{X_\bb^{i}\}_{i\in \mathbb{N}}$ is a decreasing sequence. As $V$ is finite, this immediately implies that these sequences reach a saturation value for some, and in fact the same, $k$. \label{it:obs3}
 \item \textbf{(App.  - Lem.~\ref{app-lem:safe-reach-Odd-paradise})} If $R \subseteq V$ is an \Odd-paradise in $\mathcal{G}^\ell$, then $\SafeReach^f_\Odd(V, R, \mathcal{G}^\ell)$ is also an \Odd-paradise in $\mathcal{G}^\ell$.\label{it:obs4}
 \item \textbf{(App.  - Lem.~\ref{app-lem:safereacheven-noliveedges})} The set $U \setminus \SafeReach_\bb(U, R, \mathcal{G}^\ell)$ is a $\bb$-trap in $U$. \label{it:obs5}
\end{enumerate}

\vspace{0.1cm}

In contrast to Zielonka's original algorithm, the proof of the procedures $\SOLVE_\Even$ and $\SOLVE_\Odd$ is not identical in \Odd-fair Zielonka's algorithm. This is due to the different safe-reachability set constructions used. Next we sketch the correctness proof of Thm.~\ref{thm:solvebb} for $\bb:=\Odd$, corresponding to the correctness of procedure $\SOLVE_\Odd$. The proof for $\bb:=\Even$ is left to the appendix, as it resembles the proof Zielonka's original algorithm more.
% % \subsubsection{Correctness of $\SOLVE_\Odd$ --  Thm.~\ref{thm:solvebb}}\label{sec:zielonka:correctness:odd}

\begin{proposition}\label{prop:n-odd}
Given the premisses of Thm.~\ref{thm:solvebb} for $\bb = \Odd$, if $Z_\Even^k = \emptyset$ then $\SafeReach^f_\Odd(V, X^k_\Odd, \mathcal{G}^\ell)$ is an \Odd-paradise and $V \setminus \SafeReach^f_\Odd(V, X^k_\Odd, \mathcal{G}^\ell)$ is an \Even-paradise in $\mathcal{G}^\ell$.
\end{proposition}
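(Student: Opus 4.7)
The plan is to leverage the saturation $Z^k_\Even = \emptyset$ together with the inductive hypothesis for $n-1$ and the structural observations following \eqref{equ:seriesZielonka}. Let $P := \SafeReach^f_\Odd(V, X^k_\Odd, \mathcal{G}^\ell)$. I would prove both paradise claims by exhibiting concrete winning strategy templates --- first for \Odd on $P$, then for \Even on $V \setminus P$.

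For Part~(I), that $P$ is an \Even-trap follows by a short case analysis: $X^k_\Odd$ is already an \Even-trap in $V$ by the first listed observation, and any $v \in P \setminus X^k_\Odd$ lies in $\Cpre_\Odd(P)$ by the fixed-point definition of $P$, from which the required trap conditions fall out. I would then build a winning \Odd strategy template $(P, E')$ by splicing together three pieces: the partial $sr$-template of the outer safe-reach on $P \setminus X^k_\Odd$, the partial $sr$-template of $\SafeReach^f_\Odd(X^k_\Odd, N^k, \mathcal{G}^\ell)$ on $R := X^k_\Odd \setminus Z^k$, and the maximal winning \Odd template from Thm.~\ref{thm:existence-maximaloddstrategytemplates} applied to $\mathcal{G}^\ell[Z^k]$ (applicable because $Z_\Even^k = \emptyset$ together with the inductive hypothesis for $n-1$ implies \Odd wins everywhere in $\mathcal{G}^\ell[Z^k]$). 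Every \Even vertex in $P$ retains all its outgoing edges.

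The main obstacle in Part~(I) is verifying that this composite is valid per Def.~\ref{def:Oddstrategytemplate} and winning for \Odd. The crucial lemma I would prove first is: no \Odd vertex in $X^k_\Odd$ has a live edge leaving $X^k_\Odd$. Otherwise such a vertex would lie in $\Lpre^\exists(X^k_\Even)$ and hence in $\SafeReach^f_\Even(V, X^k_\Even, \mathcal{G}^\ell) = X^k_\Even$ (where the equality uses $Z_\Even^k = \emptyset$ and the saturation condition), contradicting membership in $X^k_\Odd$. This lemma guarantees that \Odd can fairly stay in $X^k_\Odd$, and also that any live edge of an $N^k$-vertex participating in a cycle of the composite template already stays inside. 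Template validity is then straightforward: the strictly decreasing safe-reach ranks preclude \Odd vertices of the $sr$-regions from lying on cycles, and the $Z^k$-template already includes live edges where needed (using that $Z^k$ is an \Odd-trap in $X^k_\Odd$, so no live edge exits $Z^k$). For winning, a compliant play from $P$ is forced into $X^k_\Odd$ after finitely many steps, and thereafter it either eventually remains in $Z^k$ and wins via the $\mathcal{G}^\ell[Z^k]$-template, or visits $R$ infinitely often, in which case the inner safe-reach forces $N^k$ to be seen infinitely often, making $\max\inf(\pi) = n$ odd.

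For Part~(II), that $V \setminus P$ is an \Odd-trap follows directly from the last listed observation with $U = V$, $R = X^k_\Odd$, $\bb = \Odd$. Since $V \setminus P \subseteq X^k_\Even$, I would build the winning \Even template on $V \setminus P$ by induction along the telescoping chain $\emptyset = X^0_\Even \subseteq \ldots \subseteq X^k_\Even$ from~\eqref{equ:seriesZielonka}: on each $Z^i_\Even$, splice in the winning \Even template in $\mathcal{G}^\ell[Z^i]$ from the inductive hypothesis; on each layer $X^{i+1}_\Even \setminus (X^i_\Even \cup Z^i_\Even)$, use the positional safe-reach strategy implicit in $\SafeReach^f_\Even$. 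The delicate point here is that $\SafeReach^f_\Even$ overapproximates the true \Even safe-reach via~\eqref{equ:Xsr2a}, so I must argue this overapproximation is safe for \Even: because the formula treats \Odd's live edges as \Even's allies, and any live edge of \Odd must be taken infinitely often by fairness, each compliant play eventually progresses to a strictly lower layer. An induction on the layer index then shows every compliant play reaches and remains in some $Z^i_\Even$, where it is \Even-winning by the subgame template.
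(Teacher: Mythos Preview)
Your Part~(II) argument is essentially the paper's: layer the positional safe-reach strategies $sr^i$ with the subgame templates $z^i$ on $Z^i_\Even$, and use fairness to show any compliant play eventually settles in some $Z^i_\Even$. That part is fine.

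Part~(I), however, has a genuine gap. Your splice omits a choice of successor for vertices in $N^k\cap V_\Odd$ (they are dead-ends in the inner $sr$-template), but more seriously, once you give them one, your claim that ``strictly decreasing safe-reach ranks preclude \Odd vertices of the $sr$-regions from lying on cycles'' fails for the \emph{composite} subgraph. Even vertices in $Z^k$ keep \emph{all} outgoing edges, some of which exit $Z^k$ into $R$; composed with the $sr$-path through $R$ to $N^k$ and the $N^k$-successor back into $X^k_\Odd$, this can form a cycle passing through \Odd vertices of $R\setminus N^k$ (and through \Odd vertices of $Z^k$ that were \emph{not} on any cycle of $z$). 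If such a vertex lies in $V^\ell$, Def.~\ref{def:Oddstrategytemplate} forces its live edges into the template, yet your splice only provides the single $sr$-edge. So the object you build is not, in general, an \Odd strategy template.

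The paper closes this gap exactly here: after forming the base edge set (your splice plus an arbitrary $X^k_\Odd$-successor for each $v\in N^k\cap V_\Odd$), it runs the (S3)--(S4) saturation of Def.~\ref{def:S}, iteratively adding live edges of any \Odd vertex that now lies on a cycle. The resulting $\mathcal{S}$ is a valid template by construction, but the winning argument then requires more than ``visits $R$ infinitely often implies visits $N^k$ infinitely often'': one must show that every \emph{newly added} cycle (i.e., one not already present in $z\cup sr$) contains an edge that, when taken infinitely often, drags the play to $N^k$. The paper does this via a three-way classification of newly added edges---(i) \Even-escapes from $Z^k$ to $R$, (ii) the chosen $N^k$-successor edges, (iii) freshly added live edges---and a finiteness argument over chains of type-(iii) edges. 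Your proof sketch skips this saturation step and the accompanying cycle analysis, and without them the template-validity and the ``visits $N^k$ infinitely often'' conclusion do not follow.
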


Within Prop.~\ref{prop:n-odd}, the fact that $Z_\Even^k = \emptyset$ refers to the termination of the recursive call in Alg.~\ref{algo:fair-zielonka-bb} which results in the saturation of the sequence $\{X_\Odd^i\}_{i\in \mathbb{N}}$ with $X_\Odd^k$. This implies that $\SOLVE_\Odd$ returns 
$ T:=\SafeReach^f_\Odd(V, X^k_\Odd, \mathcal{G}^\ell) $, which is an \Odd-paradise and $V \setminus T$ an \Even-paradise. With this, Thm.~\ref{thm:solvebb} follows from Prop.~\ref{prop:n-odd} for $\bb=\Odd$. % Alg.~\ref{algo:fair-zielonka-bb} for $\bb = \Odd$.
We now give a proof sketch of Prop.~\ref{prop:n-odd}.

We first recall from observation~\ref{it:obs1} that $T$ and $V\setminus T$ are \Even- and \Odd-traps in $V$, respectively. In order to prove Prop.~\ref{prop:n-odd}, it remains to show that there exists an \Odd (resp. \Even) strategy template which is winning in $\mathcal{G}^\ell$ and maximal on $T$ (resp. $V\setminus T$). We next give the construction of these templates and a high-level intuition on why they are actually \emph{winning}. 

\smallskip
\noindent\textbf{Winning \Odd Strategy Templates.} 
As $X^k_\Odd$ is known to be an \Even-trap, it can be proven to be an \Odd-paradise by constructing a winning maximal strategy template on it. It then follows from observation~\ref{it:obs4} that $T$ is also an \Odd-paradise.

Towards a construction of a maximal winning \Odd strategy template on $X_\Odd$, we first observe that $X^k_\Odd=Z_\Odd^k\cup \SafeReach^f_\Odd(X^k_\Odd, N^k, \mathcal{G}^\ell)$ (as $Z_\Even^k=\emptyset$). % Now we first consider $Z^k = Z_\Odd^k$ (as $Z_\Even^k=\emptyset$)\todo{IS: why do we 'consider' this, isn't this given for $k$?}
 Then there exists a maximal winning \Odd strategy template $z$ on $Z^k = Z_\Odd^k$ in game $\mathcal{G}^\ell[Z^k]$. % and the definition of $Z_\Odd^k$. 
 Any play $\pi$ compliant with $z$ that starts and stays in $Z^k$ is clearly \Odd winning.
However, $z$ is not necessarily an \Odd strategy template in $\mathcal{G}^\ell$ since there are possibly some $(v,w) \in E$ with $v \in Z^k \cap V_\Even$  and $w \not \in Z^k$.
For all such edges, $w \in \SafeReach^f_\Odd(X^k_\Odd, N^k, \mathcal{G}^\ell)$ since $X^k_\Odd$ is an \Even-trap in $V$.
For the state set $\Xsr_\Odd:=\SafeReach^f_\Odd(X^k_\Odd, N^k, \mathcal{G}^\ell)$, recall from Sec.~\ref{sec:zielonka:fair} that there exists partial strategy template $sr$ defined on $\Xsr_\Odd$ with dead ends in $N^k$.

Using the templates $z$ and $sr$, we can construct a maximal candidate \Odd strategy template on $X^k_\Odd$. Following the intuition behind the construction of $\mathcal{S}^{\mathcal{G}^\ell}$ in Def.~\ref{def:S}, we first define a base subgraph $(X^k_\Odd,\e)$ with $\e\subseteq E$ s.t.\
 $(v,w) \in E $ is in $\e$ if either (i) $(v,w) \in z \cup sr$, (ii) $v \in V_\Even \cap X^k_\Odd$, or (iii) $v \in N^k \cap V_\Odd$ and $w = v_r$
where $v_r$ is a random fixed successor of $v$, that is in $X^k_\Odd$.
Such a successor is guaranteed to exist since $X^k_\Odd$ is an \Even-trap.
We now extend the subgraph $(X^k_\Odd,\e)$ to an \Odd strategy template by adding all live edges originating in vertices  $X^k_\Odd\cap V^\ell$ that lie on a cycle in $\e$, similar to Def.~\ref{const:S} (S3)-(S4). This results in a subgraph $\Sc=(X^k_\Odd,\overline{\e})$ %where $\overline{e}$ is defined to be the saturation value of the sequence $\overline{e}^j = \overline{e}^{j-1} \cup \{(v, w) \in V^\ell \mid v \text{ lies on a cycle in } (X_\Odd^k, \overline{e}^{j-1})\}$ where $\overline{e}^0 = e$.
that is a maximal \Odd strategy template. %\AKS{Why don't we need $\Sc$ to be maximal on $T$?}
The underlying idea behind $\mathcal{S}$ being winning %(formally proven in App.~\ref{app:zielonka-proof}) 
is the following: Any play that starts in $X_\Odd^k$ either stays in $Z^k$ after some point and is won by $\mathcal{S}$ collapsing to $z$, or sees a newly added cycle (one that is not in $z \cup sr$) infinitely often. All such cycles contain a newly added edge. An analysis of newly added edges reveal that, 
all of them -- when seen infinitely often -- eventually drag a play towards $N^i$. Thus, every play that sees a new cycle infinitely often sees $n$ infinitely often, and thus won by \Odd.

\smallskip

\noindent\textbf{Winning \Even Strategy Templates.} 
Here we show that $V \setminus T$ is an \Even-paradise in $\mathcal{G}^\ell$. 
We first define $\Xsr^i_\Even:=\SafeReach^f_\Even(X_\Odd^i, Z_\Even^i, \mathcal{G}^\ell)$ and denote by $sr^i$ the partial \Even strategy template defined on $\Xsr^i_\Even$. We further denote the winning \Even strategy on $Z_\Even^i$ in game $\mathcal{G}^\ell[Z^i]$ by $z^i$. 
We can now construct the \Even strategy template $\mathcal{S} = (V \setminus T, \e)$ where $\e$ is the combination of edges in $sr^i \cup z^i$ with $\{(v,w) \in E \mid v \in V_\Odd \cap (V \setminus T)\}$.
Since $V\setminus T$ is an \Odd-trap by observation~\ref{it:obs5}, the edge set $\e$ stays within $V \setminus T$, i.e. $\e \subseteq V\setminus T \times V \setminus T$. Then clearly, $\mathcal{S}$ is an \Even strategy template.
To see $\mathcal{S}$ is winning we first observe that each $v \in V \setminus T$ there exists a unique $i<k$ such that $v \in \Xsr^i_\Even$. Let $\pi = v_1 v_2 \ldots$ be a play compliant with $\mathcal{S}$ and let $s = \Xsr_1 \Xsr_2 \ldots$ be the sequence such that $v_i \in \Xsr$.
(1) If $v_j \in Z_\Even^i$, $v_{j+1} \in Z_\Even^i \cup \{\Xsr_\Even^r \mid r < i\}$. This follows from $Z_\Even^i$ being an \Odd-trap in $X_\Odd^i$.
(2) If $\pi$ visits $v \in \Xsr^i$ infinitely often, $\pi$ visits $Z_\Even^i$ infinitely often: This is because $\pi$ visits the $(v,w)$ in $\mathcal{S}$ that makes positive progress towards $Z_\Even^i$ infinitely often as well. 
Let $i$ be the minimum index such that $\Xsr_\Even^i$ is seen infinitely often in $s$. By (1), $\pi$ visits $Z_\Even^i$ infinitely often and by (1) and the minimality of $i$, it should eventually stay in $Z_\Even^i$.
Thus $\mathcal{S}$ eventually collapses to $z_\Even^i$ on $\pi$ and the play is won by \Even.

\vspace{-0.2cm}
%\smallskip
%\noindent \textbf{The Algorithm.} Observe that $\SOLVE_\Odd(n, \mathcal{G}^\ell)$ (Alg.~\ref{algo:fair-zielonka-odd}) calculates the sets as given in the construction(Fig.~\ref{fig:kuesters-figure-extended}) where $X$ holds the value of $X_\Odd^i$ at the end of the $i^{th}$ iteration of it's \emph{while} loop. $\{X_\Odd^i\}_{i \in \mathbb{N}}$ is a decreasing sequence which saturates at some $X_\Odd^k$ where $Z_\Even^k = \emptyset$.
%$\SOLVE_\Odd(n, \mathcal{G}^\ell)$ returns  $\SafeReach(V, X_\Odd^k, \mathcal{G}^\ell)$, which is shown to be equal to \Wo by Prop.~\ref{prop:n-odd}. 

%Similarly, $X$ in $\SOLVE_\Even(n, \mathcal{G}^\ell)$ (Alg.~\ref{algo:fair-zielonka-even}) holds the value of $X_\Even^i$ after the $i^{th}$ iteration. The constructive proof of Thm.~\ref{thm:solveeven} (App.~\ref{app:zielonka-proof}) shows that the saturation value $X^{k'}_\Even$ (where $Z^{k'}_\Odd = \emptyset$) is equal to \We, and this is exacly the value $\SOLVE_\Even(n, \mathcal{G}^\ell)$ returns.
%

\subsection{Experimental Results}
% \vspace{-1mm}

We conducted an experimental study to empirically validate the claim that our new \Odd-fair Zielonka's algorithm retains its efficiency in practice (see App.~\ref{app:experiments} for details). 

We generated \Odd-fair parity instances manipulating $286$ benchmark instances of PGAME$\_$ Synth$\_$2021 dataset of the SYNTCOMP benchmark suite~\cite{syntcomp} and $51$ instances of PGSolver dataset of Keiren's benchmark suite~\cite{keirens} by adding live edges to the given (normal) parity games.
%We generated $286$ benchmark instances from the PGAME$\_$Synth$\_$2021 dataset of the SYNTCOMP benchmark suite~\cite{syntcomp} and $51$ benchmark instances from the PGSolver dataset of Keiren's benchmark suite~\cite{keirens} by adding live edges to the given (normal) parity games. 
We empirically compared the (non-optimized\footnote{While optimized version of \texttt{N-ZL} and \texttt{N-FP} are available in \texttt{oink} \cite{oink} our goal is a conceptual comparison, which is better achieved by similar (non-optimized) implementations for all algorithms.}) C++-based implementations of 
% For this, we implemented the following algorithms  in C++:
\begin{inparaenum}[(i)]
 \item the \Odd-fair Zielonka's algorithm (\texttt{OF-ZL}) from Alg.~\ref{algo:fair-zielonka-bb},
 \item the \enquote{normal} Zielonka's algorithm (\texttt{N-ZL}) from~\cite{Zielonka98}, 
 \item the fixed-point algorithm for \Odd-fair parity games (\texttt{OF-FP}) from~\cite{banerjee2022fast} implementing~\eqref{eq:fp-odd}, and
 \item the \enquote{normal} fixed-point algorithm (\texttt{N-FP}) for \enquote{normal} parity games from~\cite{EJ91}. % \todo{IS: a reviewer correctly asked us to add \texttt{N-FP} to the implemented algorithms list as well, and I did. Here for normal parity fixed-point algorithm should I cite an old paper that gives the parity fixed-point, or is it enough to cite Banerjee et. al. agan (as I did here) since the fixed-point formulation for regular parity is given there as well.}
\end{inparaenum}
On the \emph{SYNTCOMP benchmarks}, the time-out rates are: $82$ instances for \texttt{OF-FP}, $58$ for \texttt{OF-ZL}; $73$ for \texttt{N-FP} and $47$ for \texttt{N-ZL}. On the 204 instances that neither of the algorithms time out the average computation times are: $122.7$ seconds for \texttt{OF-FP}, $4.6$ seconds for \texttt{OF-ZL}, $45.2$ seconds for \texttt{N-FP} and $3.6$ seconds for \texttt{N-ZL}.   
For all instances that did not time out for all four algorithms, Fig.~\ref{fig:logscale-main} shows scatter plots comparing the computation times of \texttt{OF-ZL} with \texttt{OF-FP} (left) and  \texttt{OF-ZL} with \texttt{N-ZL} (right) using logarithmic scaling. The diagonal shows instances with similar computation times. Points above the diagonal show superior performance of \texttt{OF-ZL}.
For the \emph{PGSolver dataset} \texttt{OF-FP} timed out on all generated instances, whereas \texttt{OF-ZL} took $24.9$ seconds on average to terminate.

\begin{figure}%LOGSCALE
% \centering
% \begin{subfigure}{.5\textwidth}
%      \centering
%      \includegraphics[scale=0.28]{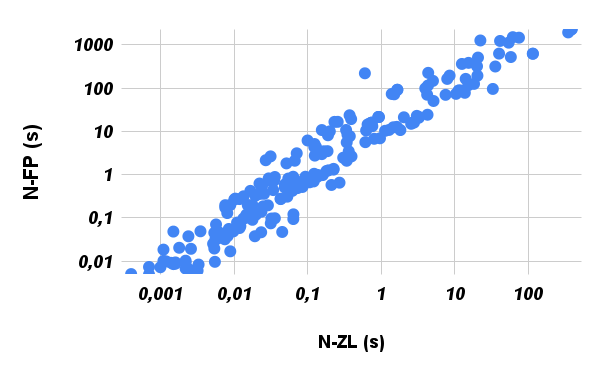}
%      %\caption{A subfigure}
%      \label{fig:logscale_regular}
% \end{subfigure}%
% \begin{subfigure}{.5\textwidth}
%      \centering
\begin{minipage}{0.5\textwidth}
\begin{center}
  \includegraphics[width=0.8\textwidth]{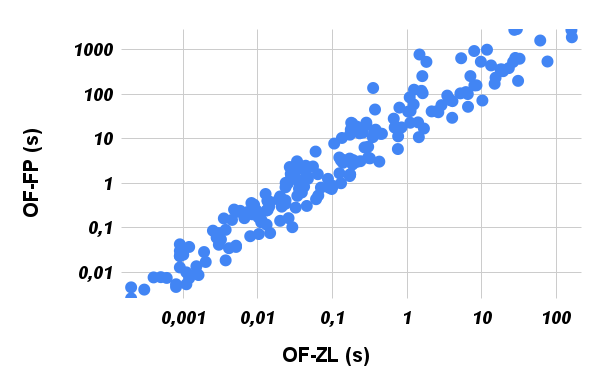}
\end{center}
\end{minipage}
\begin{minipage}{0.5\textwidth}
 \begin{center}
  \includegraphics[width=0.8\textwidth]{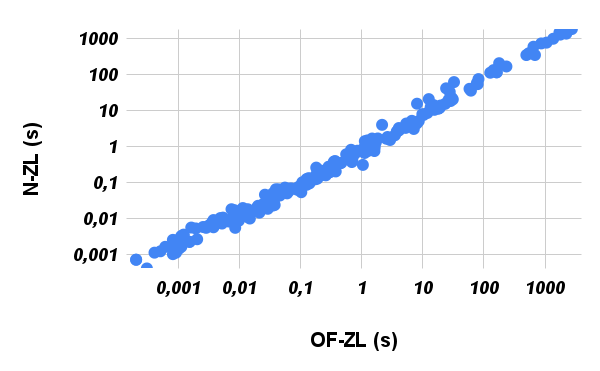}
 \end{center}
\end{minipage}
\vspace{-0.4cm}
\caption{Scatter plot for a comparative evaluation of \texttt{OF-ZL} vs. \texttt{OF-FP} (left), and  \texttt{OF-ZL} vs. \texttt{N-ZL} (right). Both plots show computation times in seconds using logarithmic scaling.}
\label{fig:logscale-main}
\end{figure}

We clearly see that  \texttt{OF-ZL} performs up to one order of magnitude better than \texttt{OF-FP} in many instances while \texttt{OF-ZL} and \texttt{N-ZL} perform very similar on the given benchmark instances. In addition, we observe that \texttt{OF-FP} starts timing out as soon as the examples became more complex. %, being especially sensitive to the increase in the number of priorities. 
%  On the other hand, \texttt{OF-ZL} preserves its performance considerably in the face of the increase in the same parameters. 
These outcomes match the known comparison results between the naive fixed-point calculation versus Zielonka's algorithm, on normal parity games.

\bibliography{references-wo-doi-url}
%
%\begin{thebibliography}{8}
%\bibitem{ref_article1}
%Author, F.: Article title. Journal \textbf{2}(5), 99--110 (2016)

%\bibitem{ref_lncs1}
%Author, F., Author, S.: Title of a proceedings paper. In: Editor,
%F., Editor, S. (eds.) CONFERENCE 2016, LNCS, vol. 9999, pp. 1--13.
%Springer, Heidelberg (2016). \doi{10.10007/1234567890}

%\end{thebibliography}
\newpage 

\appendix
\section{Appendix}
\subsection{Proof of the Fixed-point Formula for \Wo}\label{app:fp-proof}
It was recently shown in \cite{banerjee2022fast} that the winning region $\We$ for \Even in an \Odd-fair parity game $\mathcal{G}^\ell$ with least even upper bound priority $l\geq 0$ can be computed by the fixed-point algorithm 
\begin{align}\label{eq:fp-even}
 \We = &\nu {Y_l}.~ \mu X_{l-1}.~ \ldots \nu{Y_2}.~ \mu{X_1}.~ \bigcup_{j \in \ev{2}{l}} \A_j \quad \\
 & \text{ where, } \quad \A_j := \left(C_j \cap Cpre_\Even(Y_j)\right) \cup \left(\left(\textstyle\bigcup_{i \in [1,j-1]}C_i\right) \cap \Apre(Y_j, X_{j-1})\right)\nonumber
\end{align}

As  \Odd-fair parity games are determined, we can simply compute the winning region for player $\Odd$ by negating \eqref{eq:fp-even}, which leads to Prop.~\ref{prop: W_Odd}. For the sake of self-containment, we restate Prop.~\ref{prop: W_Odd} here.

% restating prop 2
\begingroup
\def\theproposition{\ref{prop: W_Odd}}
\begin{proposition}
    Given an \Odd-fair parity game $\mathcal{G}^\ell = (\ltup{V, \Ve, \Vo, E, \chi}, E^\ell)$ with least even upper bound $l\geq 0$ and
\begin{align}\label{eq:fp-odd-app}
    Z := &\mu {Y_l}.~  \nu {X_{l-1}}.~  \ldots \mu{Y_2}.~  \nu{X_1}.~  \bigcap_{j \in \ev{2}{l}} \B_j, \\
    &\text{ where} \quad
    \B_j := \left(\textstyle\bigcup_{i \in [j+1,l]} C_i\right) \cup \left(\overline{C_j} \cap \Npre(Y_j, X_{j-1}) \right) \cup \left(C_j \cap \Cpre_\Odd(Y_j)\right)\nonumber
\end{align}
then $\Phi=\Wo$.
Further, it takes $\mathcal{O}(n^{l+1})$ symbolic steps to compute $\Wo$ via \eqref{eq:fp-odd}.
\end{proposition}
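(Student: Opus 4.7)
The plan is to deduce the proposition from the fixed-point characterization of $\We$ given in~\eqref{eq:fp-even} together with the determinacy of \Odd-fair parity games. Since \Odd-fair parity games are determined, $\Wo = V\setminus\We$, so it suffices to show that $Z$ as defined in~\eqref{eq:fp-odd-app} equals $V\setminus \We$. This reduces the entire argument to a purely syntactic dualization of the nested fixed-point in~\eqref{eq:fp-even}.

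Applying the standard $\mu$-calculus dualities $\neg(\mu Y.\,f(Y)) = \nu Y.\,\neg f(\neg Y)$ and $\neg(\nu X.\,f(X)) = \mu X.\,\neg f(\neg X)$ level by level to~\eqref{eq:fp-even} swaps the $\mu/\nu$ alternations outside-in and replaces each variable $Y_j, X_{j-1}$ by its complement inside the innermost operand. Combined with De~Morgan's law $\neg\bigcup_j(\cdot)=\bigcap_j\neg(\cdot)$, the proof reduces to verifying the pointwise identity
\begin{equation*}
\neg\,\A_j[\neg Y_j,\neg X_{j-1}] \;=\; \B_j[Y_j, X_{j-1}] \qquad \text{for every } j\in\ev{2}{l}.
\end{equation*}

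To establish this pointwise identity I would apply De~Morgan to the definition of $\A_j$ and substitute the two operator-level dualities $\neg\Cpre_\Even(\neg S)=\Cpre_\Odd(S)$ from~\eqref{equ:cpre_equal} and $\neg\Apre(\neg S,\neg T)=\Npre(S,T)$ from Sec.~\ref{sec:assump:prelim}. After distributing the resulting intersection over $\overline{\bigcup_{i<j}C_i} = C_j\cup\bigcup_{i>j}C_i$, the set $\neg\A_j[\neg Y_j,\neg X_{j-1}]$ simplifies to $\B_j[Y_j,X_{j-1}] \cup \bigl(\Npre(Y_j,X_{j-1})\cap\Cpre_\Odd(Y_j)\bigr)$. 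The main (though minor) obstacle is to show that this extra summand is already contained in $\B_j$; this follows by a three-way case split on the priority $\chi(v)$ of an arbitrary $v\in\Npre(Y_j,X_{j-1})\cap\Cpre_\Odd(Y_j)$: if $\chi(v)>j$, then $v\in\bigcup_{i>j}C_i\subseteq \B_j$; if $\chi(v)=j$, then $v\in C_j\cap\Cpre_\Odd(Y_j)\subseteq \B_j$; and if $\chi(v)<j$, then $v\in\overline{C_j}\cap\Npre(Y_j,X_{j-1})\subseteq \B_j$. The three cases are exhaustive since the priority function partitions $V$, completing the pointwise identity.

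The complexity claim $\mathcal{O}(n^{l+1})$ follows from the standard analysis of $l$-alternating nested fixed-points over a game graph of size $n=|V|$: each of the $l$ nested variables stabilizes after at most $n$ monotone updates, yielding $\mathcal{O}(n^l)$ evaluations of the innermost operand, and each such evaluation invokes a constant number of $\Pre$, $\Cpre$, $\Apre$ and $\Npre$ operators, every one of which is computable in $\mathcal{O}(n)$ symbolic steps.
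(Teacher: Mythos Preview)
Your proposal is correct and follows essentially the same approach as the paper: both negate the fixed-point formula~\eqref{eq:fp-even} for $\We$ using the $\mu$-calculus duality rules and the operator-level identities $\neg\Cpre_\Even(\neg S)=\Cpre_\Odd(S)$ and $\neg\Apre(\neg S,\neg T)=\Npre(S,T)$, then argue that the residual term $\Cpre_\Odd(Y_j)\cap\Npre(Y_j,X_{j-1})$ is absorbed by $\B_j$ via a case split on the priority of $v$. Your three-way split ($\chi(v)>j$, $=j$, $<j$) is slightly more explicit than the paper's two-way split ($\chi(v)=j$ versus $\chi(v)\neq j$), and you additionally spell out the complexity argument, but the substance is identical.
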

\addtocounter{proposition}{-1} % decrease the counter that holds proposition numbers, so that the previous restated proposition is not seen.
\endgroup

\begin{proof}
We use the negation rule of the $\mu$-calculus, i.e., $\neg (\mu X~.~F(X))=\nu X~.~\neg F(\neg X)$, to negate \eqref{eq:fp-even}. Using the equivalences in \eqref{equ:Preseq} and \eqref{equ:cpre_equal} and common De-Morgan laws, we get 
\begin{subequations}
\begin{align}
 \neg \A_j(\neg Y_j,\neg X_{j-1})=&\left(\overline{C_j} \cup \Cpre_\Odd(Y_j)\right) \cap \left(\left(\textstyle\bigcup_{i \in [j,l]} C_i\right) \cup \Npre(Y_j, X_{j-1})\right)\\
 =&\left(\textstyle\bigcup_{i \in [j+1, l]} C_i\right) \cup \left(\overline{C_j} \cap \Npre(Y_j, X_{j-1})\right) \nonumber\\
 &\cup \left(C_j \cap \Cpre_\Odd(Y_j)\right) \cup \left(\Cpre_\Odd(Y_j) \cap \Npre(Y_j, X_{j-1})\right)\label{eq2}\\
 =&\left(\textstyle\bigcup_{i \in [j+1,l]} C_i\right) \cup \left(\overline{C_j} \cap \Npre(Y_j, X_{j-1}) \right) \cup \left(C_j \cap \Cpre_\Odd(Y_j)\right)
\end{align}\end{subequations}
where the last equivalence follows from the observation that the last term of \eqref{eq2} is redundant since it is a subset of both $ \Npre(Y_j, X_{j-1})$ and $\Cpre_\Odd(Y_j)$: If a $v$ is in the last term, it either has priority $j$, in which case it is already in $C_j \cap \Cpre_\Odd(Y_j)$, or it has a different priority, in which case it is already in $\Npre(Y_j, X_{j-1})$. %As all formal variables can have arbitrary symbols, we just rename them to their non-overlined versions but swap the preceding $\mu$/$\nu$ operators. This yields \eqref{eq:fp-odd} from negating \eqref{eq:fp-even}.
\end{proof}

\subsection{Proof of Prop.~\ref{prop:mainresult}}\label{app:counter-strategy-templates}
    We will restate the fixed-point formula that calculates the \Odd winning region and the main proposition for the sake of self-containment.
    
% restating prop 2
\begingroup
\def\theproposition{\ref{eq:fp-odd}}
\begin{proposition}\label{app-eq:fp-odd}
    Given an \Odd-fair parity game $\mathcal{G}^\ell = (\ltup{V, \Ve, \Vo, E, \chi}, E^\ell)$ with least even upper bound $l\geq 0$ it holds that $Z=\Wo$, where
    \begin{align}
        Z &:=\textstyle \mu {Y_l}.~  \nu {X_{l-1}}.~  \ldots \mu{Y_2}.~  \nu{X_1}.~  \bigcap_{j \in \ev{2}{l}} \B_j[Y_j, X_{j-1}], \\
        &\text{ where} \quad
        \B_j[\mathbf{Y}, \mathbf{X}] := \left(\textstyle\bigcup_{i \in [j+1,l]} C_i\right) \cup \left(\overline{C_j} \cap \Npre(\mathbf{Y}, \mathbf{X}) \right) \cup \left(C_j \cap \Cpre_\Odd(\mathbf{Y})\right).\nonumber
    \end{align}
     then $Z=\Wo$.
     Further, it takes $\mathcal{O}(n^{l+1})$ symbolic steps to compute $Z$.
\end{proposition}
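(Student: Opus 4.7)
The plan is to derive the formula for $\Wo$ by dualizing the known nested fixed-point characterization of $\We$. The proof leverages two facts: Odd-fair parity games are determined (via Borel determinacy, as recalled in the preliminaries), so $\Wo = V \setminus \We$; and Banerjee et al.~\cite{banerjee2022fast} have already established a nested fixed-point for $\We$ of the form $\nu Y_l.\mu X_{l-1}.\ldots\nu Y_2.\mu X_1.\bigcup_{j \in \ev{2}{l}} \A_j$, where $\A_j$ is expressed in terms of $\Cpre_\Even$ and $\Apre$. The task therefore reduces to negating that formula and normalizing it into the form claimed for $Z$.

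The key computational step is applying the standard $\mu$-calculus duality $\neg \mu X.F(X) = \nu X.\neg F(\neg X)$ (and its dual) to all $l$ nested binders of the formula for $\We$. This simultaneously swaps every $\mu$ with $\nu$ and pushes a single complementation onto the body, turning $\bigcup_j \A_j$ into $\bigcap_j \neg \A_j$ with each occurrence of $Y_j$ and $X_{j-1}$ complemented. The preliminaries already provide the predecessor dualities needed to simplify the body: from~\eqref{equ:Preseq} and~\eqref{equ:cpre_equal} we have $\neg \Cpre_\Even(\neg S) = \Cpre_\Odd(S)$, and by definition $\neg \Apre(\neg S, \neg T) = \Npre(S,T)$. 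Applying these together with De Morgan on the unions and intersections inside $\A_j$ produces three kinds of terms: vertices of priority strictly larger than $j$ (coming from complementing $\bigcup_{i<j}C_i$), vertices in $C_j$ under $\Cpre_\Odd(Y_j)$, and vertices in $\overline{C_j}$ under $\Npre(Y_j, X_{j-1})$.

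One crossed term $\Cpre_\Odd(Y_j) \cap \Npre(Y_j, X_{j-1})$ also appears after the simplification, but it is redundant: any $v$ in the intersection either has $\chi(v) = j$, in which case it already lies in $C_j \cap \Cpre_\Odd(Y_j)$, or it has a different priority, in which case it already lies in $\overline{C_j} \cap \Npre(Y_j, X_{j-1})$. Eliminating this term yields exactly $\B_j$ as stated, completing the derivation. The complexity bound $\mathcal{O}(n^{l+1})$ follows from the standard counting for nested fixed-points: each of the $l$ nested variables takes at most $n$ iterations to saturate, and evaluating the body uses only a constant number of (symbolic) predecessor operations per iteration.

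The main obstacle is the bookkeeping in the simplification step: one must carefully track how De Morgan interacts with the partition of $V$ by priority classes and with the two-argument operators $\Apre$ and $\Npre$, and correctly spot the absorption of the crossed term. Once the duality of the pre-operators is invoked term-by-term, however, the equivalence is a routine manipulation.
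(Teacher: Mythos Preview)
Your proposal is correct and follows essentially the same approach as the paper: negate the known fixed-point for $\We$ from~\cite{banerjee2022fast} via the $\mu$-calculus duality rule, use the predecessor dualities $\neg\Cpre_\Even(\neg S)=\Cpre_\Odd(S)$ and $\neg\Apre(\neg S,\neg T)=\Npre(S,T)$ together with De~Morgan, and then absorb the crossed term $\Cpre_\Odd(Y_j)\cap\Npre(Y_j,X_{j-1})$ by the same case split on $\chi(v)=j$ versus $\chi(v)\neq j$. The only addition is that you spell out the complexity argument, which the paper merely states.
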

\addtocounter{proposition}{-1} % decrease the counter that holds proposition numbers, so that the previous restated proposition is not seen.
\endgroup
    
    \begingroup
    \def\theproposition{\ref{prop:mainresult}}
    \begin{proposition}\label{app-prop:mainresult}
        Every player \Odd strategy compliant with $\Sc^{\mathcal{G}^\ell}$ is winning for \Odd in $\mathcal{G}^\ell$.

    \end{proposition}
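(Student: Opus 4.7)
}
The plan is to fix an arbitrary \Odd strategy $\rho$ compliant with $\Sc^{\mathcal{G}^\ell}$ and an arbitrary play $\pi=v_0v_1\ldots$ compliant with $\rho$ starting in $v_0\in\Wo$, and then show that $\pi$ is \Odd-winning in $\mathcal{G}^\ell$. By Def.~\ref{def:compliantstrat} and the observation preceding Def.~\ref{def:Evenstrategytemplate}, compliance with $\Sc^{\mathcal{G}^\ell}$ already guarantees $\pi\models\alpha$, so the remaining task is purely about the parity condition: we must prove that $\max\{\recurring(\pi)\}$ is odd. Following the sketch, I would carry out this reduction via the three auxiliary facts stated in the excerpt concerning the set $M:=\{v\in\Wo\mid \rank{v}=(1,0,1,0,\ldots,1,0)\}$ of vertices of minimum rank.

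First I would prove that $M\neq\emptyset$ whenever $\Wo\neq\emptyset$. This is the easy step: the minimum rank tuple is exactly the value assigned in the very first inner iteration of the nested fixed point in Prop.~\ref{prop: W_Odd}, and if the outermost $\mu Y_l$ ever grows beyond $\emptyset$ then the first iterate $Y_l^{1}$ must already contain some vertex, which by monotone nesting of the inner iterates yields a vertex of rank $(1,0,\ldots,1,0)$.

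Next I would establish that every cycle in $\Sc^{\mathcal{G}^\ell}$ that passes through some $x\in M$ is \Odd-winning, i.e.\ the maximum priority seen on the cycle is odd. The argument examines the structure of $\Sc^{\mathcal{G}^\ell}$: by construction (S2), for an \Odd vertex $v$ of rank $(r_l,0,r_{l-1},0,\ldots,r_2,0)$ the chosen minimum-rank successor has a rank that is at most $v$'s (lex.\ comparison on the $r_j$ components); by (S1), \Even vertices in $\Wo$ have all outgoing edges in $\Sc^{\mathcal{G}^\ell}$, and the fixed-point semantics ensures that every successor of an \Even vertex in $\Wo$ is itself in $\Wo$ with rank no larger than $v$'s. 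Consequently, rank is non-increasing along edges of $\Sc^{\mathcal{G}^\ell}$ except possibly at live edges added by (S3)--(S4), whose role has to be analyzed separately. For a cycle passing through $x\in M$, the rank must return to the minimum, so along the cycle every rank equals $(1,0,\ldots,1,0)$; then the fact that $x$ enters $Y_l^1$ at the first iteration of the outermost $\mu Y_l$ means that the largest priority $n$ visited along the cycle must be strictly larger than any priority that could have enabled the $\Cpre_\Odd$ or $\Npre$ step on the corresponding ring — in particular $n$ cannot be $l$ (even), since then the construction of $\B_l$ in \eqref{eq:fp-odd} would have required passing through the already-fixed $Y_l$-set. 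A careful case analysis of which $\B_j$ term was used to admit each vertex on the cycle isolates an odd priority as the maximum.

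Third, I would prove that every infinite minimal play in $\Sc^{\mathcal{G}^\ell}$ (where at each \Odd vertex we take the unique minimum-rank successor chosen by (S2), ignoring the later-added live edges) visits $M$ infinitely often. The idea is a standard rank-decrease argument along the lexicographic order on rank tuples: on any infinite segment that avoids $M$, one coordinate $r_j$ eventually stabilizes to some value $r_j^\ast\geq 2$, and the $\Npre$/$\Cpre_\Odd$ characterization of the corresponding $Y_j$-iterate forces a strict decrease at the next lower-indexed coordinate within finitely many steps, contradicting infinite descent. Then, given the arbitrary play $\pi$, I would extract a minimal sub-play $\pi'$ by only recording, at each \Odd vertex visited infinitely often by $\pi$, the visits in which $\pi$ chose the minimum-rank successor (which must occur infinitely often since $\rho$ is fair to the live outgoing edges in the template by \eqref{equ:alpha:b}, hence all outgoing template edges are taken infinitely often, in particular the minimum-rank one). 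Since $\pi'$ visits $M$ infinitely often, so does $\pi$; pigeonhole yields some $x\in M$ visited infinitely often in $\pi$, and a tail of $\pi$ decomposes into finite cycles through $x$, each of which is \Odd-winning by the second item. Hence $\max\{\recurring(\pi)\}$ is odd, completing the proof. The main obstacle will be the case analysis in the second item: verifying that the constraints on $\B_j$ imposed by $v$ lying in the very first iterate $Y_j^1$ for every even $j$ rule out an even priority being the maximum on any cycle in $\Sc^{\mathcal{G}^\ell}$ through $M$, especially in the presence of the live edges added in (S3)--(S4) whose targets might temporarily raise the ranks encountered.
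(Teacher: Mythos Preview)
Your high-level decomposition into the three items about $M$ matches the paper's, and your assembly at the end (extract a minimal sub-play, pigeonhole on $M$, decompose the tail into cycles through $x\in M$) is exactly right. The genuine gap is in your argument for item 2.

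You claim that rank is non-increasing along edges of $\Sc^{\mathcal{G}^\ell}$ ``except possibly at live edges added by (S3)--(S4)'', defer that case, and then immediately assert that on any cycle through $x\in M$ ``every rank equals $(1,0,\ldots,1,0)$''. This inference fails on two counts. First, even along non-live edges the rank is only non-increasing in the \emph{truncated} lexicographic order $\geq_{l+1-\chi(v)}$ (Lem.~\ref{app-obs:v-Even-Odd-inequalities}), not in the full order; lower-index coordinates may jump up whenever a high-priority vertex is traversed, so you cannot conclude that all vertices on the cycle share the minimum rank. Second, cycles through $M$ in $\Sc^{\mathcal{G}^\ell}$ \emph{do} use the live edges added by (S3)--(S4) --- that is precisely why those edges were added --- and along such edges the rank can strictly increase. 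So the ``rank returns to the minimum'' argument is not available here, and your subsequent case analysis on $\B_j$-membership never gets off the ground.

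The paper's proof of item 2 does not use rank inequalities at all. It instead exploits the internal structure of $\Npre$: for $v\in M$ with $\chi(v)=n-1$ one has $v\in\Npre(\emptyset,Y_n^{\mathbf 1})$, and the $\Lpre^\forall(Y_n^{\mathbf 1})$ conjunct inside $\Npre$ guarantees that \emph{every} live successor of $v$ (not just the minimum-rank one) lies in $Y_n^{\mathbf 1}$. This invariant --- membership in the appropriate $Y_m^{\mathbf 1}$ --- propagates along the cycle through all template edges, and the characterization of $Y_m^{\mathbf 1}$ in \eqref{app-eq:obs} then forces the maximum priority on the cycle to be odd. The $\Lpre^\forall$ component is the missing ingredient; without it there is no control over where live edges land, and your rank-based approach cannot be repaired.
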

    \addtocounter{proposition}{-1} % decrease the counter that holds proposition numbers, so that the previous restated proposition is not seen.
    \endgroup

    The main observation behind the proof of Prop.~\ref{app-prop:mainresult} is similar to the main observation in Sec.~\ref{sec:zielonka}, leading to the proof of Alg.~\ref{algo:fair-zielonka-bb}.
    That is, there exists a core subset of the \Odd winning region $\Wo'\subseteq \Wo$, that is added to $Z$ in the first iteration of the 
    fixed-point calculation in ~\eqref{eq:fp-odd}, to which each $v \in \Wo$ can be made to reach by \Odd. 
    Here in particular, we show that any \Odd strategy compliant with $\Sc^{\mathcal{G}^\ell}$ reaches $\Wo'$ (infinitely often) while obeying the fairness condition, and is thus winning for \Odd.

    The proof of Prop.~\ref{prop:mainresult} consists of $3$ main propositions. Before we present them, we will gather some observations from the fixed-point formula ~\eqref{app-eq:fp-odd} and present them as lemmas.
    
    According to our previous definitions, $Y_m^{r_l, r_{l-1}, \ldots, r_m}$ denotes the value of $Y_m$ variable after the $r_{m}^{th}$ iteration on it, while $Y_i, X_i$ variables for $i>m$ are in their ${{r_i}+1}^{th}$ iterations.
    If we flatten this formula we get the following equality: $Y_m^{r_l, r_{l-1}, \ldots, r_m} = $
    $$\nu X_{m-1}\ldots \mu Y_2 \nu X_1. \bigcap_{j \in \ev{m+2}{l}} \B_j[Y_j^{r_j}, X_{j-1}^{r_{j-1}}] \cap \B_m[Y_m^{r_m-1}, X_{m-1}] \cap \bigcap_{j \in \ev{2}{m-2}} \B_j[Y_j, X_{j-1}]$$

    Observe that when the fixed-point above is calculated, all $X_{j}, Y_j$ values for $j < m$ will saturate at the same value,
    which is the final result of the computation. That is, 
    \begin{lemma}\label{app-obs:flat-Z}
    $$ Y_m^{r_l, \ldots, r_m} = \bigcap_{j \in \ev{m+2}{l}} \B_j[Y_j^{r_j}, X_{j-1}^{r_{j-1}}] \cap \B_m[Y_m^{r_m-1},  Y_m^{r_l, \ldots, r_m}] \cap \bigcap_{j \in \ev{2}{m-2}} \B_j[Y_m^{r_l, \ldots, r_m}, Y_m^{r_l, \ldots, r_m}]$$
    \end{lemma}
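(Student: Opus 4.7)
The plan is to unfold the definition of the nested fixed-point computation and observe that, inside a single iteration of $Y_m$ (with the outer variables held at their listed iteration values), every inner fixed-point variable saturates to the same set, namely $Y_m^{r_l,\ldots,r_m}$ itself. The identity then follows by substituting these saturated values into $\Phi := \bigcap_{j\in\ev{2}{l}}\B_j[Y_j,X_{j-1}]$.

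First, by the standard Knaster--Tarski semantics of $\mu$ and $\nu$ and the very definition of $Y_m^{r_l,\ldots,r_m}$, this set equals the greatest fixed point of the monotone operator $X_{m-1}\mapsto Q(X_{m-1})$, where $Q(X_{m-1}) := \mu Y_{m-2}.\nu X_{m-3}.\cdots\nu X_1.\,\Phi$, with all outer variables instantiated at $Y_j^{r_j}, X_{j-1}^{r_{j-1}}$ for $j>m$ and with $Y_m$ instantiated at $Y_m^{r_m-1}$. Denoting this greatest fixed point by $X_{m-1}^\infty$, we immediately have $X_{m-1}^\infty = Y_m^{r_l,\ldots,r_m}$ together with the fixed-point equation $X_{m-1}^\infty = Q(X_{m-1}^\infty)$.

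Next, I would propagate this equation inward through the alternating $\mu/\nu$ binders. Evaluating $Q$ at $X_{m-1}^\infty$, the saturation $Y_{m-2}^\infty$ of the $\mu Y_{m-2}$ binder satisfies $Y_{m-2}^\infty = Q(X_{m-1}^\infty) = Y_m^{r_l,\ldots,r_m}$. A straightforward induction down the chain $X_{m-3}, Y_{m-4},\ldots, X_1$, applying the same fixed-point equation at each binder, shows that $X_{j-1}^\infty = Y_j^\infty = Y_m^{r_l,\ldots,r_m}$ for every $j\in\ev{2}{m-2}$, and in particular $X_1^\infty = Y_m^{r_l,\ldots,r_m}$. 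Substituting these saturated values into $\Phi$: the terms with $j>m$ use the fixed outer values $(Y_j^{r_j}, X_{j-1}^{r_{j-1}})$; the term $\B_m$ uses $(Y_m^{r_m-1}, X_{m-1}^\infty) = (Y_m^{r_m-1}, Y_m^{r_l,\ldots,r_m})$; and each term for $j\in\ev{2}{m-2}$ uses $(Y_j^\infty, X_{j-1}^\infty) = (Y_m^{r_l,\ldots,r_m}, Y_m^{r_l,\ldots,r_m})$. Since the innermost saturation $X_1^\infty$ equals $\Phi$ evaluated at these values and also equals $Y_m^{r_l,\ldots,r_m}$, the claimed identity follows.

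The main obstacle I anticipate is making the inward induction across the alternating $\mu/\nu$ binders fully precise: at each level one must invoke the correct fixed-point equation (greatest or least) and check that the substituted value from the previous level is indeed preserved when one moves one binder deeper. This is essentially bookkeeping once the key outermost observation $X_{m-1}^\infty = Y_m^{r_l,\ldots,r_m}$ is in place, and monotonicity of all the $\Pre$-based operators building $\B_j$ ensures that every intermediate evaluation is well-behaved.
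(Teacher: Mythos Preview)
Your proposal is correct and follows essentially the same approach as the paper. The paper's justification is simply the one-line observation preceding the lemma, namely that when the nested fixed point defining $Y_m^{r_l,\ldots,r_m}$ is computed, all inner variables $X_{j-1}, Y_j$ for $j<m$ saturate to the same value $Y_m^{r_l,\ldots,r_m}$; your write-up makes this explicit by unfolding the Knaster--Tarski semantics and propagating the fixed-point equation inward through the alternating binders.
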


    \begin{lemma}\label{app-lem:intersection_of_Y_m}   
        For all $v \in \Wo$ with $\rank{v} = (r_l, 0, \ldots, r_2, 0)$. Then,
        $$v \in \bigcap_{j \in \ev{2}{l}} Y_j^{r_l-1, 0, r_{l-2}-1, 0, \ldots, r_{j-2}-1, 0, r_j}$$
    \end{lemma}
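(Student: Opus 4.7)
The plan is to prove Lem.~\ref{app-lem:intersection_of_Y_m} by downward induction on $j$ from $l$ to $2$, making use of Lem.~\ref{app-obs:flat-Z} (the flattening of the nested fixed-point expression) together with the monotonicity of the $\B_j$ operators in their $\mathbf{Y}$ and $\mathbf{X}$ arguments. For the base case $j = l$, the claim reduces to $v \in Y_l^{r_l}$, which holds directly by Def.~\ref{def:rank}, since there are no outer $Y$-indices to be shifted.

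For the inductive step, I assume the statement for $j+2$, which gives $v \in Y_{j+2}^{r_l-1, 0, \ldots, r_{j+4}-1, 0, r_{j+2}}$. Applying Lem.~\ref{app-obs:flat-Z}, I unfold the target set $Y_j^{r_l-1, 0, \ldots, r_{j+2}-1, 0, r_j}$ as an intersection of $\B_{j'}$-terms evaluated at the shifted outer iteration counts. For each $j' > j$, the inductive hypothesis supplies exactly the $Y_{j'}$-membership that appears as the first argument of the corresponding $\B_{j'}$; the $X_{j'-1}$-argument is transferred by monotonicity together with the fact that the indices $0$ in the tuple correspond to the starting value $V$. For $j' = j$, I use $v \in Y_j^{r_l, 0, \ldots, r_j}$ (from the rank) to conclude that $v$ is still produced after the $r_j$-th iteration of the $\mu Y_j$ computation run at the reduced outer counts, because $\Cpre_\Odd$ and $\Npre$ in $\B_j$ depend only on the currently held $Y_j^{r_l-1, 0, \ldots, r_{j+2}-1, 0, r_j -1}$ and on outer arguments already controlled by the inductive hypothesis.

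The main obstacle is the direction of the claim: it moves from full outer iteration counts $(r_l, r_{l-2}, \ldots)$ to reduced ones $(r_l-1, r_{l-2}-1, \ldots)$, which is not given by naive monotonicity of the $\mu$-calculus iterates. The crucial subtlety is that, although $v$ is not in the fully saturated outer set $Y_{l-2}^{r_l-1}$ (otherwise the rank would be smaller), it nevertheless appears transiently in the iterate $Y_j$ at the specific count $r_j$ computed inside the previous outer run. Establishing this persistence amounts to unfolding $\Cpre_\Odd$ and $\Npre$ inside each $\B_{j'}$ and verifying that the witnesses for $v \in Y_j^{r_l, 0, \ldots, r_j}$ still hold when every outer $Y_{j'}$-index drops by one, using both the inductive hypothesis and Lem.~\ref{app-obs:flat-Z} repeatedly. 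A secondary check is needed at the boundary $j = 2$, where the $r_{j-2}-1$ slot in the lemma's expression vanishes (or must be reinterpreted as the $r_{j+2}-1$ slot immediately preceding $r_j$), so I would treat this innermost case separately by direct verification that $Y_2$ at iteration $r_2$ in the previous outer run already contains $v$.
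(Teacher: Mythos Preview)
The paper does not give a formal proof; its entire justification is the remark ``$\rank{v} = (r_l, 0, \ldots, r_2, 0)$ implies $v$ was added to the formula while $Y_j$ variable was on its $r_j^{th}$ iteration for all $j \in \ev{2}{l}$. Since $X_{j-1}^0 = V$, the iteration values of $X$ variables can be safely ignored.'' The authors thus treat the lemma as a direct unpacking of the indexing convention stated just before Lem.~\ref{app-obs:flat-Z}: when $v$ first enters during the $r_i$-th iteration of each $Y_i$, the inner sub-computation witnessing this carries outer $Y$-indices $r_i-1$ (because one is computing $Y_i^{r_i}$ using the stored value $Y_i^{r_i-1}$), and $X$-indices $0$ (since $X_{j-1}^0=V$ trivially contains $v$). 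The superscript tuple in the lemma is exactly this.

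Your proposal takes a much heavier route --- a downward induction on $j$ using monotonicity of the $\B_j$ operators --- and the ``main obstacle'' you isolate, namely that passing to reduced outer counts is not given by naive monotonicity, is on the paper's reading not an obstacle at all but the definitional content being recorded. If one insisted on reading the superscripts in Def.~\ref{def:rank} and in the lemma as two genuinely different runs of the nested fixed point, your inductive step would also not close: unfolding the target $Y_j^{r_l-1,0,\ldots,r_{j+2}-1,0,r_j}$ via Lem.~\ref{app-obs:flat-Z} produces, for $j'\leq j$, $\B_{j'}$-terms whose second argument is the target set itself, whereas the hypothesis for $j+2$ only controls $\B_{j'}$-terms with second argument $Y_{j+2}^{\ldots}$. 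The resolution is simply to recognise, as the paper does, that both superscript tuples describe the same iteration.
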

        This is similar to our previous observation. $\rank{v} = (r_l, 0, \ldots, r_2, 0)$ implies $v$ was added to the formula while
        $Y_j$ variable was on it's $r_j^{th}$ iteration for all $j \in \ev{2}{l}$. Since $X_{j-1}^0 = V$, the iteration values of $X$ variables can be safely ignored. 
    
    \begin{lemma}\label{app-obs:v-Even-Odd-inequalities} 
        $\quad \text{if } v \in V_\Even, \quad \quad \forall(v, w)\in E, \rank{v}\geq_{l+1-\chi(v)} \rank{w}$
        $$\text{if } v \in V_\Odd, \quad \quad \exists(v, w)\in E, \rank{v}\geq_{l+1-\chi(v)} \rank{w}$$
       % \begin{align*}
       %     \quad\quad\quad\quad\quad\quad\text{if } v \in V_\Even, \quad \quad \forall(v, w)\in E, \rank{v}\geq_{l+1-\chi(v)} \rank{w}\\
       %     \quad\quad\quad\quad\quad\quad\text{if } v \in V_\Odd, \quad \quad \exists(v, w)\in E, \rank{v}\geq_{l+1-\chi(v)} \rank{w}
       % \end{align*}
        where $\rank{v} \geq_b \rank{w}$ denotes the $\geq$ relation in the lexicographic ordering, restricted to the first b elements of the tuple. If $\chi(v)$ is even, the inequalities are strict. 
    \end{lemma}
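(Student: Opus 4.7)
The plan is to prove Lemma~\ref{app-obs:v-Even-Odd-inequalities} by expanding the operator $\B_j$ at the exact iteration that captures $v$ and reading off the resulting membership constraints on the successors of $v$. Write $\rank{v} = (r_l, 0, r_{l-2}, 0, \ldots, r_2, 0)$, put $c := \chi(v)$, and set $b := l+1-c$; note that in the rank tuple the coordinate $r_j$ sits at position $l-j+1$, so position $b$ corresponds to $r_c$ when $c$ is even and to a padding zero when $c$ is odd. Lemma~\ref{app-lem:intersection_of_Y_m} gives, for every even $j \in \ev{2}{l}$, that $v$ first enters $Y_j$ at iteration $r_j$ with outer iterations frozen at $(r_l, 0, \ldots, r_{j+2}, 0)$. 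Hence $v \in \B_j[Y_j^{\mathrm{prev}}, X_{j-1}^{\mathrm{sat}}]$, where $Y_j^{\mathrm{prev}}$ denotes the $(r_j{-}1)$-th $Y_j$-iterate and $X_{j-1}^{\mathrm{sat}}$ is the saturation value of the $\nu X_{j-1}$-block at these outer iterations.

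I would then split on the parity of $c$. When $c$ is even, the key inclusion is $v \in C_c \cap \Cpre_\Odd(Y_c^{\mathrm{prev}})$, the third disjunct of $\B_c$; unfolding $\Cpre_\Odd = \Pre_\Odd^\exists \cup \Pre_\Even^\forall$ forces \emph{every} successor of $v$ when $v \in V_\Even$ (respectively \emph{some} successor when $v \in V_\Odd$) to lie in $Y_c^{\mathrm{prev}}$. Such a successor $w$ therefore has its $r_c$-coordinate bounded by $r_c-1$, yielding the strict lexicographic drop required at position $b$. When $c$ is odd, I iterate the argument over each even $j > c$: $v$ sits in the middle disjunct $\overline{C_j} \cap \Npre(Y_j^{\mathrm{prev}}, X_{j-1}^{\mathrm{sat}})$ of $\B_j$, and unfolding $\Npre$ via~\eqref{equ:npre} places each relevant successor into $X_{j-1}^{\mathrm{sat}}$, whose characterization via the deeper fixed-point computation forces the successor's $j'$-th rank coordinate for $j' \geq j$ to satisfy $s_{j'} \leq r_{j'}$.

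The main obstacle I anticipate is the $V_\Odd$ case, where the existential quantifier in the lemma demands a \emph{single} successor $w^{\star}$ that simultaneously witnesses progress at every coordinate of the rank prefix of length $b$, rather than level-wise witnesses that may vary with $j$. My resolution is to pick $w^{\star}$ as any minimum-rank successor of $v$ in the lexicographic order, matching the choice in Def.~\ref{def:S}(S2). To see that $w^{\star}$ works, I sweep coordinates from the coarsest downwards: either some coordinate $\leq b$ exhibits a strict drop $s_j < r_j$, at which point the lexicographic inequality is secured with equality above it, or equality holds on the entire prefix. For even $c$ the $\Cpre_\Odd$-constraint at level $c$ rules out pure equality and delivers the required strict drop at position $b$, while for odd $c$ the equality case already satisfies the non-strict conclusion. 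The residual bookkeeping consists of checking that $X_{j-1}^{\mathrm{sat}}$ reflects the iteration counts correctly, which follows from the unfolding in Lemma~\ref{app-obs:flat-Z} together with the monotonicity of the $\mu/\nu$ iterations.
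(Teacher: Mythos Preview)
Your even-priority case is exactly the paper's argument: unfold $\B_c$, land in $C_c\cap\Cpre_\Odd(Y_c^{\mathrm{prev}})$, and read off the strict drop from $\Pre_\Odd^\exists\cup\Pre_\Even^\forall$.

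For odd $c$ the paper is considerably simpler than your plan. It does \emph{not} iterate over all even $j>c$; it looks only at the single level $m:=c+1$. From the flattening (Lemma~\ref{app-obs:flat-Z}) one gets $v\in\B_m[Y_m^{\ldots,r_m-1},Y_m^{\ldots,r_m}]$, hence $v\in\Npre(\cdot,Y_m^{\ldots,r_m})\subseteq\Cpre_\Odd(Y_m^{\ldots,r_m})$, so the relevant successors lie in $Y_m^{\ldots,r_m}$, which already encodes the full lexicographic prefix bound of length $l+1-m$. The paper then observes that position $l+1-c$ of any rank tuple is a padding zero (it corresponds to the odd priority $c$), so $\geq_{l+1-c}$ and $\geq_{l+1-m}$ coincide. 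This single unfolding of $\Cpre_\Odd$ produces the required witness for $V_\Odd$ directly; your multi-level iteration and the ensuing ``single witness across all $j$'' worry never arise.

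Two remarks on your odd-$c$ sketch. First, the assertion that membership in $X_{j-1}^{\mathrm{sat}}$ forces \emph{coordinate-wise} bounds $s_{j'}\le r_{j'}$ is not what one obtains; it forces a \emph{lexicographic} bound on the length-$(l{+}1{-}j)$ prefix, which is strictly weaker. Your subsequent sweep argument implicitly uses the lexicographic statement anyway, so the mismatch is cosmetic rather than fatal. Second, once you realise the $j=c+1$ constraint subsumes all coarser ones, your minimum-rank-successor resolution becomes unnecessary: the single $\Cpre_\Odd$ at level $m$ already hands you one witness that works.
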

    \begin{proof}
        Consider a $v$ with $\chi(v) \in \{m-1, m\}$ for some even $m$ and let $\rank{v} = (r_l, 0, \ldots, r_2, 0)$.
        By Lem.~\ref{app-lem:intersection_of_Y_m}, $v \in Y_m^{r_l-1, 0, \ldots, r_{m-2}-1, 0, r_m}$. If we look at the flattening of this formula in Lem.\ref{app-obs:flat-Z}, $v$ is in particular, inside the middle term of this formula. That is,
         \\$v \in \B_m[Y_m^{r_l-1, \hdots, r_m-1}, Y_m^{r_l-1,\hdots, r_m }]$. If we go through the definition of this term we get,
            $$(\bigcup_{i \in [m+1, l]} C_i) \cup (\overline{C_m} \cap \Npre(Y_m^{r_l-1, \hdots, r_m-1}, Y_{m}^{r_l-1,0, \hdots, r_m})) \cup (C_{m} \cap \Cpre_\Odd(Y_m^{r_l-1, 0, \hdots, r_m-1}))$$
        
            \vspace{-7mm}
            \begin{align*}
               \text{ That gives us, } \quad \quad &\text{if } \chi(v) = m, \quad \quad \quad\quad\quad \,\,\, v \in \Cpre_\Odd(Y_m^{r_l-1, 0, \ldots, r_m-1}) \\
                &\text{if } \chi(v) = m-1, \quad \quad \quad\quad v \in \Npre(Y_m^{r_l-1, 0, \ldots, r_m-1}, Y_m^{r_l-1, 0, \hdots, r_m}) \\
        \end{align*}

        By the definition of $\Npre$ we get, if $\chi(v) = m-1$ then $v \in \Cpre_\Odd(Y_m^{r_l-1, 0, \hdots, r_m})$.
       Since odd indices get $0$-ranks, the claim of the lemma follows from the definition of $\Cpre_\Odd$ together with the observation $\rank{v} \geq_{l+1-m} \rank{w} \Leftrightarrow \rank{v} \geq_{l+1-(m-1)} \rank{w}$.
            %from the fact that, if $w \in Y_m^{r_l-1, 0, \hdots, r_m}$, then then first $l+1-m$ of $\rank{w}$ is less than or equal to that of $\rank{v}$. Since odd indices are always $0$, $\rank{v} \geq_{l+1-m} \rank{w} \iff \rank{v} \geq_{l+1-(m-1)} \rank{w}$. 
    \end{proof}

    Now we are ready to introduce the first of our three main propositions:
    
    \begin{proposition}\label{app-prop:Mexists}
        If $\Wo \neq \emptyset$, there exists a non empty set $M := \{ v \in \Wo \mid \rank{v} = (1, 0, 1, 0, \ldots, 1, 0)\}$. Furthermore, for all $v\in M$, $\chi(v)$ is odd.
    \end{proposition}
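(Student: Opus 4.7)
My plan is to prove the two parts of the proposition---non-emptiness of $M$ and odd priority of every $v \in M$---independently. Both rely on the flat reformulation of the nested fixed-point given by Lem.~\ref{app-obs:flat-Z}.

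\textbf{Non-emptiness of $M$.} I would first observe that $(1,0,\ldots,1,0)$ is the componentwise smallest tuple compatible with Def.~\ref{def:rank}: each coordinate $r_j$ is the entry iteration of a least fixed-point $\mu Y_j$ started at $\emptyset$, so $r_j \geq 1$ as soon as $v$ lies in the corresponding set. To see the lower bound is attained, I would peel off the alternating fixed-points one level at a time using monotonicity. Since $Y_l$ is a least fixed-point started at $\emptyset$, $\Wo = Y_l^\infty \neq \emptyset$ forces $Y_l^1 \neq \emptyset$ (otherwise $\emptyset$ would already be an inner fixed-point and the outer $\mu$ would collapse). The identity $Y_l^1 = X_{l-1}^{1,\infty} \subseteq X_{l-1}^{1,1} = Y_{l-2}^{1,0,\infty}$ then makes $Y_{l-2}^{1,0,\infty}$ non-empty, which by the same collapse argument forces the first iteration $Y_{l-2}^{1,0,1}$ to be non-empty. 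Repeating inward through every alternation $\nu X_{j-1}.\mu Y_{j-2}$ yields $Y_2^{1,0,\ldots,1} \neq \emptyset$. A witness $v$ for this innermost set has $r_j = 1$ at every level (because $Y_j^{(1,0,\ldots,0)}=\emptyset$), hence $\rank{v}=(1,0,\ldots,1,0)$ by Def.~\ref{def:rank} and $v \in M$.

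\textbf{Odd priority of every $v \in M$.} Fix $v \in M$ and $m \in \ev{2}{l}$. By Lem.~\ref{app-obs:flat-Z}, $v$ lies inside the factor $\B_m[Y_m^{r_m-1}, Y_m^{(1,0,\ldots,1)}]$, and since $r_m = 1$ the first argument collapses to $Y_m^{0} = \emptyset$. Unfolding $\B_m[\emptyset, \mathbf{X}]$ from~\eqref{eq:fp-odd} gives
\[
  \B_m[\emptyset,\mathbf{X}] \;=\; \Bigl(\textstyle\bigcup_{i \in [m+1,l]} C_i\Bigr) \cup \bigl(\overline{C_m}\cap\Npre(\emptyset,\mathbf{X})\bigr) \cup \bigl(C_m\cap\Cpre_\Odd(\emptyset)\bigr).
\]
The last disjunct vanishes: $\Pre_\Odd^{\exists}(\emptyset)=\emptyset$ trivially, and $\Pre_\Even^{\forall}(\emptyset)=\emptyset$ because every vertex has an outgoing edge, so $\Cpre_\Odd(\emptyset)=\emptyset$. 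The remaining two disjuncts both sit in $\overline{C_m}$, giving $\chi(v)\neq m$. Letting $m$ range over all of $\ev{2}{l}$ rules out every even priority up to $l$, and since $l$ is the least even upper bound, $\chi(v)$ must be odd.

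\textbf{Main obstacle.} The delicate step is the non-emptiness argument, where one has to confirm that a \emph{single} witness lies in $Y_j^{(1,0,\ldots,1)}$ \emph{simultaneously} at every level $j \in \ev{2}{l}$, not just separately at each set. This reduces to carefully aligning the iteration superscripts across alternations and exploiting that the innermost first iteration is performed with every outer $Y_j$ still at $\emptyset$ and every outer $X_{j-1}$ still at $V$---the very context that Def.~\ref{def:rank} uses to pin down rank $(1,0,\ldots,1,0)$. The odd-priority half, by contrast, is a direct unfolding of $\B_m$ once the rank pins $Y_m^{r_m-1} = \emptyset$.
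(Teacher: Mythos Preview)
Your odd-priority argument is correct and matches the paper's approach exactly.

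Your non-emptiness argument, however, has a genuine gap at precisely the point you flag as the ``main obstacle''. You conclude by taking a witness $v$ in the innermost first iterate $Y_2^{1,0,\ldots,1}$ and asserting that this $v$ has $r_j=1$ at every level. That implication is false. The paper's own running example furnishes a counterexample: there $l=4$, and the first iterate of $Y_2$ (with $Y_4=\emptyset$, $X_3=V$) equals $C_3=\{3a,3b\}$, yet $3a\notin Y_4^1=\{2b,2c,3b\}$ and $\rank{3a}=(2,0,1,0)$. So $3a$ sits in your innermost set but not in $M$. The reason is that a vertex entering $Y_2$ early can still be pruned later by the surrounding $\nu X_3$ iteration and hence fail to survive into $Y_l^1$; your peeling only shows each $Y_j^{\mathbf 1}$ is non-empty \emph{separately}, not that their intersection is.

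The paper's proof is terse (``if $M$ was empty $Z$ would be empty, by monotonicity''), but the claim is correct. One way to make it rigorous and repair your argument: exploit the \emph{fixed-point} property level by level rather than just the first-iterate inclusion. Since $Y_l^1$ is itself the $\nu X_{l-1}$ saturation, it is a fixed point of the map $S\mapsto \mu Y_{l-2}.\,\nu X_{l-3}\ldots$ evaluated with $X_{l-1}=S$ and $Y_l=\emptyset$. Hence the first $Y_{l-2}$ iterate taken with $X_{l-1}=Y_l^1$ (rather than $V$) is non-empty, is contained in $Y_l^1$ because it is a first iterate of a $\mu$ whose limit is $Y_l^1$, and is contained in $Y_{l-2}^{\mathbf 1}$ by monotonicity in $X_{l-1}$. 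Iterating this construction inward produces a non-empty set inside $\bigcap_{j}Y_j^{\mathbf 1}=M$.
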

    Observe that $(1,0,1,0,\ldots, 1,0)$ is the smallest rank possible. Therefore, $v\in M$ are the vertices that were added to $Z$ in ~\eqref{app-eq:fp-odd} in the first iteration of the fixed-point calculation and were never removed.
    The first part of the proposition follows from the monotonicity of fixed-point calculation. That is, if $M$ was empty $Z$ would be empty as well.
    
    For the second part, observe that in the first iteration of the formula, for all $j$, $Y_j = \emptyset$. Also, $\Cpre_\Odd(\emptyset) = \emptyset$. 
    Then from~\eqref{app-eq:fp-odd}, $Z$ does not contain any $v$ with even priority. 
    
    \begin{proposition}\label{app-prop:cycle-through-M}
        All cycles in $\Sc^{\mathcal{G}^\ell}$ that pass through a vertex in $M$ are \Odd winning.% (i.e. the largest priority in the cycle is odd).
    \end{proposition}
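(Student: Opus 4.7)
I will prove Prop.~\ref{app-prop:cycle-through-M} by contradiction. Let $C = v_0 v_1 \ldots v_k v_0$ be a cycle in $\Sc^{\mathcal{G}^\ell}$ passing through $v_0 \in M$ and suppose the maximum priority $m$ on $C$ is even; pick $v^* \in C$ with $\chi(v^*) = m$. Throughout this plan, for $v \in \Wo$, write the rank tuple of $v$ as $\rank{v} = (r_l(v), 0, r_{l-2}(v), 0, \ldots, r_2(v), 0)$, so that $r_j(v)$ denotes the iteration of $Y_j$ at which $v$ is added in the fixed-point in~\eqref{app-eq:fp-odd}.

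\textbf{Core observation.} For any vertex $v \in \Wo$ with even priority $j = \chi(v)$ we have $r_j(v) \geq 2$. Indeed, Lem.~\ref{app-obs:flat-Z} together with $v \in C_j$ forces $v \in C_j \cap \Cpre_\Odd(Y_j^{\ldots, r_j(v) - 1})$, while $Y_j^{\ldots, 0} = \emptyset$ and $\Cpre_\Odd(\emptyset) = \emptyset$ preclude $r_j(v) = 1$. Applied to $v^*$ this yields $r_m(v^*) \geq 2$, whereas $v_0 \in M$ gives $r_m(v_0) = 1$.

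\textbf{Rank-monotonicity along $C$.} The crucial step is to show that for every edge $(u, u')$ of $\Sc^{\mathcal{G}^\ell}$ on $C$, $\rank{u} \geq_{l+1-m} \rank{u'}$ in the lex order on the first $l+1-m$ components. For edges added by (S1) from $u \in \Ve$ and for minimum-ranked edges added by (S2) from $u \in \Vo$, this follows directly from Lem.~\ref{app-obs:v-Even-Odd-inequalities}: that lemma gives $\rank{u} \geq_{l+1-\chi(u)} \rank{u'}$, and since $\chi(u) \leq m$ implies $l+1-\chi(u) \geq l+1-m$, the elementary fact that $A \geq_b B$ implies $A \geq_{b'} B$ whenever $b' \leq b$ finishes these cases. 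The main obstacle is the case of a live edge added by (S3) from $u \in V^\ell \cap \Vo$, where $u'$ need not be minimum-ranked. For this case I plan to unfold the $\Npre$-constraint imposed on $u$ at level $j = m$ in the flattened fixed-point (Lem.~\ref{app-obs:flat-Z}): the $\Lpre^\forall$-branch of $\Npre$ directly places every live successor of $u$ into $Y_m^{\ldots, r_m(u)}$, yielding the desired comparison of the $r_m$-component, while the $\Pre_\Odd^\exists$-branch alone will be excluded using the additional constraints coming from $\B_j$ at the higher levels $j > m$, exploiting that $u$ lies on a cycle through $M$ so that every $r_j(u)$ for $j > m$ is already pinned at $1$.

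\textbf{Closing the contradiction.} Once the rank-monotonicity inequality is established along every edge of $C$, iterating it around the cycle yields $\rank{v_0} \geq_{l+1-m} \rank{v_1} \geq_{l+1-m} \cdots \geq_{l+1-m} \rank{v_0}$, which forces the first $l+1-m$ components of every $\rank{v_i}$ to coincide with those of $\rank{v_0} = (1, 0, 1, 0, \ldots, 1, 0)$. Since these first components comprise exactly $r_l, r_{l-2}, \ldots, r_m$, in particular $r_m(v^*) = 1$, contradicting $r_m(v^*) \geq 2$ from the core observation. Hence $m$ must be odd and $C$ is won by \Odd. The hardest step will be the live-edge subcase of the rank-monotonicity argument, where the interleaved $\mu/\nu$ structure of~\eqref{app-eq:fp-odd} has to be carefully exploited to rule out the $\Pre_\Odd^\exists$-only branch of $\Npre$.
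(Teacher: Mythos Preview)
Your overall architecture—assume the maximal priority $m$ on the cycle is even, then derive that the prefix of length $l+1-m$ of every rank on $C$ equals $(1,0,\ldots,1,0)$, contradicting $r_m(v^*)\geq 2$—is reasonable and is \emph{not} how the paper argues. The paper never uses edge-wise rank monotonicity; instead it tracks membership in the first-iteration sets $Y_n^{\mathbf 1}$ directly, using the characterization $v\in Y_n^{\mathbf 1}$ iff $\chi(v)>n$ is odd, or $\chi(v)<n$ and $v\in\Npre(\emptyset,Y_n^{\mathbf 1})$, and then walks along the cycle from $v_0$.

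However, your plan has a genuine gap in the live-edge subcase. You want to show, for a live edge $(u,u')$ on $C$ with $\chi(u)<m$, that $\rank{u}\geq_{l+1-m}\rank{u'}$. Unfolding $\B_m$ you get $u\in\Npre(Y_m^{\ldots,r_m(u)-1},Y_m^{\ldots,r_m(u)})$, so either $u\in\Lpre^\forall(Y_m^{\ldots,r_m(u)})$ or $u\in\Pre_\Odd^\exists(Y_m^{\ldots,r_m(u)-1})$. You claim the second branch can be excluded ``using the additional constraints coming from $\B_j$ at the higher levels $j>m$, exploiting that every $r_j(u)$ for $j>m$ is already pinned at $1$.'' This does not work. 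With all higher $r_{j'}(u)=1$, the higher-level constraint on $u$ from Lem.~\ref{app-obs:flat-Z} is $u\in\B_{j'}[\emptyset,V]$, and a direct computation shows $\B_{j'}[\emptyset,V]=\overline{C_{j'}}$; this merely says $\chi(u)\neq j'$ and carries no information about the $\Pre_\Odd^\exists$ branch at level $m$. So as stated, the rank-monotonicity inequality for live edges is unproven and may fail at an individual edge.

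The right repair is to abandon the ``prove monotonicity edge-by-edge, then close the cycle'' structure and instead \emph{propagate} the statement ``$r_j(v_i)=1$ for all even $j\in[m,l]$'' along the cycle starting from $v_0$. Once you already know $r_m(u)=1$ (and $r_{j'}(u)=1$ for $j'>m$), the offending branch becomes $\Pre_\Odd^\exists(Y_m^{0,\ldots,0,0})=\Pre_\Odd^\exists(\emptyset)=\emptyset$, so $u$ is forced into $\Lpre^\forall(Y_m^{\mathbf 1})$ and hence $u'\in Y_m^{\mathbf 1}$; combined with the analogous argument at each level $j>m$ this yields $r_j(u')=1$ for all $j\geq m$. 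When the propagation reaches a vertex of priority $m$ you obtain $r_m(v^*)=1$, contradicting your core observation. This repaired argument is essentially the paper's proof rephrased through ranks: the key point—that one must work with $\Npre(\emptyset,\cdot)$, where the $\Pre_\Odd^\exists$ branch vanishes—is exactly the paper's observation~\eqref{app-eq:obs}, and it is this, not any higher-level $\B_j$, that handles the live edges.
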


    To see why Prop.~\ref{app-prop:cycle-through-M} holds, we make an observation.
    For an even $m\leq l$, let $Y_m^\1$ denote the value of $Y_m$ after the first ever iteration over it is completed, during the computation of ~\ref{eq:fp-odd}.
    I.e. $Y_m^\1 = Y^{0,0,\ldots ,0, 1} $.
    %$$\nu X_{m-1}\ldots \mu Y_2 \nu X_1. \bigcap_{j \in \ev{m+2}{l}} \B_j[\emptyset, V] \cap \B_m[\emptyset, X_{m-1}] \cap \bigcap_{j \in \ev{2}{m-2}} \B_j[Y_j, X_{j-1}]$$
    %In the first term $\B_j$ takes $\emptyset$ and $V$ as arguments. This is due to all $Y_{j}, X_{j-1}$ variables for $j \leq m$ having the values they are initialized with. Observe that when the fixed-point above is calculated, all $X_{j-1}, Y_j$ values for $j < m$ will saturate at the same value,
    %which is the final result of the computation. That is, 
    Since for all $j$, $Y_j^0 = \emptyset$ and $ X^0_{j-1} = V$, Lem.~\ref{app-obs:flat-Z} gives,
    \begin{equation}\label{eq:Ym1}
    Y_m^\1 = \bigcap_{j \in \ev{m+2}{l}} \B_j[\emptyset, V] \cap \B_m[\emptyset, Y_m^\1] \cap \bigcap_{j \in \ev{2}{m-2}} \B_j[Y_m^\1 Y_m^\1]
    \end{equation}
    If we go through the definition of $\B_j$ we see that: the first term of this formula adds or deletes $v \in C_j$ with $j > m$. It adds all the ones with odd $j$  and removes all the ones with even $j$.%is equal to $\bigcup_{j \in \ev{m+2}{l}}C_{j-1} \cup \bigcup_{j \in [1, m+1]}C_j$. That is, the first term eliminates all $v \in C_j$ with even $j>m$ from $Y_m^\1$ and add all $C_j$ with odd $j>m$.
     The last term adds and removes $v \in C_j$ for $j \leq m-2$. It adds the ones in $\Cpre_\Odd(Y_m^\1)$ and removes the ones that are not. The middle term eliminates $C_m$ and all $v \in C_j \cap \neg \Npre(\emptyset, Y_m^\1)$ for $j < m$, and adds $v \in C_{m-1} \cap \Npre(\emptyset, Y_m^\1)$.
    If we go through the definition of $\Npre$, we see that $\Npre(\emptyset, Y_m^\1) = \Cpre_\Odd(Y_m^\1) \cap (V_\Even \cup \Lpre^\forall(Y_m^\1))$.
    This gives,
    \begin{equation}\label{app-eq:obs}
         v \in Y_m^\1 \iff \chi(v)>m\text{ and is odd, or } \chi(v)< m \text{ and } v\in \Npre(\emptyset, Y_m^\1)
    \end{equation}
    %$ Y_m^\1$ consists of $v$ with either odd $\chi(v)>m$, or in $\Npre(\emptyset, Y_m^\1)$.

    Then for all $v \in M$, $v \in Y_m^\1$ for each even $m \leq l$. In particular, $ v \in Y_n^\1$ where $n$ is such that $\chi(v) = n-1$.
    It follows that $ v \in \B_n[\emptyset, Y_n^\1]$. Then, $v \in \Cpre_\Odd(Y_n^\1) \cap (V_\Even \cup \Lpre^\forall(Y_n^\1))$.
    Since all live outgoing edges of $v$ are in $Y_n^\1$, for all $(v,w)$ in $\Sc^{\mathcal{G}^\ell}$, $w \in Y_n^\1$.

    By our previous observation $w$ either has an odd priority larger than $n$, or is in $ \Cpre_\Odd(Y_n^\1) \cap (V_\Even \cup \Lpre^\forall(Y_n^\1))$.
    If $\chi(w)>n$ is odd, then $w \in Y^\1_{\chi(w)+1}$, and we repeat the same argument to conclude the highest priority seen is always odd.

    %Before we present the third proposition, we need a lemma and some observations obtained from formula ~\eqref{eq:fp-odd}.
    %\begin{lemma}\label{obs:v-Even-Odd-inequalities} 
    %%    \begin{align*}OBS
     %       &\quad \quad \quad \quad  \quad \quad \text{if } v \in V_\Even, \quad \quad \forall(v, w)\in E, \rank{v}\geq_{l+1-\chi(v)} \rank{w}\\
     %       &\quad \quad \quad  \quad \quad \quad  \text{if } v \in V_\Odd, \quad \quad \exists(v, w)\in E, \rank{v}\geq_{l+1-\chi(v)} \rank{w}
     %   \end{align*}
     %   where $\rank{v} \geq_b \rank{w}$ denotes the $\geq$ relation in the lexicographic ordering, restricted to the first b elements of the tuples $\rank{v}$ and $\rank{w}$. If $\chi(v)$ is odd, the inequalities are strict. 
    %\end{lemma}
    \begin{definition}
        We call a play $\pi = v_1 v_2 \ldots$ in $\Sc^{\mathcal{G}^\ell}$ \emph{minimal} if for all $v_i \in V_\Odd$, $v_{i+1}$ is the minimum ranked successor of $v_i$. A minimal cycle is a section of a minimal play.
    \end{definition}
    \begin{lemma}\label{app-lem:minimalplayOddwinning}
        Every minimal play is \Odd winning.
    \end{lemma}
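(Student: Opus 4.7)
My plan is to argue by infinite descent on rank tuples along the minimal play. Let $\pi = v_1 v_2 \ldots$ be an infinite minimal play in $\Sc^{\mathcal{G}^\ell}$, and suppose towards contradiction that $m := \max\{\inf(\pi)\}$ is even. Choose $N$ large enough that $\chi(v_i) \leq m$ for every $i > N$ while $\chi(v_i) = m$ for infinitely many $i > N$.

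The core claim to establish is that for every $i > N$, $\rank{v_{i+1}} \leq_{l+1-m} \rank{v_i}$, with strict inequality whenever $\chi(v_i) = m$. For $v_i \in V_\Even$, this is immediate from Lem.~\ref{app-obs:v-Even-Odd-inequalities} applied to the edge $(v_i, v_{i+1})$, combined with the elementary monotonicity $a \geq_k b \Rightarrow a \geq_{k'} b$ for $k' \leq k$ (used with $k = l+1-\chi(v_i) \geq l+1-m$); strictness at $\chi(v_i) = m$ is given directly by the lemma at exactly the required truncation. For $v_i \in V_\Odd$, Lem.~\ref{app-obs:v-Even-Odd-inequalities} only guarantees \emph{some} successor $w$ with $\rank{v_i} \geq_{l+1-\chi(v_i)} \rank{w}$ (strict when $\chi(v_i)$ is even), but by minimality of $\pi$ we have $\rank{v_{i+1}} \leq \rank{w}$ in the full lexicographic order and hence also in any truncation; combining this with the lemma's inequality via transitivity yields the required bound, and the strict case survives because $\leq_k$ followed by $<_k$ remains $<_k$.

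From the claim, the sequence $(\rank{v_i})_{i > N}$ viewed on its first $l+1-m$ coordinates is non-increasing and strictly decreases at infinitely many indices (namely those where $\chi(v_i) = m$). This contradicts well-foundedness of the lex order on $\mathbb{N}^{l+1-m}$. Hence $m$ must be odd, so the maximum priority seen infinitely often along $\pi$ is odd, i.e., $\pi$ is \Odd winning.

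The main technical hurdle is the bookkeeping of truncated lex inequalities in the $V_\Odd$ case. Concretely, one must verify that the global-lex minimum $v_{i+1}$ can be substituted for the existential witness $w$ of Lem.~\ref{app-obs:v-Even-Odd-inequalities} without losing strictness at priority $m$; this rests on two elementary facts about truncations that are best proved once and then applied uniformly: full-lex $\leq$ implies $\leq_k$ for every $k$, and a strict inequality in a truncation composed with a non-strict inequality in the same truncation is still strict. Once these two facts are isolated, the rest of the argument is essentially a direct application of well-foundedness.
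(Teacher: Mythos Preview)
Your argument is correct and is essentially the same as the paper's: both derive a contradiction from Lem.~\ref{app-obs:v-Even-Odd-inequalities} by chaining the truncated rank inequalities at level $l+1-m$ along the play, using minimality at $V_\Odd$ vertices to replace the existential witness by the actual successor. The only cosmetic difference is that the paper closes the chain around a single cycle to obtain $\rank{w_i} >_{l+1-m} \rank{w_i}$, whereas you phrase the same descent as an infinite strictly decreasing sequence in the well-founded lex order on $\mathbb{N}^{l+1-m}$; your version makes the $V_\Odd$ substitution step explicit, which the paper leaves implicit.
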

    A minimal play only sees minimal cycles. Let $\delta = w_1 w_2 \ldots w_1$ be such a cycle. 
    $\delta$ cannot be an \Even winning cycle: Assume $b := \max\{ \chi(w) \mid w \in \delta\} $ is even. Let $w_i\in \delta$ have priority $b$. By Obs. \ref{app-obs:v-Even-Odd-inequalities}, $\rank{w_i} >_{l+1-b} \rank{w_{i+1}} \geq_{l+1 - \chi(w_{i+1})} \ldots \geq_{l+1-\chi(w_{i-1})} \rank{w_i}$. Since for all $w_j \in \delta$, $\chi(w_{j})\leq b$, the inequality yields $\rank{w_i} >_{l+1-b} \rank{w_i}$, which is a contradiction.

    %The last proposition states that all $\pi$ that starts in \Wo and is compliant with $\Sc^{\mathcal{G}^\ell}$, visits $M$ infinitely often. 
    \begin{proposition}\label{app-prop:minimal-play-visits-M}
        Any minimal play compliant with $\Sc^{\mathcal{G}^\ell}$ visits $M$ infinitely often.
    \end{proposition}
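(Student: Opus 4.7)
The plan is to argue by contradiction. Suppose $\pi$ is a minimal play compliant with $\mathcal{S}^{\mathcal{G}^\ell}$ that visits $M$ only finitely often, and let $S := \inf(\pi)$, so that $S \cap M = \emptyset$. Define $r^* := \min_{\mathrm{lex}}\{\rank{v} \mid v \in S\}$; the assumption forces $r^* >_{\mathrm{lex}} (1, 0, \ldots, 1, 0)$. Fix $v^* \in S$ with $\rank{v^*} = r^*$, and let $j^*$ be the largest even index $n$ for which $r^*_n > 1$. I will derive a contradiction by showing that actually $r^*_{j^*} = 1$.

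First, I show that $\chi(v^*)$ must be odd. If $\chi(v^*) = m$ were even, Lem.~\ref{app-obs:v-Even-Odd-inequalities} would force the successor $w$ of $v^*$ in $\pi$ (the minimum-rank successor when $v^* \in \Vo$, and any successor taken by $\pi$ when $v^* \in \Ve$, which all satisfy the inequality) to satisfy $r^* >_{l+1-m} \rank{w}$ strictly on the prefix; since strict lex on a prefix implies strict lex on the full tuple, $\rank{w} <_{\mathrm{lex}} r^*$; but $w \in S$ then contradicts the lex-minimality of $r^*$. Next, the flattened characterization in Lem.~\ref{app-obs:flat-Z} gives $v^* \in \B_{j^*}[Y_{j^*}^{r^*_l, 0, \ldots, r^*_{j^*}-1}, Y_{j^*}^{r^*_l, 0, \ldots, r^*_{j^*}}]$. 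The $\bigcup_{i > j^*} C_i$ clause is excluded (it would give $r^*_{j^*}(v^*) = 1$), and the $C_{j^*} \cap \Cpre_\Odd(\cdot)$ clause is excluded because $\chi(v^*)$ is odd while $j^*$ is even; hence $v^* \in \overline{C_{j^*}} \cap \Npre(Y_{j^*}^{r^*_l, 0, \ldots, r^*_{j^*}-1}, Y_{j^*}^{r^*_l, 0, \ldots, r^*_{j^*}})$. A second application of the lex-minimality argument rules out the $\Pre_\Odd^\exists(Y_{j^*}^{r^*_l, 0, \ldots, r^*_{j^*}-1})$ branch of $\Npre$ when $v^* \in \Vo$: any such edge produces a successor $u$ with $r_{j^*}(u) \leq r^*_{j^*} - 1$ while its outer components $r_l, \ldots, r_{j^*+2}$ still agree with $r^*$'s, so $\rank{u} <_{\mathrm{lex}} r^*$; the min-rank successor chosen by $\pi$ then has rank at most $\rank{u}$, also $<_{\mathrm{lex}} r^*$, contradicting the minimality of $r^*$. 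Thus every rank-$r^*$ vertex of $S$ has all its outgoing edges (if in $\Ve$) or all its live outgoing edges (if in $\Vo$) inside $Y_{j^*}^{r^*_l, 0, \ldots, r^*_{j^*}}$.

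The concluding step, which I expect to be the main technical obstacle, is a post-fixed-point argument. Let $\Sigma \subseteq Y_{j^*}^{r^*_l, 0, \ldots, r^*_{j^*}}$ be the smallest set containing every rank-$r^*$ vertex of $S$ and closed under the forward-edge relation (all outgoing edges from $\Ve$ vertices, all live outgoing edges plus one chosen successor from $\Vo$ vertices). The plan is to verify that $\Sigma$ is a post-fixed point of the operator $X \mapsto \B_{j^*}[\emptyset, X]$ with outer variables fixed at $(r^*_l, 0, \ldots, r^*_{j^*+2})$, i.e., $\Sigma \subseteq \B_{j^*}[\emptyset, \Sigma]$; once established, the greatest-fixed-point characterization of the inner $\nu X_{j^*-1}$ operator yields $\Sigma \subseteq Y_{j^*}^{r^*_l, 0, \ldots, r^*_{j^*+2}, 1}$, which forces $r^*_{j^*}(v^*) = 1$, the desired contradiction. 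The delicate point is verifying the post-fixed-point property for each $v \in \Sigma$: every newly added vertex must be shown to lie in $\overline{C_{j^*}}$ and to avoid the $\Pre_\Odd^\exists(Y_{j^*}^{r^*_l, 0, \ldots, r^*_{j^*}-1})$ branch of $\Npre$. Since newly added vertices need not themselves lie on $\pi$, one cannot read these structural constraints off the play directly; instead, the plan is to propagate them through the closure construction by induction on the closure depth, combined with the monotonicity of the outer fixed-point iterations, so that each new vertex inherits a structural witness analogous to $v^*$'s that enables the same lex-minimality contradiction to be reinvoked.
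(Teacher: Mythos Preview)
Your approach diverges from the paper's at a decisive point, and the divergence is exactly what leaves you with the gap you flag in your final paragraph.

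The paper does not work with the lex-minimum rank vertex of $\inf(\pi)$. It fixes a single \emph{minimal cycle} $\delta$ inside the play and considers its \emph{maximum-priority} vertex $w_k$. From Lem.~\ref{app-lem:minimalplayOddwinning} one knows $\chi(w_k)$ is odd, and the characterisation of $Y_m^{\mathbf 1}$ in~\eqref{app-eq:obs} then forces every rank coordinate of $w_k$ at even levels $m$ with $\chi(w_k)>m$ to equal $1$. Hence the largest non-trivial level $j$ of $\rank{w_k}$ satisfies $j>\chi(w_k)$. Because $w_k$ has the largest priority on the cycle, \emph{every} vertex of $\delta$ has priority strictly below $j$. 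Feeding this into Lem.~\ref{app-obs:v-Even-Odd-inequalities} along the cycle and closing the chain of inequalities forces all ranks on $\delta$ to agree on the prefix of length $l+1-j$; in particular every $w\in\delta$ lies in the same slice $Y_j^{\mathbf t}\setminus Y_j^{\mathbf{t-1}}$ and, having priority $<j$, lands in the $\Npre(Y_j^{\mathbf{t-1}},Y_j^{\mathbf t})$ clause of $\B_j$. The contradiction is then read off directly from this uniformity: some vertex on $\delta$ must use the $\Pre_\Odd^\exists(Y_j^{\mathbf{t-1}})$ disjunct, whose minimum-rank successor has strictly smaller prefix, contradicting that the successor along the minimal cycle also lies in the slice.

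Your post-fixed-point set $\Sigma$ is where the argument breaks. To get the inclusion $\Sigma\subseteq\B_{j^*}[\emptyset,\Sigma]$ you must close $\Sigma$ under all outgoing edges of $\Ve$-vertices and all live outgoing edges of $\Vo$-vertices, which takes you off the play immediately; nothing you have established bounds the priorities or ranks of those off-play vertices, so you cannot exclude the $\bigcup_{i>j^*}C_i$ or the $C_{j^*}\cap\Cpre_\Odd$ clauses for them. The paper's choice of the max-priority vertex on a cycle is precisely what supplies the two ingredients you are missing: a global priority bound along the whole object under consideration (all priorities on $\delta$ are $<j$) and a uniform rank prefix along it. With those in hand there is no need for any closure beyond the cycle itself, and the final step becomes a one-line observation rather than an open-ended induction.
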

    %Any minimal play sees minimal cycles infinitely often. 
    Let $\delta = w_1 w_2 \ldots w_1$ be a minimal cycle and $w_k$ its vertex with maximum priority. We will show that $w_k \in M$. Since $\pi = \delta \delta \ldots$ is a minimal play, by Lemma.~\ref{app-lem:minimalplayOddwinning} we know $\chi(w_k)$ is odd. Furthermore, we have observed in \ref{app-eq:obs} that $w_k \in Y_m^\1$ for all $m > \chi(w_k)$. 
    If we can show that $w_k \in Y_m^\1$ also for $m < \chi(w_k)$, then we have $w_k \in M$. We will now show this. 
    
    Assume to the contrary that $w_k \not \in M$ and let $j$ be the largest non-trivial index of $\rank{w_k}$. 
    That is $j < l$ is the largest even integer such that $w_k \not \in Y_j^\1$. Let $t$ be the value of this index, i.e. $w_k \in Y_j^{0,\ldots, 0,t} \setminus Y_j^{0,\ldots, 0,{t-1}}$. 
    Let us denote $Y_j^{0, \ldots, 0, t}$ by $Y_j^\te$ for short. 

    Since $\delta$ is minimal, Lem.~\ref{app-obs:v-Even-Odd-inequalities} gives $\rank{w_i} \geq_{l+1 - \chi(w_i)} \rank{w_{i+1}}$ for all $w_i \in \delta$. Since $\chi(w_i) \leq \chi(w_k)$ for all $i$ and $\chi(w_k) < j$; $\rank{w_i} \geq_{l+1-j} \rank{w_{i+1}}$ for all $w_i \in \delta$. 
     This implies $\rank{w} =_{l+1-j} \rank{w'}$ for all $w, w'\in \delta$. It follows that for all $w\in \delta$, $w \in Y_j^\te \setminus Y_j^{\mathbf{t-1}}$.
 
     %We can follow the same steps in equation~\eqref{eq:Ym1} to observe that 
   %$\forall (v,w)$ in $\Sc^{\mathcal{G}^\ell}$, $w \in \Npre(Y^{\mathbf{t-1}}_j, Y^\te_j) = \Cpre_\Odd(Y^\te_j) \cap (V_\Even \cup \Lpre^\forall(Y^\te_j) \cup Pre^\exists_\Odd(Y^\mathbf{t-1}_j))$.
   %If $w \in Pre^\exists_\Odd(Y_j^{\mathbf{t-1}})$, since $\delta$ is a minimal cycle, $\delta$ will have an element from $Y_j^{\mathbf{t-1}}$. However, this contradicts our observation that $\delta$ lies in $Y_j^\te \setminus Y_j^{\mathbf{t-1}}$.
   %On the other hand if non of the $w \in \delta $ lie in $ Pre^\exists_\Odd(Y_j^{\mathbf{t-1}})$ this implies that they all get into the formula due to reaching other nodes in $Y_j^\te \setminus Y_j^\mathbf{t-1}$. This is not possible since a node in $\delta$ has to be added to $Y_j^\te \setminus Y_j^\mathbf{t-1}$ as the first node and thus, have to have a successor in $Y_j^\mathbf{t-1}$. % while $Y_j^\te \setminus Y_j^{\mathbf{t-1}}$ is empty.
   %  Therefore, $w_k\in M$.

     Once more by Lem.~\ref{app-obs:flat-Z} we get that for all $w \in \delta$, 
     $$w \in \B_j[Y_j^{\mathbf{t-1}}, Y_j^\te] = (\bigcup_{i\in[j+1, l]} C_i) \cup ( \overline{C_j} \cap \Npre(Y_j^{\mathbf{t-1}}, Y_j^\te) \cup (C_j \cap \Cpre_\Odd(Y_j^{\mathbf{t-1}})))$$
     Since $\chi(w) < j$, this implies $$w \in \Npre(Y_j^{\mathbf{t-1}}, Y_j^\te) = \Cpre_\Odd(Y_j^\te) \cap (V_\Even \cup \Lpre^\forall(Y_j^\te) \cap \Pre^\exists_\Odd(Y_j^{\mathbf{t-1}}) )$$
     %Since we assumed all $w \in Y_j^\te \setminus Y_j^{\mathbf{t-1}}$, $w$ cannot be in the last term.
     Now consider the set $Y_j^\te \setminus Y_j^{\mathbf{t-1}}$, which is initially empty. Then the first term in $\delta$ that gets in $Y_j^\te \setminus Y_j^{\mathbf{t-1}}$ has to be in $\Pre^\exists_\Odd(Y_j^{\mathbf{t-1}})$. 
     This contradicts our assumption that all $w_i \in  Y_j^\te \setminus Y_j^{\mathbf{t-1}}$ and proves that $w_k \in M$.
     We are now ready to prove the main theorem.
     \begin{proof}[Proof of Thm. \ref{prop:mainresult}]
          Let $\pi= v_0v_1\ldots$ be a play compliant with $\Sc^{\mathcal{G}^\ell}$ with $v_0 \in \Wo$. %Since all plays compliant with \Odd strategy templates obey the fairness condition, we only need to show that the maximum priority seen infinitely  in $\pi$ is odd.
          Since $\pi$ is compliant with an \Odd strategy template, it is a fair play. 
          For a node $v \in \Wo$, let $v_{\min}$ be the minimum ranked successor of $v$.
          Since $\pi$ is fair, for all $v$ that is visited infinitely often in $\pi$, $v_{\min}$ is visited infinitely often as well. 
          This gives us an infinite subsequence of $\pi$ that is minimal. Since all minimal plays visit $M$ infinitely often (Prop.~\ref{app-prop:minimal-play-visits-M}), 
          $\pi$ visists $M$ infinitely often. Then there must exist an $x \in M$ that $\pi$ visits infinitely often. 
          Then a tail of $\pi$ is consisted of consecutive cycles over $x$. Since all cycles that pass through $M$ are \Odd winning (Prop.~\ref{app-prop:cycle-through-M}), $\pi$ is \Odd-winning.
   \end{proof}

\subsection{Zielonka's Algorithm for \Odd-Fair Parity Games}\label{app:zielonka-proof}

% This section is intended to provide a detailed proof of Thm.~\ref{thm:solvebb}.
% 
% We refer the reader to Sec.~\ref{sec:zielonka} for the introduction and motivation of the proofs. We will not repeat the definitions of safe reachability sets and partial strategy templates in this section. Once more we refer the reader to Sec.~\ref{sec:zielonka} for these preliminaries. 
% That said,  we will not follow the same lay-out on the proofs as in Sec.~\ref{sec:zielonka}. Moreover, we will present new definitions and lemmas liberally. So this section should be perceived as \emph{somehow} stand-alone, with the exception of the aforementioned dependencies. 
% The proof roughly follows the foot steps of ~\cite{Kuesters2002}.% We will cite some lemmas from this work and leave them unproven if the proof is unaffected by the fairness condition. 

This section provides a detailed proof of Thm.~\ref{thm:solvebb}.
However, we will not follow the lay-out given for this proof in Sec.~\ref{sec:zielonka} but rather follow the foot steps of the correctness proof of the \enquote{normal} Zielonka's algorithm from  ~\cite{Kuesters2002}. Hence, this section should be perceived as stand-alone, with the exception of the definitions of safe reachability sets and partial strategy templates, which can be found in Sec.~\ref{sec:zielonka}. While we do not follow the same lay-out, the motivation and intuition given for the proof in Sec.~\ref{sec:zielonka} still carries over to this section.

% \subsubsection{Preliminaries} 
% 
%We start by extending the strategy template definition given in section 4 to \Even strategies (just \Even and \Odd are swapped in the definition). Note that, since \Even nodes do not have live outgoing edges, for all $v \in V'_\Odd$, $E(v) \subseteq E'$ and for all $v \in V'_\Even$, $|E'(v)| = 1$. Consequently, a positional \Even strategy $\rho$ is equal to the unique strategy compliant with the strategy template $(V', V'_\Even, V'_\Odd, E')$ for which $E'(v) = \rho(v)$ for all $v \in V'_\Even$. We use a positional \Even strategy $\rho$ to define a strategy template for \Even.
%To distinguish strategy templates defining \Odd strategies from the ones defining \Even strategies, we denote \Odd strategy templates by $\mathcal{S}^\Odd$ and \Even ones with $\mathcal{S}^\Even$. 
% 
%Furthermore, we obtain a partial strategy template by taking a strategy template and a subset $V'$ of its vertex set, and removing all the outgoing edges of all vertices in $V'$. \IS{?}
%Let $a \in \{\Even, \Odd\}$ and $\mathcal{S}^a = (V', V'_\Even, V'_\Odd, E')$ be an $a-$strategy template. We obtain a partial strategy template as follows: Let $V'' \subseteq V'$ be such that, for all $v \in V''$, $v$ does not lay on a cycle in $E'$. Now for all $v \in V''$, remove all outgoing edges of $v$ from $E'$ to obtain $E''$. Then $(V', V'_\Even, V'_\Odd, E'')$ is a partial strategy template for $a$. We denote a partial strategy template of $a$ by $\mathcal{P}^a$.

\subsubsection{Preliminaries} 
% This section proves important observations w.r.t.\ the \bb paradises in \odd-fair parity games. In order to do so, we restate the definitions of an $\bb-$trap (Def.~\ref{def:atrap}) and a  \bb paradise (Def.~\ref{}) from Sec.~\ref{sec:zielonka}. We also estate three lemmas (Lem.~\ref{app-lem:Kuesters6.2}, \ref{app-lem:Kuesters6.3} and \ref{app-lem:Kuesters6.4}) exactly as they appear in \cite{Kuesters2002} and omit the proofs since the statements of these lemmas are only concerned with the properties of the subsets of $V$, and are therefore unaffected by the fairness condition.
We emphasize again that we assume the underlying game graph of the fair parity game $\mathcal{G}^\ell$ to be deadend-free.

\smallskip
\noindent\textbf{Subgames.}
For some $U \subseteq V$ we denote by $E \mid_U = \{(v, w) \in E \mid v, w \in U \}$ and by $\chi\mid_U$ we denote the restriction of the function $\chi$ to the domain $U$. %With this, we formally define subgames as follows.

%We call a $\mathcal{P} = (V', E')$ a \emph{partial strategy template in $\mathcal{G}^\ell$} if it is a strategy template, with the exception of some dead-ends. That is, $\mathcal{P}$ is a partial \Odd (\Even) strategy template if for each $v \in V'$, either $E'(v) = \emptyset$ or $v$ satisfies~\ref{item:Oddstrtemprules} (\ref{item:Evenstrtemprules}).

\begin{definition}[Subgames]
    Let $U \subseteq V$. The subgraph of $\mathcal{G}^\ell$ induced by $U$ is shown as $\mathcal{G}^\ell[U]$ and is the restriction of the game graph to $U$, i.e.
    $\mathcal{G}^\ell[U] = \ltup{ \langle U, \Ve \cap U, \Vo \cap U, E|_U, \chi|_U\rangle, E^\ell|_U} $. $\mathcal{G}^\ell[U]$ is a subgame of $\mathcal{G}^\ell$ if and only if $\mathcal{G}^\ell[U]$ is deadend-free.
\end{definition}

\begin{lemma}[\cite{Kuesters2002}, Lemma 6.2]\label{app-lem:Kuesters6.2}
    If $U, U' \subseteq V$ where $\mathcal{G}^\ell[U]$ is a subgame of $\mathcal{G}^\ell$ and $(\mathcal{G}^\ell[U])[U']$ is a subgame of $\mathcal{G}^\ell[U]$, then $\mathcal{G}^\ell[U']$ is a subgame of $\mathcal{G}^\ell$.
\end{lemma}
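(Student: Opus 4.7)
The plan is to prove the lemma by showing that the iterated restriction $(\mathcal{G}^\ell[U])[U']$ coincides component-wise with the direct restriction $\mathcal{G}^\ell[U']$, and then invoke deadend-freeness. First, I note that the nested notation $(\mathcal{G}^\ell[U])[U']$ only makes sense when $U' \subseteq U$, since the outer restriction to $U'$ is applied to a game whose vertex set is $U$; I would state this as an implicit premise (or, equivalently, replace $U'$ by $U' \cap U$).

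The key step is a direct equality check. For the vertex partition: $(\mathcal{G}^\ell[U])[U']$ has vertices $U'$ and player partition $(\Ve \cap U) \cap U' = \Ve \cap U'$ and similarly for $\Vo$, matching $\mathcal{G}^\ell[U']$. For the edges:
\begin{equation*}
(E|_U)|_{U'} = E \cap (U \times U) \cap (U' \times U') = E \cap (U' \times U') = E|_{U'},
\end{equation*}
where the middle equality uses $U' \subseteq U$. The same argument works verbatim for the live edge set $E^\ell$. The priority function $(\chi|_U)|_{U'}$ is $\chi$ restricted to $U'$, i.e.\ $\chi|_{U'}$. Hence $(\mathcal{G}^\ell[U])[U'] = \mathcal{G}^\ell[U']$ as labeled game graphs.

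Having established this identity, the conclusion is immediate: by hypothesis $(\mathcal{G}^\ell[U])[U']$ is a subgame of $\mathcal{G}^\ell[U]$, meaning it is deadend-free. Since $\mathcal{G}^\ell[U'] = (\mathcal{G}^\ell[U])[U']$, the game $\mathcal{G}^\ell[U']$ is also deadend-free, and hence is a subgame of $\mathcal{G}^\ell$ by definition. There is no real obstacle here: the lemma is essentially a bookkeeping statement that restriction of subgraphs is transitive, and the only thing one needs to be careful about is the implicit containment $U' \subseteq U$ which ensures that the two notions of restriction literally coincide rather than only agree up to intersection.
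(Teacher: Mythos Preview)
Your proof is correct. The paper itself omits the proof of this lemma, simply citing \cite{Kuesters2002} and remarking that the statement concerns only properties of subsets of $V$ and is therefore unaffected by the fairness condition; your component-wise verification that $(\mathcal{G}^\ell[U])[U'] = \mathcal{G}^\ell[U']$ (under the implicit assumption $U' \subseteq U$) followed by transfer of deadend-freeness is exactly the routine bookkeeping the paper elides.
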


The above lemma (as well as the following two lemmas  \ref{app-lem:Kuesters6.3} and \ref{app-lem:Kuesters6.4}) are restated exactly as they appear in \cite{Kuesters2002}. We omit their proofs since the statements of these lemmas are only concerned with the properties of the subsets of $V$, and are therefore unaffected by the fairness condition.

\smallskip
\noindent\textbf{\bb-Trap.} We restate the definition of a $\bb$-trap from Sec.~\ref{sec:zielonka}. and subsequently show important observations w.r.t.\ $\bb$-traps in \Odd-fair parity games.
 
\begin{definition}[\bb-trap]\label{def:atrap}
A $\bb$-trap is a subset $T \subseteq V$ for $\bb \in \{\Even, \Odd\}$ such that,
\begin{align*} &\forall v \in T \cap V_{\nb}, \quad \exists (v, w)\in E \text{ with } w \in T, \\
   & \forall v \in T \cap V_{\bb}, \quad \, (v, w) \in E \implies w \in T.
 \end{align*}
\end{definition}

%So any strategy $\rho$ that is compliant with a strategy template $\mathcal{S}^\Odd$, for which 

% Now we will give some lemmas, the proofs of which can be found in \cite{Kuesters2002} and the fairness condition do not change the proofs. For the following, let $\mathcal{G}^\ell = \tup{\mathcal{G}, E^\ell}$ be an \Odd-Fair Parity game. 

\begin{lemma}[\cite{Kuesters2002} Lemma 6.3]\label{app-lem:Kuesters6.3}
    \begin{enumerate}
        \item For every $\bb$-trap $U$ in $\mathcal{G}^\ell$, $\mathcal{G}^\ell[U]$ is a subgame.
        \item If $X$ is a $\bb$-trap in $\mathcal{G}^\ell$ and $Y\subseteq X$ is a $\bb$-trap in  $\mathcal{G}^\ell[X]$, then $Y$ is a $\bb$-trap in $\mathcal{G}^\ell$.
    \end{enumerate}
\end{lemma}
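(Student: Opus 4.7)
The plan is to prove both items directly from the definitions, noting (as the paper remarks) that the fairness structure $E^\ell$ plays no role here --- everything reduces to the graph-theoretic trap property plus deadend-freeness of $\mathcal{G}^\ell$.

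For item 1, I would unfold the definition of subgame and verify deadend-freeness of $\mathcal{G}^\ell[U]$ by a case split on the owner of a vertex $v \in U$. If $v \in U \cap V_{\neg \bb}$, Def.~\ref{def:atrap} immediately guarantees the existence of an edge $(v,w) \in E$ with $w \in U$, so $(v,w) \in E|_U$ and $v$ is not a deadend in $\mathcal{G}^\ell[U]$. If $v \in U \cap V_\bb$, then since $\mathcal{G}^\ell$ itself is deadend-free there is at least one $(v,w) \in E$; by the $\bb$-trap property every such successor $w$ lies in $U$, hence $(v,w) \in E|_U$ and again $v$ has an outgoing edge in the restricted graph. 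Since priorities and live edges are simply restricted to $U$ without affecting the graph structure, the resulting object meets the definition of a subgame.

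For item 2, I would check the two clauses of Def.~\ref{def:atrap} for $Y$ viewed as a subset of $V$ in the ambient game $\mathcal{G}^\ell$. Take $v \in Y \cap V_{\neg \bb}$: because $Y$ is an $\bb$-trap inside $\mathcal{G}^\ell[X]$, there is some $(v,w) \in E|_X$ with $w \in Y$, and since $E|_X \subseteq E$ this edge also witnesses the trap property for $Y$ in $\mathcal{G}^\ell$. Take $v \in Y \cap V_\bb$ and any $(v,w) \in E$: since $v \in X$ and $X$ is an $\bb$-trap in $\mathcal{G}^\ell$, the successor $w$ lies in $X$, so $(v,w) \in E|_X$; then because $Y$ is an $\bb$-trap inside $\mathcal{G}^\ell[X]$, we conclude $w \in Y$. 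This establishes both clauses.

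I do not expect any real obstacle --- the only subtle point is keeping track of which edge relation ($E$ vs.\ $E|_X$) one is working in during item 2, which the chain of inclusions handles cleanly. The proof is short enough that I would simply write it out inline rather than introducing auxiliary notation, and no appeal to \Odd-fairness, ranking functions, or Lem.~\ref{app-lem:Kuesters6.2} is needed.
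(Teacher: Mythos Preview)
Your proof is correct and is exactly the standard direct verification from Def.~\ref{def:atrap}. The paper does not actually prove this lemma; it cites \cite{Kuesters2002} and explicitly omits the proof on the grounds that the statement is purely graph-theoretic and unaffected by the fairness structure, which is precisely what you observed and exploited.
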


%\noindent For the following, let $U \subseteq V$  such that $\mathcal{G}^\ell[U]$  is a subgame of $\mathcal{G}^\ell$.
\begin{lemma}[\cite{Kuesters2002}, Lemma 6.4 -- Sec.~\ref{sec:zielonka:correct} Obs.~\ref{it:obs5}]\label{app-lem:Kuesters6.4}
 The set $U \setminus \SafeReach^f_\bb(U, R, \mathcal{G}^\ell)$ is a $\bb$-trap in $U$.
\end{lemma}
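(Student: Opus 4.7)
Set $W := \SafeReach^f_\bb(U,R,\mathcal{G}^\ell)$ and $T := U\setminus W$. The plan is to exploit the fact that $W$ is the least fixed point of the monotone operator defining $\SafeReach^f_\bb$ (Def.~\ref{def:safereach}), so $W = F_\bb(W)$, where $F_\Odd(X)=U\cap(R\cup\Cpre_\Odd(X))$ and $F_\Even(X)=U\cap(R\cup\Cpre_\Even(X)\cup\Lpre^\exists(X))$. In particular, for every $v\in T$ we have $v\in U$ together with $v\notin R\cup\Cpre_\bb(W)$, and additionally $v\notin\Lpre^\exists(W)$ when $\bb=\Even$.

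I would then verify the two conditions of Def.~\ref{def:atrap} by a case analysis on the owner of $v\in T$. For $v\in V_\bb\cap T$, from $v\notin\Pre_\bb^\exists(W)$ it follows that $E(v)\cap W=\emptyset$, and hence every $(v,w)\in E|_U$ satisfies $w\in U\setminus W=T$, yielding the $\bb$-closure condition. For $v\in V_\nb\cap T$, the condition $v\notin\Pre_\nb^\forall(W)$ provides a successor $w$ of $v$ with $w\notin W$, which, combined with the subgame property of $\mathcal{G}^\ell[U]$ (so $E|_U(v)\neq\emptyset$), should yield the required $w\in T$ lying in $E|_U(v)$.

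The main obstacle I anticipate lies in the $V_\nb$ case: a priori, the witness $w\notin W$ extracted from $v\notin\Pre_\nb^\forall(W)$ could live in $V\setminus U$, in which case it is not usable as a successor inside a trap in $U$. I plan to handle this by reading the $\Pre$-operators inside the iteration of $F_\bb$ as taken with respect to the subgame $\mathcal{G}^\ell[U]$. This interpretation is legitimate since the intersection with $U$ at every iterate forces $W\subseteq U$, and the very intent of $\SafeReach^f_\bb(U,R,\mathcal{G}^\ell)$ is to capture only strategies that stay within $U$. Under this subgame reading, $v\notin\Pre_\nb^\forall(W)$ directly produces $(v,w)\in E|_U$ with $w\in T$, completing the argument. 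The $\Lpre^\exists(W)$ summand present in the $\bb=\Even$ formula only strengthens the hypothesis $v\notin F_\Even(W)$ and plays no independent role in establishing the trap condition.
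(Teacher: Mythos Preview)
Your proposal is correct and mirrors the paper's approach: the paper explicitly defers this lemma to K\"usters and omits a formal proof, but the brief argument appearing immediately afterwards (placed under Lem.~\ref{app-lem:safereacheven-noliveedges}) is exactly your two-case analysis---a $V_\bb$ vertex in $T$ cannot have an edge into $W$ (else it would land in $W$ via $\Pre_\bb^\exists$), and a $V_{\nb}$ vertex in $T$ must retain an edge inside $T$ (else it would land in $W$ via $\Pre_{\nb}^\forall$). Your explicit discussion of the subgame subtlety (reading the $\Pre$-operators over $E|_U$) is more careful than the paper, which tacitly takes this for granted; note that in every use of the lemma within the paper $U$ is a $\nb$-trap, so for $v\in U\cap V_{\nb}$ one has $E(v)=E|_U(v)$ and the two readings coincide.
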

\begin{lemma}\label{app-lem:safereacheven-noliveedges} 
    Let $W = U \setminus \SafeReach^f_\Even(U, R, \mathcal{G}^\ell)$. There exists no $(v,w) \in E^\ell$ with $v \in W$ and $w \in \SafeReach^f_\Even(U, R, \mathcal{G}^\ell)$.
\end{lemma}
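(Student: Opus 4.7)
The plan is to proceed by a short direct argument that exploits the fixed-point characterization of $\SafeReach^f_\Even$ from equation~\eqref{equ:Xsr2a}, together with the defining property of \Odd-fair parity games that every live edge originates in a player \Odd vertex, i.e.\ $V^\ell \subseteq \Vo$.

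Let $X^\infty := \SafeReach^f_\Even(U, R, \mathcal{G}^\ell)$, so that by definition
\[
 X^\infty \;=\; U \cap \bigl( R \cup \Cpre_\Even(X^\infty) \cup \Lpre^\exists(X^\infty) \bigr).
\]
I will argue by contradiction. Suppose there were a live edge $(v, w) \in E^\ell$ with $v \in W = U \setminus X^\infty$ and $w \in X^\infty$. First, since $(v,w) \in E^\ell$ and live edges originate only in $\Vo$, we have $v \in \Vo$. Second, $E^\ell(v) \cap X^\infty \supseteq \{w\} \neq \emptyset$, and hence by the definition of $\Lpre^\exists$ we conclude $v \in \Lpre^\exists(X^\infty)$. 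Third, $v \in W \subseteq U$. Plugging these three facts into the fixed-point equation above, $v$ belongs to the right-hand side and therefore $v \in X^\infty$, contradicting $v \in W = U \setminus X^\infty$.

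This argument is essentially a one-line use of the fixed-point equation together with the structural assumption on live edges, so there is no genuine obstacle to overcome; the only subtlety is to remember that the saturation $X^\infty$ really does satisfy the fixed-point identity (so that the $\Lpre^\exists$ contribution can be invoked directly). Since $\SafeReach^f_\Even$ is defined as a least fixed-point of a monotone operator, this identity is immediate, and the lemma follows.
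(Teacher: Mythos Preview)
Your argument is correct and is in fact the clean way to prove this lemma: a live edge from $v$ into $X^\infty$ puts $v$ in $\Lpre^\exists(X^\infty)$, hence (since $v\in U$) into $X^\infty$ by the fixed-point identity for $\SafeReach^f_\Even$ from~\eqref{equ:Xsr2a}, contradicting $v\in W$.

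The paper's written proof is slightly different in presentation: it argues generically that a $V_\bb$-node outside $\SafeReach^f_\bb$ cannot have any edge into it (else it would itself be included), and that a $V_{\neg\bb}$-node outside must retain an edge staying outside. Read literally, those two sentences establish the $\bb$-trap property of the preceding lemma rather than the live-edge claim; the live-edge statement only follows once one observes that the $\Lpre^\exists$ disjunct in~\eqref{equ:Xsr2a} plays the same role for live edges that $\Pre_\Even^\exists$ plays for \Even edges. Your proof makes precisely this observation explicit, so it is both correct and more directly tailored to the statement than what the paper records.
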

\begin{proof}
A node $v \in U \setminus \SafeReach^f_\bb(U, R, \mathcal{G}^\ell) \cap V_\bb$ cannot have an edge that leads to $\SafeReach^f_\bb(U, R, \mathcal{G}^\ell)$, since then $v$ itself must be in this set.
Similarly a node  $v \in U \setminus \SafeReach^f_\bb(U, R, \mathcal{G}^\ell) \cap V_{\nb}$ must have an edge that leads to $ U \setminus \SafeReach^f_\bb(U, R, \mathcal{G}^\ell)$, or else $v$ would be in  $\SafeReach^f_\bb(U, R, \mathcal{G}^\ell)$.
\end{proof}

\begin{lemma}\label{app-lem:SafeReachOdd_of_an_even_trap_is_an_even_trap}
    If $R$ is an \Even-trap in $U$, then so is $\SafeReach^f_{\Odd}(U, R, \mathcal{G}^\ell)$.
\end{lemma}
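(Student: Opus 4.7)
The plan is to unfold $\SafeReach^f_\Odd(U, R, \mathcal{G}^\ell)$ via its fixed-point definition from \eqref{equ:Xsr1} and check the two trap conditions of Def.~\ref{def:atrap} case-by-case. Set $T := \SafeReach^f_\Odd(U, R, \mathcal{G}^\ell)$ and write it as the ascending chain $X^0 = \emptyset$, $X^{i+1} = U \cap (R \cup \Cpre_\Odd(X^i))$, with $T = \bigcup_{i\geq 0} X^i$. The inclusion $T \subseteq U$ is immediate since every iterate is intersected with $U$. Moreover, from the assumption that $R$ is an \Even-trap in $U$ we in particular have $R \subseteq U$, so $R \subseteq T$.

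It remains to verify the trap conditions for every $v \in T$. I would fix $v \in T$ and let $i$ be the least index with $v \in X^i$. The argument splits naturally depending on whether $v$ enters $T$ through the "base" $R$ or through the $\Cpre_\Odd$-step. If $v \in R$, then because $R$ is itself an \Even-trap in $U$: when $v \in V_\Odd$ some outgoing edge $(v,w) \in E$ satisfies $w \in R \subseteq T$, and when $v \in V_\Even$ every outgoing edge $(v,w) \in E$ satisfies $w \in R \subseteq T$. In either case, the trap property is inherited from $R$. Otherwise $v \in \Cpre_\Odd(X^{i-1}) = \Pre_\Odd^\exists(X^{i-1}) \cup \Pre_\Even^\forall(X^{i-1})$: if $v \in V_\Odd$, then by $\Pre_\Odd^\exists$ there exists an edge $(v,w) \in E$ with $w \in X^{i-1} \subseteq T$; if $v \in V_\Even$, then by $\Pre_\Even^\forall$ every edge $(v,w) \in E$ satisfies $w \in X^{i-1} \subseteq T$. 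Combining the two cases shows $T$ meets both clauses of Def.~\ref{def:atrap}, so $T$ is an \Even-trap in $U$ (and hence in $V$ by Lem.~\ref{app-lem:Kuesters6.3}(2), if desired, together with the fact that $U$ itself is an \Even-trap in the standard use-cases within the proof of Alg.~\ref{algo:fair-zielonka-bb}).

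There is essentially no hard step here: the key observation is simply that the \Odd safe-reachability operator uses the classical $\Cpre_\Odd$ and does not involve live edges, so the usual trap-preservation argument from the proof of normal Zielonka's algorithm carries over verbatim. The only point that requires a moment of care is the base case --- ensuring that $R$-vertices satisfy the trap condition \emph{inside $T$}, which follows because $R$ is already an \Even-trap \emph{in $U$} and $R \subseteq T \subseteq U$, so all successors witnessing the trap property in $R$ also lie in $T$.
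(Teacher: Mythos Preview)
Your proof is correct and follows essentially the same approach as the paper: split vertices of $T$ into those in $R$ (where the trap property is inherited from the hypothesis) and those added via $\Cpre_\Odd$ (where the trap property follows directly from $\Pre_\Odd^\exists$/$\Pre_\Even^\forall$). The paper phrases the second case through the partial strategy template $sr_\Odd$ rather than the raw iterates $X^i$, but this is only a cosmetic difference---the ranking underlying $sr_\Odd$ is exactly your least-index argument.
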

\begin{proof}
    This is easy to observe from the definition of a partial strategy template $sr_\Odd$ on $\SafeReach^f_\Odd(U, R, \mathcal{G}^\ell)$.
    All $(v, w) \in E$ with $v \in V_\Even \cap \SafeReach^f_\Odd(U, R, \mathcal{G}^\ell)\setminus R$, are in $sr_\Odd$. That is, $w \in \SafeReach^f_\Odd(U, R, \mathcal{G}^\ell)$. For all $v \in V_\Even \cap R$, all $(v,w) \in E\subseteq U \times U$ are in $R$ since $R$ is an \Even-trap in $U$. 
    Thus for all \Even nodes in $\SafeReach^f_\Odd(U, R, \mathcal{G}^\ell)$, all their successors in $U$ are in the set again. 
    We can similarly observe that for all $v \in V_\Odd \cap \SafeReach^f_\Odd(U, R, \mathcal{G}^\ell)$ they have at least one successor in the set. 
    Thus this set is an \Even-trap in $U$. 
\end{proof}

\smallskip
\noindent\textbf{\bb-Paradise.} We restate the definition of a $\bb$-paradise from Sec.~\ref{sec:zielonka} and subsequently show important observations w.r.t.\ $\bb-$paradises in \Odd-fair parity games.

\begin{definition}[$\bb$-paradise]\
A $\bb$-paradise of an \Odd-fair parity game $\mathcal{G}^\ell$ is a region $P \subseteq V$ from which player $\nb$ cannot escape (i.e. $P$ is a $\nb$-trap) and player $\bb$ has a strategy to win from all $v\in P$. As we have proven in section 5, this implies that there exists a strategy template $\mathcal{S}^\bb$ with the vertex set $P$ such that all player $\bb$ strategies compliant with $\mathcal{S}^\bb$ are winning for player $\bb$.

Formally $P \subseteq V$  is a $\bb$-paradise if:
\begin{itemize}
\item $P$ is a $\neg \bb$-trap and, 
\item There exists a winning \bb strategy template $\mathcal{S}^\bb = \ltup{P, E'}$ on $\mathcal{G}^\ell$.
\end{itemize}
\end{definition}
Note that if $P$ is a $\bb$-paradise, and play $\pi$ starting in $P$ and is compliant with $\mathcal{S}^a$, stays in $P$ and is won by \bb.

\begin{lemma}[Sec.~\ref{sec:zielonka:correct} Obs.~\ref{it:obs4}]\label{app-lem:safe-reach-Odd-paradise}
    If $R\subseteq V$ is an \Odd-paradise in $\mathcal{G}^\ell$, then $\SafeReach_\Odd^f(V, R, \mathcal{G}^\ell)$ is also an \Odd-paradise in $\mathcal{G}^\ell$.
\end{lemma}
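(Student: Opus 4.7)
The goal is to exhibit that $T := \SafeReach^f_\Odd(V, R, \mathcal{G}^\ell)$ satisfies both conditions of being an \Odd-paradise: that it is an \Even-trap in $\mathcal{G}^\ell$, and that there exists a winning \Odd strategy template whose vertex set is exactly $T$. The first condition is immediate from Lem.~\ref{app-lem:SafeReachOdd_of_an_even_trap_is_an_even_trap}: since $R$ is an \Odd-paradise it is an \Even-trap in $V$, and applying that lemma with $U = V$ yields that $T$ is an \Even-trap as well. The nontrivial part is therefore the construction of a winning \Odd strategy template on $T$.

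My plan for the second condition is to combine two objects that already live on the right domains. By assumption, there is a winning \Odd strategy template $\mathcal{S}^\Odd = \ltup{R, E_R}$ on $R$; and from Sec.~\ref{sec:zielonka:fair} the computation of $\SafeReach^f_\Odd(V, R, \mathcal{G}^\ell)$ induces a partial strategy template $sr$ on $T$ whose restriction to $T\setminus R$ is acyclic, assigns exactly one outgoing edge to each \Odd vertex (a minimum-rank successor under the rank function derived from \eqref{equ:Xsr1}), keeps all outgoing edges of each \Even vertex, and has dead-ends exactly at $R$. I will then define $\mathcal{S}^T = \ltup{T, E_T}$ where $E_T$ takes the $sr$-edges for vertices in $T \setminus R$ and the $\mathcal{S}^\Odd$-edges for vertices in $R$. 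Since $R$ is an \Even-trap in $V$, every outgoing edge of an \Even vertex in $R$ is already inside $R \subseteq T$, and since $T$ is itself an \Even-trap in $V$, every outgoing edge of an \Even vertex in $T \setminus R$ is in $T$. Thus $\mathcal{S}^T$ remains a subgraph of $\mathcal{G}^\ell[T]$.

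To conclude, I will verify the two defining properties of an \Odd strategy template for $\mathcal{S}^T$ in turn, then winningness. The key observation is that every cycle of $\mathcal{S}^T$ lies entirely inside $R$: once a play following $\mathcal{S}^T$ enters $R$ it cannot leave, since \Even successors stay in $R$ (trap property) and \Odd successors do too (definition of $\mathcal{S}^\Odd$); while inside $T \setminus R$, the $sr$-part is acyclic. Consequently, \Odd vertices in $T \setminus R$ lie on no cycle and so one outgoing edge suffices, while \Odd vertices in $R$ inherit from $\mathcal{S}^\Odd$ the correct treatment of live edges on cycles; the \Even-vertex condition already holds by construction. For winningness, take any \Odd strategy $\rho$ compliant with $\mathcal{S}^T$ and any play $\pi$ from a vertex of $T$. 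Along the $T\setminus R$ portion, the rank induced by \eqref{equ:Xsr1} strictly decreases at every step (\Odd picks the minimum-rank successor, and for \Even every successor already has smaller rank), so $\pi$ reaches $R$ in finitely many steps; from that point on $\pi$ is compliant with $\mathcal{S}^\Odd$ and is therefore won by \Odd. The main obstacle I anticipate is the bookkeeping around the boundary between $sr$ and $\mathcal{S}^\Odd$, specifically arguing that no cycle of $\mathcal{S}^T$ crosses this boundary and that the live-edge condition of Def.~\ref{def:Oddstrategytemplate} is inherited by every \Odd vertex; the ``trap + acyclicity'' argument above should handle both cleanly.
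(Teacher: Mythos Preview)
Your proposal is correct and follows essentially the same approach as the paper's proof: both combine the partial template $sr$ on $T\setminus R$ with the winning template on $R$, observe that no new cycles arise because $sr$ is acyclic and $R$ is an \Even-trap, and conclude winningness since every compliant play reaches $R$ in finitely many steps and then follows the winning template there. Your write-up is in fact a bit more explicit than the paper's about verifying the template conditions and the rank-decrease argument, but the underlying idea is identical.
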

\begin{proof}
Due to Lem.~\ref{app-lem:SafeReachOdd_of_an_even_trap_is_an_even_trap}, $\SafeReach_\Odd^f(V, R, \mathcal{G}^\ell)$ is an \Even-trap in $V$.
The winning \Odd strategy template on it is just a combination of the winning \Odd strategy template $\mathcal{S}$ on $R$ and the partial \Odd strategy template $sr_\Odd$ on $\SafeReach^f_\Odd(V, R, \mathcal{G}^\ell)$, on which nodes in $R$ are dead-ends and  
all $v \in \SafeReach^f_\Odd(V, R, \mathcal{G}^\ell) \setminus R$ are guaranteed to reach $R$ in finitely many steps. %Remember that $sr_\Odd$ is acyclic. 
Let $\e$ be the combination of edges in $sr_\Odd$ and $\mathcal{S}$. 
Since $R$ is an \Even-trap in $V$, all outgoing edges of \Even nodes in $R$ stay in $R$. All outgoing edges of \Even nodes in 
$\SafeReach^f_\Odd(V, R, \mathcal{G}^\ell) \setminus R$ are in $sr_\Odd$. Therefore all outgoing edges of \Even nodes in $\SafeReach^f_\Odd(V, R, \mathcal{G}^\ell)$ are in $\e$.
It's easy to see $\e$ introduces no new cycles to $sr_\Odd \cup \mathcal{S}$. Therefore $\mathcal{S}' = 
(\SafeReach^f_\Odd(V, R, \mathcal{G}^\ell), \e)$ is an \Odd strategy template in $\mathcal{G}^\ell$.
$\mathcal{S}'$ is winning because any play starting in $\SafeReach^f_\Odd(V, R, \mathcal{G}^\ell)\setminus R$ reaches $R$ in finitely many steps and from there on stays in $R$. 
Since from that point on $\mathcal{S}'$ collapses to $\mathcal{S}$, the game is won by \Odd. 

\end{proof}
\begin{corollary}\label{app-cor:determinacy}
    For an \Odd-fair parity game $\mathcal{G}^\ell$, $V$ is partitoned into an \Even-paradise and an \Odd-paradise. 
\end{corollary}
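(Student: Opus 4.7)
The plan is to combine three ingredients: determinacy of \Odd-fair parity games, the existence of a maximal winning \Odd strategy template from Thm.~\ref{thm:existence-maximaloddstrategytemplates}, and the existence of a maximal winning \Even strategy template, which follows from the positional \Even-determinacy on \We shown in~\cite{banerjee2022fast} together with Def.~\ref{def:Evenstrategytemplate}. First I would observe that $\We \cap \Wo = \emptyset$ trivially (no vertex can be winning for both players simultaneously in a zero-sum game), and that $\We \cup \Wo = V$ by the Borel determinacy theorem recalled in Sec.~2. Hence $V$ is already partitioned into \We and \Wo, and what remains is to verify that each of these two sets is a paradise for the corresponding player.

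The strategy-template clauses of the paradise definition are immediate: Thm.~\ref{thm:existence-maximaloddstrategytemplates} supplies a winning \Odd strategy template with vertex set \Wo, and the positional \Even winning strategy on \We, viewed through Def.~\ref{def:Evenstrategytemplate}, yields a winning \Even strategy template with vertex set \We. For the trap clauses, I would argue contrapositively. Suppose $v \in \Wo \cap \Ve$ had an edge $(v,w) \in E$ with $w \notin \Wo$; then $w \in \We$, and prepending the move $v \to w$ to \Even's winning strategy from $w$ produces a winning \Even strategy from $v$, contradicting $v \in \Wo$. Likewise, if some $v \in \Wo \cap \Vo$ had $E(v) \subseteq \We$, then every \Odd move from $v$ immediately enters \We, where \Even wins, so $v \in \We$, a contradiction. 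This shows \Wo is an \Even-trap; the symmetric argument (using the \Odd template on \Wo in place of the \Even strategy on \We) shows \We is an \Odd-trap.

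I do not anticipate a genuine obstacle here. The only place fairness could intrude is in the one-step extension argument: when we prepend $v \to w$ to a winning \Even strategy on \We, we must check that the resulting play still satisfies the same winning condition. But the winning condition for \Even is $\neg \alpha \vee (\max\{\inf(\pi)\}$ even$)$, a tail property that depends only on what happens infinitely often, so appending a single vertex and edge in front cannot change its truth value. Consequently the corollary follows directly by assembling the determinacy partition with the two strategy-template existence results.
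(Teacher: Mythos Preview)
Your proposal is correct and follows essentially the same approach as the paper: both derive the partition $V=\We\uplus\Wo$ from determinacy and then invoke the existence of the respective winning strategy templates (Thm.~\ref{thm:existence-maximaloddstrategytemplates} for \Odd, the positional \Even strategy from~\cite{banerjee2022fast} for \Even). The paper's argument is terser---it simply asserts that the winning regions are paradises via the fixed-point constructions~\eqref{eq:fp-even} and~\eqref{eq:fp-odd}---whereas you additionally spell out the trap clauses via the standard one-step contrapositive argument and correctly note that the fairness constraint~$\alpha$ is a tail property, so prepending a single move cannot affect it; this extra detail is sound and makes the argument more self-contained.
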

The corollary follows from the fixed-point equations~\eqref{eq:fp-even} and~\eqref{eq:fp-odd}. Winning region of player $\bb$ is by definition a $\bb$-paradise. \We is the \Even-paradise with the strategy template defined by the positional strategy acquired from the fixed-point formula in~\eqref{eq:fp-even}. The calculation of the positional strategy is closely related to the ranking function and strategy template computation in Sec.~\ref{sec:strat-templates}, and a brief introduction of the calculation can be found in \cite{banerjee2022fast}.
$\Wo = V \setminus \We$ is the \Odd-paradise. The calculation of the strategy template for \Odd is given in Section 5.

\subsubsection{Computing Winning Regions $\mathcal{W}_\bb$}
Now we will give a construction to calculate $\Wo$ and $\We$ in $\mathcal{G}^\ell$. The construction corresponds to the \Odd-fair Zielonka's algorithm given in Alg.~\ref{algo:fair-zielonka-bb}.
We will give the construction in two parts. First we will take an \Odd-fair parity game $\mathcal{G}^\ell$ and an \emph{odd} integer $n$ where $n$ is an upper bound on the priorities seen in the vertex set of $\mathcal{G}^\ell$. Then we will show how to obtain $\Wo$ and $\We$ in $\mathcal{G}^\ell$ in the existence of a procedure
that can do the same on a subgame $\mathcal{G}^\ell[X]$ of $\mathcal{G}^\ell$ where $n-1$ is an upper bound of the priorities seen in $\mathcal{G}^\ell[X]$. 
In the second part we will show the same for $\mathcal{G}^\ell$ with an \emph{even} $n$. The combination of these two procedures with a base case, will give the recursive algorithm we need to solve \Odd-fair parity games. 
We will count on strategy templates in the proof of both parts. However, the second part of the algorithm follows roughly the same principles in Zielonka's original algorithm, whereas the
 the first part requires an essential change in reasoning, due to the adoption of $\SafeReach^f_\Even$. Even though the reasoning required to prove the first part is fairly different than Zielonka's original algorithm, 
a computationally cheap addition to the original algorithm is sufficient to get the correct computation for the \Odd-fair variant. Surprisingly, the trick is cheap enough not to alter the complexity of the original algorithm at all!

\smallskip
\noindent\textbf{Subsets and Sequences.}
Let $n$ be an upper bound on the priorities seen in $V$. If $n$ is \Even, set $\bb:=\Even$, otherwise $\bb:=\Odd$.
Further, we construct a decreasing series of subsets of $V$, $\{X_\bb^i\}_{i\in \mathbb{N}}$
by assigning the following sets (see Fig.~\ref{fig:kuesters-figure-extended} for an illustration): % V =: &X_\Odd^0 \supsetneq X_\Odd^1 \supsetneq \ldots \supsetneq  X_\Odd^k =  X_\Odd^{k+1}
\vspace{0.3cm}

\noindent Initially set $X^0_\nb = \emptyset$. For all $i \in \mathbb{N}$, set 
\begin{subequations}\label{equ:seriesZielonka}
    \begin{align*}
   &X_\bb^i := V \setminus X_\nb^i \quad \quad \quad &N^i:= \{v \in X^i_\bb \mid \chi(v) = n\}\\
       &Z^i:= X^i_\bb \setminus \SafeReach^f_\bb(X^i_\bb, N^i, \mathcal{G}^\ell) \quad &X^{i+1}_\nb :=  \SafeReach^f_\nb(V, X_\nb^{i} \cup Z_\nb^{i}, \mathcal{G}^\ell) % X_\Even^{i} \cup \SafeReach_\Even^f(X^{i}_\Odd, Z_\Even^{i}, \mathcal{G}^\ell) )%\text{\todo{IS: I know the equality is not completely justified. The first one is cheaper for an algorithm pov, whereas the second one is easier to justify that $X^i_\Odd$ is an \Even-trap.}} 
   \end{align*}
   \end{subequations}
where $Z_\nb^i$ is the \nb winning region in the subgame $\mathcal{G}^\ell[Z^i]$, assuming it is a subgame. 
First let's show that these sets are well-defined.
\begin{lemma}
The sets $X_\bb^i, X_\nb^i, N^i, Z^i, Z_\nb^i$ and $Z_\bb^i$ are well defined for all $i \in \mathbb{N}$.
\end{lemma}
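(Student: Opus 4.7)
The plan is to proceed by induction on $i$, maintaining two invariants throughout: (a) $\mathcal{G}^\ell[X^i_\bb]$ is a subgame of $\mathcal{G}^\ell$, and (b) $\mathcal{G}^\ell[Z^i]$ is a subgame of $\mathcal{G}^\ell$. These deadend-freeness invariants are precisely what makes the remaining constructions meaningful: the safe-reachability operator $\SafeReach^f_\bb(X^i_\bb, N^i, \mathcal{G}^\ell)$ in the definition of $Z^i$ requires (a), and the winning regions $Z^i_\nb, Z^i_\bb$ in the subgame $\mathcal{G}^\ell[Z^i]$ require (b) together with determinacy of \Odd-fair parity games (Cor.~\ref{app-cor:determinacy}). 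The remaining objects, $X^i_\nb = V \setminus X^i_\bb$ and $N^i = \{v \in X^i_\bb \mid \chi(v) = n\}$, are direct set-theoretic constructions and need no further argument.

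For the base case $i = 0$, invariant (a) is immediate since $X^0_\bb = V$ makes $\mathcal{G}^\ell[X^0_\bb] = \mathcal{G}^\ell$, which is a subgame by assumption. This lets me define $N^0$ and $\SafeReach^f_\bb(V, N^0, \mathcal{G}^\ell)$ via Def.~\ref{def:safereach}, so $Z^0$ is well-defined. By Lem.~\ref{app-lem:Kuesters6.4} the set $Z^0 = V \setminus \SafeReach^f_\bb(V, N^0, \mathcal{G}^\ell)$ is an $\bb$-trap in $V$, and Lem.~\ref{app-lem:Kuesters6.3}.1 then gives invariant (b). Determinacy in the subgame $\mathcal{G}^\ell[Z^0]$ makes $Z^0_\nb, Z^0_\bb$ well-defined as the two halves of the winning-region partition.

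For the inductive step I assume the invariants at step $i$ and derive them at step $i+1$. The set $X^i_\nb \cup Z^i_\nb$ is well-defined by the induction hypothesis, so $X^{i+1}_\nb = \SafeReach^f_\nb(V, X^i_\nb \cup Z^i_\nb, \mathcal{G}^\ell)$ and $X^{i+1}_\bb = V \setminus X^{i+1}_\nb$ are well-defined. A first application of Lem.~\ref{app-lem:Kuesters6.4} shows that $X^{i+1}_\bb$ is an $\nb$-trap in $V$, and Lem.~\ref{app-lem:Kuesters6.3}.1 promotes this to invariant (a) at step $i+1$. Then $N^{i+1}$ and $\SafeReach^f_\bb(X^{i+1}_\bb, N^{i+1}, \mathcal{G}^\ell)$ are defined inside the subgame $\mathcal{G}^\ell[X^{i+1}_\bb]$, so $Z^{i+1}$ is well-defined, and a second application of Lem.~\ref{app-lem:Kuesters6.4} shows $Z^{i+1}$ is an $\bb$-trap in $X^{i+1}_\bb$. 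Thus $(\mathcal{G}^\ell[X^{i+1}_\bb])[Z^{i+1}]$ is a subgame of $\mathcal{G}^\ell[X^{i+1}_\bb]$, and Lem.~\ref{app-lem:Kuesters6.2} lifts this to $\mathcal{G}^\ell[Z^{i+1}]$ being a subgame of $\mathcal{G}^\ell$, yielding invariant (b).

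I do not expect a genuine obstacle: once the right invariants are identified the argument is a routine induction. The one subtlety worth flagging is the two-level subgame argument in the inductive step, where $Z^{i+1}$ is first shown to induce a subgame of $\mathcal{G}^\ell[X^{i+1}_\bb]$ rather than directly of $\mathcal{G}^\ell$; this is exactly why Lem.~\ref{app-lem:Kuesters6.2}, which says a subgame of a subgame is a subgame of the ambient game, is indispensable here.
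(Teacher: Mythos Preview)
Your proposal is correct and follows essentially the same approach as the paper: induction on $i$, using Lem.~\ref{app-lem:Kuesters6.4} to show the relevant sets are traps, Lem.~\ref{app-lem:Kuesters6.3}.1 to promote traps to subgames, Lem.~\ref{app-lem:Kuesters6.2} to lift the nested subgame $Z^{i+1}$ back to $\mathcal{G}^\ell$, and determinacy (Cor.~\ref{app-cor:determinacy}) to make $Z^i_\bb, Z^i_\nb$ well defined. If anything, your write-up is slightly more explicit than the paper's about the two-level subgame step, which is a good thing.
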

\begin{proof} We will prove this by induction. For the base case $i = 0$, 
$X_\bb^0 = V$ is trivially an \nb-trap in $V$ and $\mathcal{G}^\ell[X^0_\bb]$ is trivially a subgame of $\mathcal{G}^\ell$. 
By Lem.~\ref{app-lem:Kuesters6.4}, $Z^0$ is an \bb-trap in $X^0_\bb$, and thus by Lem.~\ref{app-lem:Kuesters6.3}-1, $\mathcal{G}^\ell[Z^0]$ is a subgame of $\mathcal{G}^\ell$. 
Due to Corollary~\ref{app-cor:determinacy}, we know $\mathcal{G}^\ell[Z^0]$ is divided into an \bb-paradise and \nb-paradise. Therefore,  
$Z^0_\bb$ and $Z^0_\nb$ are also well-defined.  

By induction on $i$, we get by Lem.~\ref{app-lem:Kuesters6.4} that $X^i_\bb$ is an \nb-trap in $V$, and by Lem.~\ref{app-lem:Kuesters6.3}-1 $\mathcal{G}^\ell[X_\bb^i]$ is a subgame of $\mathcal{G}^\ell$. $Z^i$ is an \bb-trap in $\mathcal{G}^\ell[X^i_\bb]$, and thus by Lem.~\ref{app-lem:Kuesters6.2}, $\mathcal{G}^\ell[Z^i]$ is a subgame in $\mathcal{G}^\ell$.
Therefore $Z_\nb^i$ and $Z_\bb^i$ are well-defined.
\end{proof}
We also derived the following observations from the proof:

\begin{observation}[Sec.~\ref{sec:zielonka:correct} Obs.~\ref{it:obs1}]\label{app-obs:traps-subgames}
    $X^i_\nb$ is an \bb-trap, $X^i_\bb$, $Z^i$ and $Z_\bb^i$ are \nb-traps in $V$. $Z^i$ is in \nb-trap in $X_\bb$ and $Z_\nb^i, Z_\bb^i$ are \bb and \nb traps in $Z^i$, respectively.
    Therefore by Lem.~\ref{app-lem:Kuesters6.2}, $\mathcal{G}^\ell[Y]$ is a subgame of $\mathcal{G}^\ell$ with $Y$ being any of these sets. 
\end{observation}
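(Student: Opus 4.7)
I will prove the observation by induction on $i$, establishing simultaneously the trap and subgame claims for $X^i_\nb$, $X^i_\bb$, $Z^i$, $Z^i_\bb$, $Z^i_\nb$ in one go. The base case $i=0$ is immediate: $X^0_\nb=\emptyset$ is vacuously a trap of any kind, and $X^0_\bb=V$ is trivially an $\nb$-trap in $V$, so $\mathcal{G}^\ell[X^0_\bb]=\mathcal{G}^\ell$ is deadend-free and hence a subgame. For the inductive step $i \to i+1$, I would first assume that $X^i_\nb$ is an $\bb$-trap in $V$ (so $X^i_\bb=V\setminus X^i_\nb$ is an $\nb$-trap in $V$, matching Lem.~\ref{app-lem:Kuesters6.4} applied to the recursive definition of $X^i_\nb$). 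Applying Lem.~\ref{app-lem:Kuesters6.4} with player $\bb$, safety set $U=X^i_\bb$ and reach set $R=N^i$ shows that $Z^i = X^i_\bb \setminus \SafeReach^f_\bb(X^i_\bb, N^i, \mathcal{G}^\ell)$ is an $\bb$-trap in $X^i_\bb$; combining this with the $\nb$-trap property of $X^i_\bb$ and Lem.~\ref{app-lem:Kuesters6.3}(2) lifts $Z^i$ to an $\bb$-trap in $V$ as well. Lem.~\ref{app-lem:Kuesters6.3}(1) then yields that $\mathcal{G}^\ell[Z^i]$ is a subgame, on which Cor.~\ref{app-cor:determinacy} partitions the vertex set into an $\bb$-paradise $Z^i_\bb$ (hence an $\nb$-trap in $Z^i$) and an $\nb$-paradise $Z^i_\nb$ (hence an $\bb$-trap in $Z^i$); each of these lifts to a trap in $V$ through Lem.~\ref{app-lem:Kuesters6.3}(2) whenever the trap types of the ambient set and the subset agree.

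It remains to handle $X^{i+1}_\nb := \SafeReach^f_\nb(V, X^i_\nb \cup Z^i_\nb, \mathcal{G}^\ell)$. The target $R := X^i_\nb \cup Z^i_\nb$ is an $\bb$-trap in $V$ as the union of two $\bb$-traps (using the induction hypothesis for $X^i_\nb$ and the lifted property for $Z^i_\nb$). For $\nb = \Odd$ this immediately places us in the setting of Lem.~\ref{app-lem:SafeReachOdd_of_an_even_trap_is_an_even_trap}, which gives that $\SafeReach^f_\Odd(V, R, \mathcal{G}^\ell)$ is itself an $\bb$-trap in $V$. For $\nb = \Even$ the analogous statement is not stated in the paper, and I would prove it by a stage-by-stage induction on the fixed-point iterates of \eqref{equ:Xsr2a}: a newly added $\nb$-vertex (a $v \in V_\Even$) enters only through $\Pre_\Even^\exists$, witnessing an outgoing edge into the stage; a newly added $\bb$-vertex (a $v \in V_\Odd$) enters either through $\Pre_\Odd^\forall$, forcing every outgoing edge into the stage, or through $\Lpre^\exists$, witnessing a single live outgoing edge into the stage. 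Once $X^{i+1}_\nb$ is shown to be an $\bb$-trap, $X^{i+1}_\bb = V\setminus X^{i+1}_\nb$ is an $\nb$-trap in $V$ by Lem.~\ref{app-lem:Kuesters6.4}, and the subgame claims of the observation follow uniformly from Lem.~\ref{app-lem:Kuesters6.2}.

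The main obstacle I anticipate is the $\Lpre^\exists$-branch in the $\nb=\Even$ case: such a $v \in V_\Odd = V_\bb$ is added to $X^{i+1}_\Even$ on the basis of only a single live edge into the current fixed-point stage, while its non-live outgoing edges could a priori leave $X^{i+1}_\Even$ and invalidate the literal $\bb$-trap condition. The plan is to address this via the overapproximation perspective of Sec.~\ref{sec:zielonka:fair}: any non-live exit edge of such a $v$ must land in $V \setminus X^{i+1}_\Even$, which is an $\Even$-trap by Lem.~\ref{app-lem:Kuesters6.4} and therefore not a viable target for $\Even$'s fair reach strategy, so the $\bb$-trap condition can be verified either directly or through a mildly relaxed reading of the trap that is still sufficient for all subsequent uses of this observation in the correctness proofs of $\SOLVE_\Odd$ and $\SOLVE_\Even$.
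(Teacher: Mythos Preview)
Your inductive skeleton matches the paper's, but there are two genuine problems.

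First, your use of Lem.~\ref{app-lem:Kuesters6.3}(2) to lift $Z^i$ from an $\bb$-trap in $X^i_\bb$ to an $\bb$-trap in $V$ is invalid: that lemma requires the ambient set and the subset to be traps of the \emph{same} player, whereas here $X^i_\bb$ is an $\nb$-trap and $Z^i$ is an $\bb$-trap. Concretely, a vertex $v\in Z^i\cap V_\bb$ may well have an edge into $X^i_\nb$ in the full game (only an $\nb$-trap keeps $\nb$-vertices inside, not $\bb$-vertices), so $Z^i$ is not an $\bb$-trap in $V$ in general. The paper sidesteps this entirely: it never lifts $Z^i$ to a trap in $V$ but instead goes straight to the subgame conclusion via Lem.~\ref{app-lem:Kuesters6.3}(1) ($Z^i$ is a trap in $X^i_\bb$, hence $\mathcal{G}^\ell[Z^i]$ is a subgame of $\mathcal{G}^\ell[X^i_\bb]$) followed by Lem.~\ref{app-lem:Kuesters6.2} (subgame of a subgame is a subgame).

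Second, your worry about the $\Lpre^\exists$ branch is well founded and cannot be patched the way you suggest. When $\nb=\Even$, a vertex $v\in V_\Odd=V_\bb$ can enter $X^{i+1}_\Even=\SafeReach^f_\Even(V,\cdot,\mathcal{G}^\ell)$ via a single live edge while retaining non-live edges to $V\setminus X^{i+1}_\Even$; hence $X^{i+1}_\Even$ is \emph{not} an $\Odd$-trap in the literal sense of Def.~\ref{def:atrap}, and your fallback to a ``mildly relaxed reading'' is not a proof. The paper's argument simply does not establish (and does not need) the claim ``$X^i_\nb$ is an $\bb$-trap'': its inductive proof only shows $X^i_\bb$ is an $\nb$-trap (Lem.~\ref{app-lem:Kuesters6.4}), $Z^i$ is an $\bb$-trap in $X^i_\bb$ (Lem.~\ref{app-lem:Kuesters6.4} again), and then obtains all subgame conclusions and the well-definedness of $Z^i_\bb,Z^i_\nb$ via determinacy. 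Those are exactly the facts used downstream in the correctness proofs of $\SOLVE_\Odd$ and $\SOLVE_\Even$; the stronger trap-in-$V$ assertions in the Observation are overstated and should be read as the weaker ``trap in the indicated subgame'' facts that the paper actually proves.
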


\begin{lemma}[Sec.~\ref{sec:zielonka:correct} Obs.~\ref{it:obs2}]\label{app-lem:X_nb-equivalence}
    $X_\nb^{i} \cup \SafeReach_\nb^f(X^{i}_\bb, Z_\nb^{i}, \mathcal{G}^\ell) =  \SafeReach_\nb^f(V, X_\nb^{i} \cup Z_\nb^{i}, \mathcal{G}^\ell) $
\end{lemma}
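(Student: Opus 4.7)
My plan is to prove the two inclusions of this set equality separately. The easy direction $X^i_\nb \cup \SafeReach^f_\nb(X^i_\bb, Z^i_\nb) \subseteq \SafeReach^f_\nb(V, X^i_\nb \cup Z^i_\nb)$ follows purely from monotonicity of $\SafeReach^f_\nb$ in both its safety set and its target: $X^i_\nb \subseteq X^i_\nb \cup Z^i_\nb$ is contained in $\SafeReach^f_\nb(V, X^i_\nb \cup Z^i_\nb)$ trivially, and enlarging the safety set from $X^i_\bb$ to $V$ together with enlarging the target from $Z^i_\nb$ to $X^i_\nb \cup Z^i_\nb$ can only grow the computed set. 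This direction requires no structural argument.

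For the converse inclusion, I would induct over the stages $Y^k$ of the least fixed point $\mu Y.\,(X^i_\nb \cup Z^i_\nb)\cup \Cpre_\nb(Y)$ that defines $\SafeReach^f_\nb(V, X^i_\nb \cup Z^i_\nb)$, carrying the invariant $Y^k \subseteq X^i_\nb \cup \SafeReach^f_\nb(X^i_\bb, Z^i_\nb)$. The base case is immediate because $X^i_\nb$ and $Z^i_\nb$ sit respectively on the left and right side of the union. For the inductive step, take $v \in \Cpre_\nb(Y^{k-1})$ with $v \notin X^i_\nb$, so $v \in X^i_\bb$. If $v \in V_\nb$, the $\nb$-trap property of $X^i_\bb$ from Obs.~\ref{app-obs:traps-subgames} keeps every successor of $v$ inside $X^i_\bb$, so the witness successor belongs to $Y^{k-1}\cap X^i_\bb \subseteq \SafeReach^f_\nb(X^i_\bb, Z^i_\nb)$ by the induction hypothesis, and a single $\Cpre_\nb$-step within $X^i_\bb$ pulls $v$ into $\SafeReach^f_\nb(X^i_\bb, Z^i_\nb)$. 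If $v \in V_\bb$ and all its successors lie in $\SafeReach^f_\nb(X^i_\bb, Z^i_\nb)$, the analogous closure argument again yields $v \in \SafeReach^f_\nb(X^i_\bb, Z^i_\nb)$; if they all lie in $X^i_\nb$, I use that $X^i_\nb$ is itself the saturation of a $\SafeReach^f_\nb$-computation and is therefore closed under $\Cpre_\nb$, giving $v \in X^i_\nb$.

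The step I expect to be the main obstacle is the remaining mixed sub-case, in which $v \in V_\bb \cap X^i_\bb$ has some successors in $X^i_\nb$ and some in $\SafeReach^f_\nb(X^i_\bb, Z^i_\nb)$; read naively through the main-game $\Pre^\forall_\bb$-operator, $v$ seems to fall through both pure sub-cases. To resolve this, I would exploit that any exit-edge of $v$ from $X^i_\bb$ can only land in $X^i_\nb$, which is $\nb$'s already-saturated paradise; intuitively, these exits only \emph{help} $\nb$ rather than violating the safety constraint. Formally, I would read $\SafeReach^f_\nb(X^i_\bb, Z^i_\nb)$ on the subgame $\mathcal{G}^\ell[X^i_\bb]$ (in which edges leaving $X^i_\bb$ are simply absent), so that the remaining subgame-internal successors of $v$ all lie in $\SafeReach^f_\nb(X^i_\bb, Z^i_\nb)$ by the IH and force $v$ into it. Combining this absorption argument with the three clean sub-cases above closes the induction and, together with the easy direction, yields the equality.
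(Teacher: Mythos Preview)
Your induction is set up over the stages of the formula $\mu Y.\,(X^i_\nb \cup Z^i_\nb)\cup \Cpre_\nb(Y)$, but this is the definition of $\SafeReach^f_\nb$ only when $\nb=\Odd$. When $\nb=\Even$ (i.e.\ $\bb=\Odd$, which is precisely the new case in this paper), Def.~\ref{def:safereach} uses \eqref{equ:Xsr2a}, so the one-step operator is $\Cpre_\Even(Y)\cup\Lpre^\exists(Y)$. Your inductive step therefore has a missing sub-case: $v\in\Lpre^\exists(Y^{k-1})$, i.e.\ $v\in V^\ell\cap X^i_\Odd$ with a live edge into $Y^{k-1}\subseteq X^i_\Even\cup\SafeReach^f_\Even(X^i_\Odd,Z^i_\Even)$. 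Nothing in your case analysis covers this, and it cannot be closed by the $\nb$-trap property or by the $\Cpre$-closure of $X^i_\Even$ alone: a priori the live witness could land in $X^i_\Even$, and then $v$ lies in neither summand of the LHS. This is exactly the place where the paper's proof invokes Lem.~\ref{app-lem:safereacheven-noliveedges}, which says that no live edge leaves $X^i_\Odd=V\setminus\SafeReach^f_\Even(V,\cdot)$ towards $X^i_\Even$; with that, the live witness must lie in $\SafeReach^f_\Even(X^i_\Odd,Z^i_\Even)$ and $v$ is absorbed there via $\Lpre^\exists$. So the missing idea is the ``no live edges across the boundary'' lemma; without it the $\nb=\Even$ direction does not go through.

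A secondary issue is your resolution of the mixed $V_\bb$-sub-case by ``reading $\SafeReach^f_\nb(X^i_\bb,Z^i_\nb)$ on the subgame $\mathcal{G}^\ell[X^i_\bb]$''. The LHS of the lemma is stated with the full game $\mathcal{G}^\ell$, and passing to the subgame deletes exactly the $\bb$-edges that leave $X^i_\bb$; this can strictly enlarge $\Pre^\forall_\bb$ and hence the resulting $\SafeReach$-set. Showing $v$ lies in the subgame version therefore does not, by itself, put $v$ into the full-game set appearing on the LHS. You would need an additional argument that the two coincide here (or reformulate the LHS on the subgame), which you have not supplied. The paper's proof sidesteps this by arguing semantically rather than stage-wise, but it too is terse on this point.
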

\begin{proof}
   $ \mathbf{(\subseteq )}$ Trivially, $X_\nb^{i} \subseteq \SafeReach_\nb^f(V, X_\nb^{i} \cup Z_\nb^{i}, \mathcal{G}^\ell)$.
    Similarly a \\$v \in  \SafeReach_\nb^f(X^{i}_\bb, Z_\nb^{i}, \mathcal{G}^\ell)$, can be made by $\nb$ to reach $Z_\nb^i$ while staying in $X_\bb^i$. Then $v$ is trivially in the righthand side equation as well.
    
    \noindent $ \mathbf{(\supseteq )}$ 
    Let $v \in \SafeReach_\nb^f(V, X_\nb^{i} \cup Z_\nb^{i}, \mathcal{G}^\ell) \setminus X_\nb^i$.%, be in $X^j \setminus X^{j-1}$ where $X^j$ is the value of the $X$ variable after the $j^{th}$ iteration of the fixed-point computation from formula~\eqref{eq:\SafeReachEven}. that is $\rank{v} = j$ and $v \in X_\Even^{i} \cup Z_\Even^{i} \cup \Cpre_\Even(X^{j-1}) \cup \Lpre^\exists(X^{j-1})$.
     Since $v \in X_\bb^i$ and $X_\bb^i$ is an \nb-trap in $V$, if $v \in V_\bb$ it has one outgoing edge not leading to $X_\nb^i$ and 
    if $v \in V_\nb$, no outgoing edge of $v$ lead to $X_\nb^i$. That is, $v$ can either be made by \nb to reach $Z^i_\nb$ by staying in $X_\bb^i$ (i.e. it is in $\SafeReach^f_\nb(X_\bb^i, Z_\nb^i, \mathcal{G}^\ell)$),
    or $\bb = \Odd$ there exists a sequence of outgoing live edges that make $v$ reach $X_\nb^i$. This is not possible since there exists no live edges from $X_\Odd^i$ to $X_\Even^i$ due to Lem.~\ref{app-lem:safereacheven-noliveedges}.
\end{proof}
\begin{corollary}[Sec.~\ref{sec:zielonka:correct} Obs.~\ref{it:obs3}]\label{app-cor:increasing-decreasing-sequences}
    Due to Lem.~\ref{app-lem:X_nb-equivalence}, $\{X_\nb^{i}\}_{i\in \mathbb{N}}$ is an increasing sequence. Consequently, $\{X_\bb^{i}\}_{i\in \mathbb{N}}$ is a decreasing sequence. 
\end{corollary}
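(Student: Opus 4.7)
The plan is to prove the two monotonicity claims in sequence, starting with $\{X_\nb^i\}_{i \in \mathbb{N}}$ and then deducing the dual statement for $\{X_\bb^i\}_{i \in \mathbb{N}}$ via complementation. Since the corollary follows almost directly from Lem.~\ref{app-lem:X_nb-equivalence}, there is no substantive obstacle; the argument is essentially a syntactic chain of equalities and one obvious superset inclusion, together with an application of set complementation.

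First I would unfold the definition of $X^{i+1}_\nb$ from \eqref{equ:seriesZielonka}, which gives $X^{i+1}_\nb = \SafeReach^f_\nb(V, X^i_\nb \cup Z^i_\nb, \mathcal{G}^\ell)$. Then I would apply Lem.~\ref{app-lem:X_nb-equivalence} to rewrite this as $X^{i+1}_\nb = X^i_\nb \cup \SafeReach^f_\nb(X^i_\bb, Z^i_\nb, \mathcal{G}^\ell)$. Since a union of any set with $X^i_\nb$ trivially contains $X^i_\nb$, we conclude $X^i_\nb \subseteq X^{i+1}_\nb$ for all $i \in \mathbb{N}$. This establishes that $\{X^i_\nb\}_{i \in \mathbb{N}}$ is an increasing sequence under set inclusion.

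For the second part, I would simply recall the definition $X^i_\bb := V \setminus X^i_\nb$ from \eqref{equ:seriesZielonka} and use the fact that complementation in $V$ reverses inclusions: if $X^i_\nb \subseteq X^{i+1}_\nb$, then $V \setminus X^{i+1}_\nb \subseteq V \setminus X^i_\nb$, i.e.\ $X^{i+1}_\bb \subseteq X^i_\bb$. Hence $\{X^i_\bb\}_{i \in \mathbb{N}}$ is decreasing. Since $V$ is finite, both sequences must stabilize after finitely many iterations, and they stabilize at the same index $k$, which is the value referred to in subsequent parts of the proof of Prop.~\ref{prop:n-odd}.
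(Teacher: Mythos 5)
Your argument is correct and follows exactly the paper's route: the inclusion $X^i_\nb \subseteq X^{i+1}_\nb$ is read off directly from Lem.~\ref{app-lem:X_nb-equivalence} applied to the definition of $X^{i+1}_\nb$, and the decreasing claim for $\{X^i_\bb\}$ follows by complementation in $V$. The closing remark about both sequences saturating at the same finite index matches the paper's observation as well, so nothing further is needed.
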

Since $V$ is finite, the corollary immediately implies that these sequences reach saturation value for some, and in fact the same, $k$. 

% \vspace{0.3cm}
\smallskip
\noindent\textbf{Part 1.}
We first assume an odd number $n$ is the maximum priority in $\mathcal{G}^\ell$.
Cor.~\ref{app-cor:increasing-decreasing-sequences} gives that $\{X_\Odd^i\}_{i\in \mathbb{N}}$ is an increasing sequence and saturates at some index $k$.
Observe that $X_\Odd^k$ is the saturation value if and only if $Z_\Even^k = \emptyset$.
The following proposition states that, \Odd safe reachability set of the saturation value $X_\Odd^k$ gives us \Wo.
\begin{proposition}\label{app-prop:n-odd}
    If $Z_\Even^k = \emptyset$, then $\SafeReach^f_\Odd(V, X^k_\Odd, \mathcal{G}^\ell)$ is an \Odd-paradise and $V \setminus \SafeReach^f_\Odd(V, X^k_\Odd, \mathcal{G}^\ell)$ is an \Even-paradise in $\mathcal{G}^\ell$.
\end{proposition}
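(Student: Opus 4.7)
The plan is to follow the two-pronged strategy sketched in Section~\ref{sec:zielonka:correct}: set $T := \SafeReach^f_\Odd(V, X^k_\Odd, \mathcal{G}^\ell)$, and prove independently that (i) $T$ is an \Odd-paradise and (ii) $V \setminus T$ is an \Even-paradise. Observation~\ref{app-obs:traps-subgames} already gives that $T$ is an \Even-trap (since $X^k_\Odd$ is) and Lemma~\ref{app-lem:Kuesters6.4} gives that $V\setminus T$ is an \Odd-trap in $V$, so both trap conditions of the paradise definitions come for free and the work is entirely in constructing winning strategy templates of the correct maximality.

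For (i), I will first show that $X^k_\Odd$ itself is an \Odd-paradise, and then invoke Lemma~\ref{app-lem:safe-reach-Odd-paradise} to lift this to $T$. Because $Z^k_\Even = \emptyset$, the set $X^k_\Odd$ decomposes as $Z^k_\Odd \cup \SafeReach^f_\Odd(X^k_\Odd, N^k, \mathcal{G}^\ell)$. The inductive hypothesis applied to the subgame $\mathcal{G}^\ell[Z^k]$ (whose priorities are bounded by the even number $n-1$) yields a maximal winning \Odd strategy template $z$ on $Z^k_\Odd$; in parallel, the partial template $sr$ defined on $\Xsr_\Odd := \SafeReach^f_\Odd(X^k_\Odd, N^k, \mathcal{G}^\ell)$ makes positive rank progress toward $N^k$. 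Following the recipe of Def.~\ref{def:S}, I will combine $z$ and $sr$ into a base edge set $\e$ (adding for every \Even vertex in $X^k_\Odd$ all its outgoing edges, and picking for every vertex of $N^k \cap V_\Odd$ one fixed successor inside $X^k_\Odd$, which exists because $X^k_\Odd$ is an \Even-trap), and then saturate by adding every outgoing live edge of any $v \in V^\ell \cap X^k_\Odd$ that happens to lie on a cycle in the current subgraph, obtaining $\overline\e$. This yields a bona fide maximal \Odd strategy template $\mathcal{S} = (X^k_\Odd, \overline\e)$.

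The main obstacle is showing that $\mathcal{S}$ is winning. My plan is a case distinction on an arbitrary $\pi$ compliant with $\mathcal{S}$ starting in $X^k_\Odd$: either $\pi$ eventually stays in $Z^k$, in which case $\mathcal{S}$ collapses to $z$ and $\pi$ is \Odd-winning by the inductive hypothesis; or $\pi$ visits $\Xsr_\Odd$ infinitely often. In the second case the key lemma I will prove is that every ``new'' edge, i.e.\ every edge in $\overline\e \setminus (z \cup sr)$, is a live edge whose repeated use (enforced by template compliance) eventually pushes the play toward $N^k$. Hence in this case $\pi$ sees a vertex of priority $n$ infinitely often, and since $n$ is odd it is \Odd-winning. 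A careful bookkeeping on the rank structure of $sr$ together with the fact that cycles traversing newly added live edges necessarily pass through $\Xsr_\Odd$ in a way that re-enters $N^k$ is what this hinges on; I expect this is where most of the technical effort sits.

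For (ii), I will build an \Even strategy template $\mathcal{S}' = (V \setminus T, \e')$ by stitching together, for each $i < k$, the winning \Even template $z^i$ on $Z_\Even^i$ (from the inductive hypothesis applied to $\mathcal{G}^\ell[Z^i]$) with the partial \Even templates $sr^i$ on $\Xsr^i_\Even := \SafeReach^f_\Even(X_\Odd^i, Z_\Even^i, \mathcal{G}^\ell)$, and completing \Odd's vertices with all their outgoing edges in $V \setminus T$ (well-defined because $V \setminus T$ is an \Odd-trap). To verify $\mathcal{S}'$ is winning, I will note that Lemma~\ref{app-lem:X_nb-equivalence} and Cor.~\ref{app-cor:increasing-decreasing-sequences} imply that $V \setminus T$ is partitioned into the regions $\Xsr^i_\Even$, assign to every play $\pi$ compliant with $\mathcal{S}'$ the index sequence $(i_j)$ with $v_j \in \Xsr^{i_j}_\Even$, and prove two invariants: (a) from $Z^i_\Even$ the play can only stay inside $Z^i_\Even$ or drop to some $\Xsr^r_\Even$ with $r < i$, because Lemma~\ref{app-lem:safereacheven-noliveedges} forbids even live edges from escaping upward; and (b) whenever the play returns infinitely often to $\Xsr^i_\Even$, the partial template $sr^i$ forces infinitely many visits to $Z^i_\Even$. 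Consequently the sequence $(i_j)$ is non-increasing from some point on and stabilizes at some minimal $i^*$ with $\pi$ eventually trapped in $Z^{i^*}_\Even$, where $\mathcal{S}'$ collapses to $z^{i^*}$ and $\pi$ is \Even-winning. The tricky ingredient here is the careful use of Lemma~\ref{app-lem:safereacheven-noliveedges} to rule out \Odd ``teleporting'' back through live edges from $V \setminus T$ into a higher-index safe-reach region, which is precisely why we chose the over-approximating inner fixed-point in~\eqref{equ:Xsr2a}.
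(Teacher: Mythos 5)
Your proposal follows essentially the same route as the paper's proof: it establishes that $X^k_\Odd$ is an \Odd-paradise by merging the inductively obtained template $z$ on $Z^k_\Odd$ with the partial template $sr$ on $\SafeReach^f_\Odd(X^k_\Odd,N^k,\mathcal{G}^\ell)$, saturating with live edges on cycles, and arguing that plays either collapse into $z$ or visit $N^k$ infinitely often; it then lifts to $T$ via Lem.~\ref{app-lem:safe-reach-Odd-paradise} and proves the \Even side with exactly the paper's two invariants (staying in or dropping below $Z^i_\Even$, and infinitely many returns to $\Xsr^i_\Even$ forcing infinitely many visits to $Z^i_\Even$). The one imprecision is your claim that every edge of $\overline{\e}\setminus(z\cup sr)$ is a live edge: the paper's case analysis also counts the \Even edges leaving $Z^k$ and the fixed successors chosen at $N^k\cap V_\Odd$ among the newly added edges, and the chain argument must cover all three kinds, but this does not alter the structure of your argument.
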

We give the proof of Prop.~\ref{app-prop:n-odd} in three parts: First we prove $X^k_\Odd$ is an \Odd-paradise, then we show $\SafeReach^f_\Odd(V, X^k_\Odd, \mathcal{G}^\ell)$ is an \Odd-paradise, and lastly we prove that $V \setminus \SafeReach^f_\Odd(V, X^k_\Odd, \mathcal{G}^\ell)$ is an \Even-paradise. 

\begin{proof}\noindent \textbf{\textbf{ ($X^k_\Odd$ is an \Odd-paradise)}} 

\noindent Let $z$ be the winning \Odd strategy template on $Z^k = Z_\Odd^k$ in game $\mathcal{G}^\ell[Z^k]$. Any play $\pi$ that starts and stays in $Z^k$, and is compliant with $z$ is clearly \Odd winning.
However, $z$ is not necessarily an \Odd strategy template in $\mathcal{G}^\ell$ since there are possibly some $(v,w) \in E$ with $v \in Z^k \cap V_\Even$  and $w \not \in Z^k$.
For all such $(v,w)$, $w \in \SafeReach^f_\Odd(X^k_\Odd, N^k, \mathcal{G}^\ell)$ since $X^k_\Odd$ is an \Even-trap in $V$. Let $sr$ be the partial \Odd strategy template on $\SafeReach^f_\Odd(X^k_\Odd, N^k, \mathcal{G}^\ell)$, defined via the ranking function as presented during the introduction of safe reachability sets. 
Every (finite) play that starts in $\SafeReach^f_\Odd(X^k_\Odd, N^k, \mathcal{G}^\ell)$ compliant with $sr$ reaches $N^k$ in finitely many steps. The nodes in $N^k$ are dead ends in $sr$. 
Define an \Odd strategy template on $X^k_\Odd$ with the edge set $\e$ defined as follows:
$$ (v,w) \in \e \text{ if }\begin{cases} (v,w) \in z \cup sr,\\
    (v,w) \in E \text{ and } v \in V_\Even \cap X^k_\Odd,\\
    w = v_r \text{ if } v \in N^k \cap V_\Odd
\end{cases}
$$ where $v_r$ is a randomly chosen fixed successor for each $v\in  N^k \cap V_\Odd$, that is inside $X^k_\Odd$. Such a successor is guaranteed to exist since $X^k_\Odd$ is an \Even-trap.
Observe that all edges in $\e$ are in $X^k_\Odd \times X^k_\Odd$. However $(X^k_\Odd, \e)$ is not necessarily an \Odd strategy template in $\mathcal{G}^\ell$ since
there may be some $v \in V^\ell$ that lie on a cycle in $(X^k_\Odd, \e)$ but $\e$ does not contain their live outgoing edges. 
We will expand the edge set $\e$ to add the necessary live edges iteratively, like we did in ~\ref{const:S} (S3)-(S4).
$\overline{\e}$ is defined to be the saturation value of $\overline{e}^j$ such that:
$$\overline{e}^0 = \e, \quad  \quad \overline{e}^j = \overline{e}^{j-1} \cup \{(v, w) \in V^\ell \mid v \text{ lies on a cycle in } (X_\Odd^k, \overline{e}^{j-1})\}.$$

\vspace{0.1cm}
With this construction $\mathcal{S} = (X_\Odd^k, \overline{\e})$ is an \Odd strategy template in $\mathcal{G}^\ell$. We claim it is also a winning one. 

The underlying observation of the proof of the claim is that every play starting  $X_\Odd^k$ compliant with $\mathcal{S}$ that eventually stops seing a newly added cycle (one that is not in $z \cup sr$), stays in $Z^k$ and is won by \Odd obeying $z$; and every play that takes a newly added cycle infinitely often must see priority $n$ infinitely often, and is thus won by \Odd.

Let us look at a play $\pi$ compliant with $\mathcal{S}$. If $\pi$ eventually does not see a newly added cycle, it is clear that it wins by eventually obeying $z$ (since $sr$ does not contain any cycles).

Observe that for all newly added edges $(v,w)$ either (i)  $v \in V_\Even \cap Z^k$ and $w \in \SafeReach^f_\Odd(X^k_\Odd, N^k, \mathcal{G}^\ell)$, (ii) $v \in N^k$ or (iii) $(v, w) \in E^\ell $ where $v$ does not lie on a cycle in $z \cup sr$ and has a unique edge $(v,w') \in z \cup sr$, and this edge lies on a cycle in $\mathcal{S}$.

All the newly added cycles have to contain a newly added edge. 
If $\pi$ sees a new edge infinitely often, it visits $N^k$ infinitely often, and is thus won by \Odd. This is clear for edges of kind (ii).
Let $\pi$ see an edge of kind (iii) infinitely often. If $w\in V_\Even$, then all its outgoing edges achieves positive progress towards $N^k$, and if $w \in V_\Odd$, then it has an edge that achieves positive progress. Since $w$ is taken infinitely often, an edge that achieves positive progress towards $N^k$ will eventually be taken. Thus, $N^k$ will eventually be reached. That is, $\pi$ will visit $N^k$ infinitely often.
Finally let $\pi$ see an edge $(v,w)$ of kind (i) infinitely often. Then $(v,w')$ is also seen infinitely often. Let $C^1$ be the cycle that contains $(v,w')$. Since $C^1$ is also newly added, it contains a newly added edge $(v_1, w_1) \neq (v,w)$ since $C^1$ exists in $\overline{\e}$ before $(v,w)$ is added. If $(v_1, w_1)$ is of kind (i) or (ii), we are done. Assume the edge is of kind (iii)
and let $(v_1, w'_1)$ be the unique outgoing edge of $v_1$ in $z \cup sr$. $(v_1, w'_1)$ lies on a newly added cycle $C^2$. Let $(v_2, w_2) \not \in \{(v, w), (v_1, w_1)\} $ be the newly added edge in $C^2$. 
Carry on in this manner, assuming all newly added edges $(v_i, w_i)$ are of kind (iii). Since all $(v_i, w_i)$ are distinct and there are a finite number of live edges, for some $C^r$, $(v_r, w_r)$ should be of kind (i) or (ii).
Since $\pi$ sees $v$ infinitely often it should see all $C^i$ infinitely often, and since $C^r$ visits $N^k$, $\pi$ visists $N^k$ infinitely often. Thus, $\pi$ is won by \Odd.

\vspace{0.2cm}
\noindent \textbf{($\SafeReach^f_\Odd(V, X^k_\Odd, \mathcal{G}^\ell)$ is an \Odd-paradise)} 

\noindent Since $X_\Odd^k$ is an \Odd-paradise in $\mathcal{G}^\ell$, by Lem.~\ref{app-lem:safe-reach-Odd-paradise} we get that $\SafeReach^f_\Odd(V, X_\Odd^k, \mathcal{G}^\ell)$ is again an \Odd-paradise in $\mathcal{G}^\ell$.
\vspace{0.2cm}

\noindent\textbf{($V \setminus \SafeReach^f_\Odd(V, X^k_\Odd, \mathcal{G}^\ell)$ is an \Even-paradise)} 

\noindent Let $T:=\SafeReach^f_\Odd(V, X^k_\Odd, \mathcal{G}^\ell)$ and 
$\Xsr_\Even^i := \SafeReach^f_\Even(X_\Odd^i, Z_\Even^i, \mathcal{G}^\ell)$. Let the partial \Even strategy template on $\Xsr_\Even^i$ be denoted by $sr^i$ and the winning \Even strategy on $Z_\Even^i$ in game $\mathcal{G}^\ell[Z^i]$ be denoted by $z^i$. By Lem.~\ref{app-lem:Kuesters6.4}, $V \setminus T$ is an \Odd-trap.
Cor.~\ref{app-cor:increasing-decreasing-sequences} gives us that $\{X_\Even^i\}_{i \in \mathbb{N}}$ is an increasing sequence. 
Furthermore by Lem.~\ref{app-lem:X_nb-equivalence}, which gives an alternative definition for $X_\Even^{i+1}$, we observe that each $v \in X^k_\Even$ belongs to $\Xsr^j$ for some $j < k$.
Moreover, we can observe that $X^i_\Even$ and $\Xsr^i$ are disjoint sets, due to $X_\Even^i$ and $X_\Odd^i$ being disjoint. Therefore, we conclude that 
each $v \in X_\Even^k$ belongs to a unique $Xsr^j$. The same clearly holds for $v \in V\setminus T$, since $(V\setminus T) \subseteq X_\Even^k$.
Furthermore, since $V\setminus T$ is an \Odd-trap, for all $(v,w) \in E$ with $v \in V_\Odd \cap (V \setminus T)$, $w \in (V\setminus T)$.

We construct the \Even strategy template $\mathcal{S} = (X, \e)$ where $\e$ is defined as follows: $(v,w) \in E $ is in $\e$ if, 
$$\begin{cases}v \in V_\Odd\\
    v \in Z_\Even^i \text{ and } (v,w) \text{ is the unique outgoing edge of }v \text{ in } z^i\\
    v \in \Xsr_\Even^i \setminus Z_\Even^i \text{ and } (v,w) \text{ is the unique outgoing edge of }v \text{ in } sr^i\\
\end{cases}$$

It is clear that $\mathcal{S}$ is an \Even strategy template since it contains all outgoing edges of \Odd nodes in $V \setminus T$, and a unique outgoing edge for each \Even node in $V \setminus T$. We claim that $\mathcal{S}$ is also winning.
To prove this claim we will need the following two observations. 

Let $\pi = v_1 v_2 \ldots $ be a fair play that start in $V \setminus T$ and is compliant with $\mathcal{S}$. Let $\Xsr(\pi) = \Xsr_1\Xsr_2\Xsr_3\ldots$ be such that $\Xsr_i$ is the unique $\Xsr^j$, $v_i$ belongs to.

(1) If $v_t \in Z_\Even^i$, then $v_{t+1}$ is either in $Z_\Even^i$ or in $\Xsr^r$ for some $t < i$. This follows from $Z_\Even^i$ being an \Odd-trap in $X_\Odd^i$ (by Obs.~\ref{app-obs:traps-subgames}).

(2) If $\Xsr^i$ is seen infinitely often in $\Xsr(\pi)$, then $Z_\Even^i$ is seen infinitely often as well. Due to the pigeonhole principle, $\Xsr^i$ being visited infinitely often in $\Xsr(\pi)$ implies that some $v \in \Xsr^i$ is visited infinitely often.
If $v \not \in Z_\Even^i$, it is in $\Xsr^i \setminus Z_\Even^i$. Say $v \in V_\Even$, then the unique $(v,w) \in \e$ causes positive progress towards $Z_\Even^i$. If $v \in V_\Odd \setminus V^\ell$, then all of the outgoing edges of $v$ cause positive progress towards $Z_\Even^i$.
If $v \in V^\ell$, there is at least one $(v,w) \in E^\ell$ causing positive progress towards $Z_\Even^i$. Since $v$ is seen infinitely often in $\pi$, this edge is taken infinitely often as well. 
By induction, $\pi$ visits $Z_\Even^i$ infinitely often.

\vspace{0.2cm}
\noindent \emph{Claim:} Any fair play $\pi$ starting in $X$ and compliant with $\mathcal{S}$ eventually stays in $Z_\Even^i$ for some $i$.

\noindent \emph{Proof of Claim.}
Let $i$ be the minimum index for which $\Xsr^i$ appears infinitely often in $\Xsr(\pi)$. By observation (2), $\pi$ sees a set of nodes $P \subseteq Z_\Even^i$ infinitely often. Let $v_t \in P$. By observation (1), $v_{t+1}$ is either in $Z_\Even^i$ or in $\Xsr^r$ for some $r < i$.
Since $i$ is the minimum index for which $\Xsr^i$ is seen infinitely often in $\Xsr(\pi)$, after some $t' \in \mathbb{N}$, for all $v_{t'}\in P$, $v_{t'+1} \in Z_\Even^i$.

Since $\pi$ eventually stays in $Z_\Even^i$, the strategy $\mathcal{S}$ eventually collapses to $z_\Even^i$ and thus, \Even wins $\pi$.
\end{proof}

With this, we have proven Prop.~\ref{app-prop:n-odd}, and therefore have given an algorithm to calculate \We and \Wo on an \Odd-fair parity game with 
an odd upper bound $n$ on the priorities in the game graph. The algorithm however requires a sibling-algorithm that does the same for an \Odd-fair parity game with an upper bound $n-1$ on its priorities. In the second part that follows, we give this sibling-algorithm.

% \vspace{0.5cm}
\smallskip
\noindent\textbf{Part 2.}
We now assume an even number $n$ is the maximum priority in $\mathcal{G}^\ell$. %We construct a decreasing series of subsets of $V$, $\{X_\Odd^i\}_{i\in \mathbb{N}}$ by assigning the following sets (Fig.~\ref{fig:X_Even}):
% \vspace{0.3cm}
%Now we know how to get the \Odd and \Even winning regions in a game $\mathcal{G}^\ell$, given that the highest priority $n$ in the game is odd, by reducing it down to solving subgames where the highes priority is $n-1$.
%We should show how to get \Odd and \Even winning regions in a game where the highest priority is even, then we will be able to apply these two constructions recursively to get an variant of Zielonka's algorithm that solves \Odd-fair parity games. 
% 
%This part of the proof is almost identical to the proof of Zielonka's algorithm for regular parity games. %We use strategy te
%But we will give out a full proof for the sake of completeness. 
% 
We set the sets as before, and because $n$ is even, this time $\{X_\Odd^i\}_{i \in \mathbb{N}}$ is an increasing sequence and $\{X_\Even^i\}_{i \in \mathbb{N}}$ is a decreasing one (Fig.~\ref{fig:kuesters-figure-extended}).
Both sequences saturate at some index $k$, and for this $k$, $Z_\Odd^k = \emptyset$. Furthermore, $X_\Even^k$ and $X_\Odd^k$ are \We and \Wo, respectively.
%We again set the sets with the same names as before, the division can be seen in Fig...
%Let $W_O$ be an \Odd paradise in $\mathcal{G}^\ell$, and let $X_\Odd = \SafeReach_\Odd(V, W_E, \mathcal{G}^\ell)$. It is clear that $X_\Odd$ is an \Odd paradise in $\mathcal{G}^\ell$.
%Let $X_\Even = V \setminus X_\Odd$. $X_\Even$ is an \Odd-trap. $N$ is defined as before and $Z:= X_\Even \setminus \SafeReach_\N(X_\Even, N, \mathcal{G}^\ell)$. $Z$ is an \Even-trap in $X_\Even$, and thus $\mathcal{G}^\ell[Z]$ is a subgame. 
%Let $Z_\Even$ and $Z_\Odd$ be the winning regions of \Even and \Odd in $\mathcal{G}^\ell[Z]$ as before. 

\begin{proposition} For all $i$, $Z^i_\Odd \cup X^i_\Odd$ is an \Odd-paradise in $\mathcal{G}^\ell$. 
\end{proposition}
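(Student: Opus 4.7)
The plan is to proceed by induction on $i$. For the base case $i=0$, since $X^0_\Odd = \emptyset$, the claim reduces to showing that $Z^0_\Odd$ is an \Odd-paradise in $\mathcal{G}^\ell$. Lem.~\ref{app-lem:Kuesters6.4} (with $U=V$) gives that $Z^0$ is an \Even-trap in $V$, and the inductive correctness hypothesis on $n$ yields that $Z^0_\Odd$ is an \Odd-paradise in the subgame $\mathcal{G}^\ell[Z^0]$. Composing the two trap properties shows $Z^0_\Odd$ is an \Even-trap in $V$, and the winning \Odd template from the subgame lifts verbatim: every edge out of an \Even vertex of $Z^0_\Odd$ already lies in $Z^0$ and hence in $Z^0_\Odd$, and every live edge out of an \Odd vertex of $Z^0$ stays in $Z^0$, since otherwise the $\Lpre^\exists$ clause of~\eqref{equ:Xsr2a} would place that vertex in $\SafeReach^f_\Even$.

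For the inductive step, assume $Z^{i-1}_\Odd \cup X^{i-1}_\Odd$ is an \Odd-paradise. Then Lem.~\ref{app-lem:safe-reach-Odd-paradise} gives that $X^i_\Odd = \SafeReach^f_\Odd(V, X^{i-1}_\Odd \cup Z^{i-1}_\Odd, \mathcal{G}^\ell)$ is itself an \Odd-paradise, carrying some winning template $\mathcal{S}_X$. Separately, $Z^i_\Odd$ is an \Odd-paradise in the subgame $\mathcal{G}^\ell[Z^i]$ by the inductive correctness hypothesis on $n$, with a winning template $\mathcal{S}_Z$ whose edges stay inside $Z^i_\Odd$. The goal is to glue these two into a single winning \Odd template on $Z^i_\Odd \cup X^i_\Odd$.

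The core of the argument, and the step I expect to be the main obstacle, is showing that $Z^i_\Odd \cup X^i_\Odd$ is an \Even-trap in $V$. Edges from vertices in $X^i_\Odd$ are already absorbed by its \Even-trap property. For an \Even vertex $v \in Z^i_\Odd$, every outgoing edge in $\mathcal{G}^\ell$ either remains in $X^i_\Even$ or exits to $X^i_\Odd$ (as $X^i_\Even$ is only an \Odd-trap, not an \Even-trap, in $V$). Among edges remaining in $X^i_\Even$, none can land in $\SafeReach^f_\Even(X^i_\Even, N^i, \mathcal{G}^\ell)$, for otherwise the $\Cpre_\Even$ clause of~\eqref{equ:Xsr2a} would place $v$ itself into $\SafeReach^f_\Even$, contradicting $v \in Z^i$. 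Hence these edges stay in $Z^i$, and by the subgame \Even-trap property of $Z^i_\Odd$, in $Z^i_\Odd$ itself. So every edge from $v$ lands in $Z^i_\Odd \cup X^i_\Odd$, and the existential clause for \Odd vertices is provided by $\mathcal{S}_X$ and $\mathcal{S}_Z$.

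Finally, I would define $\mathcal{S}$ on $Z^i_\Odd \cup X^i_\Odd$ as the union of $\mathcal{S}_X$, $\mathcal{S}_Z$, and all \Even edges from $Z^i_\Odd$ to $X^i_\Odd$. Since \Odd's choices in $\mathcal{S}_X$ stay within $X^i_\Odd$ and \Even cannot escape the \Even-trap $X^i_\Odd$, the set $X^i_\Odd$ is a sink under $\mathcal{S}$; the added \Even edges therefore create no new cycles through any \Odd vertex, and the live-edge clauses of $\mathcal{S}_X$ and $\mathcal{S}_Z$ alone suffice to make $\mathcal{S}$ a valid \Odd strategy template. Any compliant play starting in $Z^i_\Odd \cup X^i_\Odd$ either stays in $Z^i_\Odd$ forever and is won via $\mathcal{S}_Z$, or at some step enters $X^i_\Odd$ and is won from there on via $\mathcal{S}_X$, completing the induction.
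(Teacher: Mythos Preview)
Your proposal is correct and follows essentially the same approach as the paper: combine a winning \Odd template $z$ on $Z^i_\Odd$ (from the subgame) with a winning \Odd template $x$ on $X^i_\Odd$, add the \Even edges crossing from $Z^i_\Odd$ into $X^i_\Odd$, observe that $X^i_\Odd$ is absorbing so no new cycles are created, and conclude that any compliant play either stays in $Z^i_\Odd$ (won via $z$) or eventually enters $X^i_\Odd$ (won via $x$).

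The main difference is one of explicitness. The paper simply writes ``the strategy template on $X_\Odd^i$ in $\mathcal{G}^\ell$ by $x$'' and cites Obs.~\ref{app-obs:traps-subgames} for the \Even-trap property of $Z^i_\Odd \cup X^i_\Odd$, without saying where $x$ comes from or spelling out the trap argument. You make the induction on $i$ explicit (so that Lem.~\ref{app-lem:safe-reach-Odd-paradise} yields $x$ from the inductive hypothesis) and you unpack the \Even-trap verification in detail, including why an \Even vertex in $Z^i_\Odd$ cannot have an edge into $\SafeReach^f_\Even(X^i_\Even,N^i,\mathcal{G}^\ell)$. One small point worth stating in the inductive step (you do it in the base case): live edges out of \Odd vertices in $Z^i$ all stay in $Z^i$, since $X^i_\Even$ is an \Odd-trap and Lem.~\ref{app-lem:safereacheven-noliveedges} rules out live edges into $\SafeReach^f_\Even$; this is what guarantees that $\mathcal{S}_Z$ already contains all global live edges of its cyclic \Odd vertices, so the combined template is valid in $\mathcal{G}^\ell$.
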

\begin{proof}
    The fact that $Z^i_\Odd \cup X^i_\Odd$ is an \Even-trap follows from the observations in \ref{app-obs:traps-subgames}.
%The union is an \Even trap: $X_\Odd$ is an \Even trap in $\mathcal{G}^\ell$ and $Z_\Odd$ is an \Even trap in $X_\Even$ since it is an \Even trap in  $Z$ and $Z$ is an \Even trap in $X_\Even$ (lemma..).
%Then, $v \in V_\Odd \cap (Z_\Odd \cup X_\Odd)$ has an outgoing edge to staying inside the set, and for  $v \in V_\Even \cap (Z_\Odd \cup X_\Odd)$ all the outgoing edges are either from $X_\Odd$ to $X_\Odd$, $Z_\Odd$ to $Z_\Odd$ or from $Z_\Odd$ to $X_\Odd$.
%Therefore,  $Z_\Odd \cup X_\Odd$ is an \Even trap in $\mathcal{G}^\ell$.

Let us denote the winning \Odd strategy template on $ Z^i_\Odd $ in $\mathcal{G}^\ell[Z^i]$ with $z$ 
and the strategy template on $X_\Odd^i$ in $\mathcal{G}^\ell$ by $x$. 
Let $\e$ be the edge set that contains all edges in $z \cup x$, together with all $\{(v,w) \in E \mid v \in V_\Even \cap (Z^i_\Odd \cup Z_\Odd^i) \}$.
Due to $X_\Odd^i$ being an \Even-trap in $V$, all outgoing edges of \Even nodes in $X_\Odd^i$, stay in $X_\Odd^i$. Then, $\e$ does not introduce any new cycles to $z \cup x$ since all the newly added edges are in one direction, from $Z^i_\Odd$ to $X^i_\Odd$. Thus, $\mathcal{S} = (X_\Odd^i \cup Z_\Odd^i, \e)$ is an \Odd strategy template in $\mathcal{G}^\ell$.
We claim it is also a winning one. 
A play $\pi$ starting in $X_\Odd^i$ and compliant with $\mathcal{S}$ stays in $X_\Odd^i$ and therefore wins by obeying $x$. 
If $\pi$ starts in $Z_\Odd^i$, it either eventually reaches $X_\Odd^i$ and therefore wins by the previous argument. Or, it stays in $Z_\Odd^i$ and wins by obeying $z$.
\end{proof}

\begin{proposition} If $Z^i_\Odd = \emptyset$, $X^i_\Even$ is an \Even-paradise in $\mathcal{G}^\ell$.
\end{proposition}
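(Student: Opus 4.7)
The plan is to follow the construction used in the proof of Prop.~\ref{app-prop:n-odd} for the \Odd-paradise, but with the roles of the players exchanged. By Obs.~\ref{app-obs:traps-subgames}, $X^i_\Even$ is already an \Odd-trap in $V$, so it remains only to exhibit a winning \Even strategy template on $X^i_\Even$. I would set $\Xsr := \SafeReach^f_\Even(X^i_\Even, N^i, \mathcal{G}^\ell)$, so that $X^i_\Even = \Xsr \cup Z^i$, and observe that $Z^i = Z^i_\Even$ because $Z^i_\Odd = \emptyset$. The induction hypothesis applied to the subgame $\mathcal{G}^\ell[Z^i]$ --- whose priorities are bounded by $n-1$ since $N^i$ has been removed --- yields a winning \Even strategy template $z$ on $Z^i$. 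On $\Xsr$, the iterates of \eqref{equ:Xsr2a} defining $\SafeReach^f_\Even$ induce a natural ranking (with $\rank{v}=1$ for $v\in N^i$ and $\rank{v}$ otherwise the least $j$ with $v$ in the $j$-th iterate), and hence a partial \Even strategy template $sr$ in the same spirit as the partial \Odd template used in Sec.~\ref{sec:zielonka:fair}.

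I would then define the candidate template $\mathcal{S}=(X^i_\Even,\e)$ edge by edge as $(v,w)\in\e$ iff: $v\in V_\Odd\cap X^i_\Even$ (all \Odd edges, which remain in $X^i_\Even$ by the \Odd-trap property); or $v\in V_\Even\cap Z^i$ and $(v,w)$ is the edge chosen by $z$; or $v\in V_\Even\cap(\Xsr\setminus N^i)$ and $(v,w)$ is the minimum-rank successor chosen by $sr$; or $v\in V_\Even\cap N^i$ and $w$ is any fixed successor of $v$ inside $X^i_\Even$ (one exists because $X^i_\Even$ is an \Odd-trap). Unravelling the definitions, $\mathcal{S}$ is an \Even strategy template. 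To argue that it is winning, I would pick a fair play $\pi$ compliant with $\mathcal{S}$ (unfair plays are \Even-winning by definition) and split into two cases. If $\pi$ eventually stays in $Z^i$, then because $Z^i$ is an \Odd-trap in $V$, every \Odd-outgoing edge (in particular every live edge) from $Z^i$ remains in $Z^i$; so the suffix of $\pi$ is a play of $\mathcal{G}^\ell[Z^i]$ compliant with $z$ whose fairness condition in $\mathcal{G}^\ell$ coincides with that in $\mathcal{G}^\ell[Z^i]$, and $\pi$ is \Even-winning. Otherwise $\pi$ visits $\Xsr$ infinitely often, and it suffices to show that $\pi$ visits $N^i$ infinitely often: this forces $n\in\chi(\inf(\pi))$, and since $n$ is the global maximum priority, $\max\chi(\inf(\pi))=n$ is even.

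The hard part will be this last claim, and it is precisely where the \Odd-fairness of $\pi$ becomes essential. I would pick $v\in\Xsr\cap\inf(\pi)$ of minimum rank and derive $\rank{v}=1$ by contradiction. If $v\in V_\Even$, then the $sr$-edge is taken every time $v$ is visited, producing a strictly smaller-rank successor in $\inf(\pi)\cap\Xsr$ and contradicting minimality. If $v\in V_\Odd$ was added to $\Xsr$ via $\Cpre_\Even$, then every successor of $v$ has rank strictly less than $\rank{v}$ and lies in $\Xsr$, so \Odd's actual choice again lowers the rank. The delicate case is $v\in V^\ell$ added to $\Xsr$ only through $\Lpre^\exists$, where non-live edges from $v$ may lead to $Z^i$ or carry rank $\geq\rank{v}$ inside $\Xsr$; however, the definition of $\Lpre^\exists$ still guarantees at least one live edge $(v,w)\in E^\ell$ with $w\in\Xsr$ and $\rank{w}<\rank{v}$, and the fairness of $\pi$ forces $(v,w)$ to be traversed infinitely often, producing once more a smaller-rank witness in $\inf(\pi)\cap\Xsr$. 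In every case we reach a contradiction, so $\rank{v}=1$ and $v\in N^i$, which concludes the proof that $\mathcal{S}$ is winning and hence that $X^i_\Even$ is an \Even-paradise in $\mathcal{G}^\ell$.
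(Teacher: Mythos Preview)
Your construction and case analysis match the paper's: combine the winning template $z$ on $Z^i=Z^i_\Even$ with the partial template $sr$ on $\Xsr$, fix arbitrary successors in $X^i_\Even$ for $N^i\cap V_\Even$, add all \Odd edges, and then argue that a compliant fair play either settles in $Z^i_\Even$ or sees $N^i$ infinitely often. Your rank-descent argument for the second case is in fact more explicit than the paper's, which glosses over the $\Lpre^\exists$ sub-case where fairness of $\pi$ is genuinely needed.

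One slip: $Z^i$ is \emph{not} an \Odd-trap in $V$; by Lem.~\ref{app-lem:Kuesters6.4} it is an \Even-trap in $X^i_\Even$ (the statement of Obs.~\ref{app-obs:traps-subgames} is garbled on this point, as its own proof shows). An \Odd vertex in $Z^i$ may well have a non-live edge into $\Xsr$. What you actually need for the ``eventually stays in $Z^i$'' branch is only that \emph{live} edges out of $Z^i$ remain in $Z^i$, so that the fairness condition of $\mathcal{G}^\ell$ and of $\mathcal{G}^\ell[Z^i]$ coincide on the suffix; this follows from Lem.~\ref{app-lem:safereacheven-noliveedges} combined with $X^i_\Even$ being an \Odd-trap. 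With that correction your argument goes through unchanged.
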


\begin{proof}
We know $X^i_\Even$ is an \Odd-trap~\ref{app-obs:traps-subgames}. Let $z$ be the winning \Even strategy on $Z^i_\Even$ in subgame $\mathcal{G}^\ell[Z^i]$ and $sr$ be the partial strategy template on $\SafeReach^f_\Even(X^i_\Even, N^i, \mathcal{G}^\ell)$ where all nodes in $\SafeReach^f_\Even(X^i_\Even, N^i, \mathcal{G}^\ell) \setminus N^i$ are forced to positive progress towards $N^i$ in the next step, and nodes in $N^i$ are dead-ends.

We construct an \Even strategy template $\mathcal{S} = (X_\Even^i, \e)$ where $\e$ is defined as follows: 
$$
(v,w) \in \e \text{ if }\begin{cases}
    (v, w) \in z \cup sr, \\
    (v, w)\in E \text{ and } v\in V_\Odd \cap X_\Even,\\
    w = v_r \text{ if } v \in N^i \cap V_\Even
\end{cases}
$$ where $v_r$ is a randomly chosen fixed successor for each $v \in N^i \cap V_\Even$, that is inside $X_\Even^i$. Such a successor is guaranteed to exist since $X_\Even^i$ is an \Odd-trap.

$\mathcal{S}$ is clearly an \Even strategy template in $\mathcal{G}^\ell$ since all \Odd nodes in $X_\Even^i$ have all their outgoing edges in $\mathcal{S}$ and all \Even nodes have a unique outgoing edge.
We claim it is also winning. 

Let $\pi$ be a play that starts in $X_\Even^i$ and is compliant with $\mathcal{S}$. We claim $\pi$ either (i) eventually stays in $Z_\Even^i$, and therefore eventually obeys $z$ or (ii) it sees $N^i$ infinitely often.
It is easy to see that in both of these cases $\pi$ is \Even winning. We will try to show that one of these cases must occur.
Assume $\pi$ does not eventually stay in $Z^i_\Even$. Then $\pi$ visits some $ v \in \SafeReach^f_\Even(X^i_\Even, N^i, \mathcal{G}^\ell)$ infinitely often. If $v \in V_\Odd$, all outgoing edges of $v$ are in $sr$ make positive progress towards $N^i$, and if $v \in V_\Even $ the unique successor of $v$ in $sr$ make positive progress towards $N^i$. 
Thus, $\pi$ visists $N^i$ after finitely many steps. Since $v$ is visited infinitely often by $\pi$, $N^i$ is also visited infinitely often.
\end{proof}

\subparagraph{Corrrectness of Alg.~\ref{algo:fair-zielonka-bb}.}
The $X$ set in $\SOLVE_\Odd(n, \mathcal{G}^\ell)$ holds the value of $X_\Odd^i$ and
the $X$ set in $\SOLVE_\Even(n, \mathcal{G}^\ell)$ holds the value of $X_\Even^i$ at the $i^{th}$ iteration of their respective \emph{while} loops. 
Note that both of these sequences are initialized at $V$ and are strictly decreasing, until they reach their saturation value $X_\Odd^k$ or $X_\Even^{k'}$. When these saturation values are reached $Z_\Even^k = \emptyset $ in the $\SOLVE_\Odd$ procedure and $Z_\Odd^{k'} = \emptyset $ in the $\SOLVE_\Even$ procedure. 
This is exactly when $\SOLVE_\Even$ returns $X_\Even^{k}$ and $\SOLVE_\Odd$ returns $\SafeReach^f_\Odd(V, X_\Odd^{k'}, \mathcal{G}^\ell)$; correctfully returning their respective winning regions according to the correctness proof of Thm.~\ref{thm:solvebb}.

\subsection{Details on Experimental Results}\label{app:experiments}

We conducted an experimental study to empirically validate the claim that our new \Odd-fair Zielonka's algorithm retains its efficiency in practice. For this, we implemented the following algorithms (non-optimized) in C++:
\begin{itemize}
 \item \texttt{OF-ZL}: \Odd-fair Zielonka's algorithm (Alg.~\ref{algo:fair-zielonka-bb}),
 \item \texttt{N-ZL}: \enquote{normal} Zielonka's algorithm from \cite{Zielonka98} (i.e., Alg.~\ref{algo:fair-zielonka-bb} with the simplifications described in Sec.~\ref{sec:zielonka:orig}),
 \item \texttt{OF-FP}: the fixed-point algorithm for \Odd-fair parity games implementing \eqref{eq:fp-odd} ,
 \item \texttt{N-FP}: the fixed-point algorithm for \enquote{normal} parity games from \cite{EJ91}.
\end{itemize}
Of course, for both \texttt{N-ZL} and \texttt{N-FP} there exist optimized implementations (e.g. \texttt{oink}~\cite{oink}). However, the goal of this section is to show a conceptual comparison, rather than evaluating best computation times. We believe this is better achieved using similar (non-optimized) implementations for all algorithms. In particular, by our experiments we show:
\begin{enumerate}
 \item \texttt{OF-ZL}: is largely insensitive to the number of priorities and number of fair edges (Fig.~\ref{fig:percentages-colours}),
 \item \texttt{OF-ZL}: significantly outperforms \texttt{OF-FP} on almost all benchmarks (Fig.~\ref{fig:zoomed_out} (right))
\item the performance of \texttt{OF-ZL} and \texttt{N-ZL} on the given benchmark set is very similar (Fig.~\ref{fig:zielonkas_comparison}),
 \item the comparative performance of \texttt{OF-ZL} and \texttt{N-ZL} w.r.t.\ their respective fixed-point versions \texttt{OF-FP} and \texttt{N-FP}), respectively, is very similar (see Fig.~\ref{fig:logscale}).
\end{enumerate}
All experiments where run on a large benchmark suite explained in Sec.~\ref{app:experiments:benchmarks}. To perform our experiments we used a machine equipped with Intel(R) Core(TM) i5-6600 CPU @ 3.30GHz and 8GB RAM. We declare a timeout when the calculation of an example exceeds 1 hour.

\subsubsection{Benchmark}\label{app:experiments:benchmarks}
We generated \Odd-fair parity game instances manipulating $286$ benchmark instances of PGAME$\_$Synth$\_$2021 dataset of the SYNTCOMP benchmark suite~\cite{syntcomp} and $51$ benchmark instances of the PGSolver dataset of Keiren's benchmark suite~\cite{keirens}. Within the latter, we restricted ourselfs to instances with $\leq 5000$ nodes.
Both datasets contain examples of normal parity games. For each selected
example, we generate \Odd-fair parity game instances for a particular liveness percentage $\alpha$. For a $\alpha\%$-liveness instant, we fix $\alpha\%$ of the \Odd nodes in the game, and turn $\alpha\%$ of each of their outgoing edges to live edges. %, where $\alpha$ is either $30$ or $50$. 
In addition, we also generated \Odd-fair parity game instances with varying number of priorities $p$ by partitioning the nodes of the games uniformly at random according to the number of priorities.

%We observed that \textsc{OF-FP} times out on \emph{all} instances of the PGSolver dataset, while \textsc{OF-ZL} terminated on all instances with $24,9$ seconds average computation time. %Further, \textsc{OF-FP} times out on $67$ out of $304$ games from the PGAME$\_$Synth$\_$2021 dataset, while \textsc{OF-Z} only times out on $48$ instances. For the instances where both solvers compute a solution, \textsc{OF-Z} is faster in all but one case.
%
Detailed run-times of all algorithms on a representative selection of examples from the instances fenerated from SYNTCOMP benchmark suite are listed in Table~\ref{table:FPvsOddfairzlk}. 
On the \Odd-fair instances with $50\%-$liveness generated from the SYNTCOMP benchmark suite, there are 204 instances where neither of the algorithms \texttt{OF-FP}, \texttt{OF-ZL}, \texttt{N-FP} or \texttt{N-ZL} timed out. On these instances, \texttt{OF-ZL} gives an average computation time of $4.6$ seconds while \texttt{OF-FP} took $122.7$ seconds on average. 
On the same examples, \texttt{N-ZL} takes on average $3.6$ seconds to compute while \texttt{N-FP} gives an average of $45.2$ seconds. 
For the PGSolver dataset \texttt{OF-FP} timed out on all generated instances, whereas \texttt{OF-ZL} took $24.9$ seconds on average to terminate.

% In order to compare the algorithmic advantage Zielonka's algorithm posesses over the fixed-point implementation, 
% we have implemented a naive fixed-point algorithm in the same fashion with \texttt{OF-FP} and our own version of Zielonka's algorithm for regular parity games in the same fashion with \texttt{OF-ZL}.
% We call these these algorithms \texttt{N-FP} and \texttt{N-ZL} respectively, where \texttt{R} stands for `regular'.  
% We do not use any efficient online parity solvers and instead use our 'non-optimal' code that is written similarly to their fair-parity counterparts in order for the comparison to be meaningful.

\subsubsection{Sensitivity}\label{app:experiments:sensitivity}
To monitor the sensitivity of \texttt{OF-ZL} to the change in number of priorities as well as the percentage of live edges in the game, we picked $12$ parity game instances from the SYNTCOMP dataset which did not timeout (after one hour). %\AKS{are those also depicted in the table? if so, could we mark the rows?}\IS{They are mostly not in the table.}
With priorities $3-4-5-6$ and liveness degrees 0$\%$\footnote{regular parity game}-30$\%$-50$\%$-80$\%$ we get 192 different \Odd-fair parity instances. Fig.~\ref{fig:percentages-colours} shows the runtime of \texttt{OF-ZL} on these instances.

We can see that the runtimes of instances with different priority and liveness percentages are distributed in a seemingly random manner.
This tells us that \Odd-fair Zielonka's algorithm is highly insensitive to a change in the percentage of live edges and the number of priorities. %\todo{IS:whereas fixed-point algorithm...?} 
This observation is inline with the known insensitivity of Zielonka's algorithm for the number of priorities.

\begin{figure}%PERCENTAGES-COLOURS
\begin{center}
 \includegraphics[width=8cm]{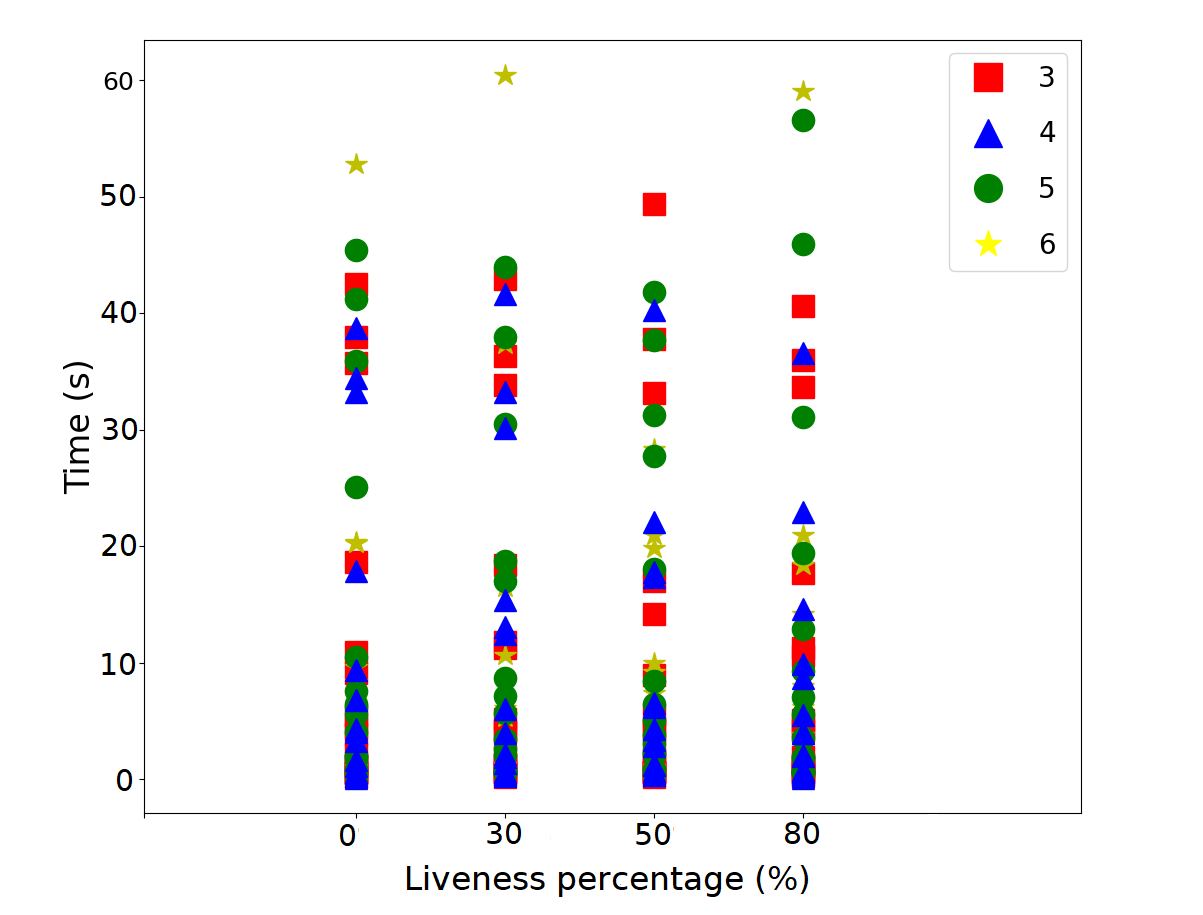}
\end{center}
\caption{Runtime of \texttt{OF-ZL} on the 192 \Odd-fair parity instances generated from 12 fixed parity examples through changing their priorities and liveness degrees. Different shapes indicate the number of prioirities an instance has, and the $x-$axis denotes their liveness percentages. At each coloumn we view 48 different instances of the 12 examples with varying colours.}
\label{fig:percentages-colours}
\end{figure}

\subsubsection{Comparative Evaluation}\label{app:experiments:comparison}
In order to validate the computational advantage of \texttt{OF-ZL} over \texttt{OF-FP}, we have run both algorithms on all 50$\%$-liveness instances generated from the SYNTCOMP benchmark dataset. On 58 of these instances, both algorithms time out. The run-times for all other instances are depicted in Fig.~\ref{fig:zoomed_out} (right),~\ref{fig:zoomed_in} (right) and~\ref{fig:logscale} (right). The left plots in Fig.~\ref{fig:zoomed_out}-\ref{fig:logscale} show the same comparison for the \enquote{normal} parity algorithms \texttt{N-ZL} and \texttt{N-FP}. 
In both cases, Fig.~\ref{fig:zoomed_in} shows the zoomed-in version of the respective plot in Fig.~\ref{fig:zoomed_out}. Fig.~\ref{fig:logscale} shows the data-points from the respective plot in Fig.~\ref{fig:zoomed_in} as a scatter plot in log-scale. 
The examples on which only \texttt{x-FP} times out, can be seen as the dots on the ceiling of the plots in Fig.~\ref{fig:zoomed_out}. In all plots, points above the diagonal correspond to instances where Zielonka's algorithm outperforms the fixed-point algorithm.

We clearly see in Fig.~\ref{fig:zoomed_out}-\ref{fig:logscale} that Zielonka's algorithm performs significantly better than the fixed-point version, both in the \Odd-fair (right) and in the normal (left) case. More importantly, the overall performance comparison between  \texttt{OF-ZL} over \texttt{OF-FP} (right plots) mimics the comparison between \texttt{N-ZL} over \texttt{N-FP}. This allows us to conclude that our new \Odd-fair Zielonka's algorithm retains the computational advantages of Zielonka's algorithm. % and handling fairness only introduces a mild overhead. 

In addition, Table~\ref{table:FPvsOddfairzlk} shows that \texttt{OF-ZL} results in almost the same run-time as \texttt{N-ZL}, showing that our changes in the algorithm incur almost no computational disadvantages over the original algorithm. This allows us to handle transition fairness for almost free in practice.

\begin{figure}%ZOOMED-OUT
\centering
\begin{subfigure}{.45\textwidth}
     \centering
     \includegraphics[scale=0.28]{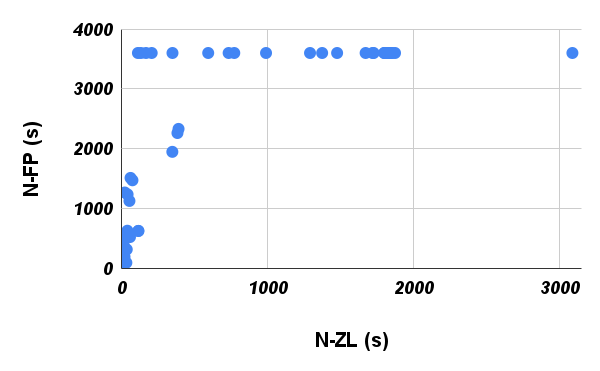}
     %\caption{A subfigure}
     \label{fig:zoomed_out_regular}
\end{subfigure}%
\begin{subfigure}{.45\textwidth}
     \centering
     \includegraphics[scale=0.28]{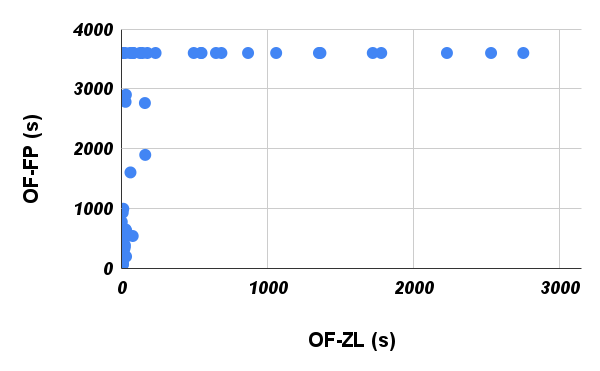}
     %\caption{A subfigure}
     \label{fig:zoomed_out_fair}
\end{subfigure}
\caption{(Zoomed out version) A comparison of \texttt{N-FP} vs. \texttt{N-ZL} in regular parity games (left), and \texttt{OF-FP} vs. \texttt{OF-ZL} on fair parity games (right)}
\label{fig:zoomed_out}
\end{figure}

\begin{figure}%ZOOMED-IN
\centering
\begin{subfigure}{.5\textwidth}
     \centering
     \includegraphics[scale=0.28]{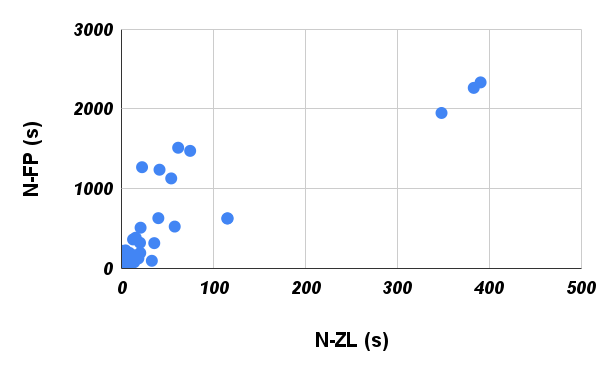}
     %\caption{A subfigure}
     \label{fig:zoomed_in_regular}
\end{subfigure}%
\begin{subfigure}{.5\textwidth}
     \centering
     \includegraphics[scale=0.28]{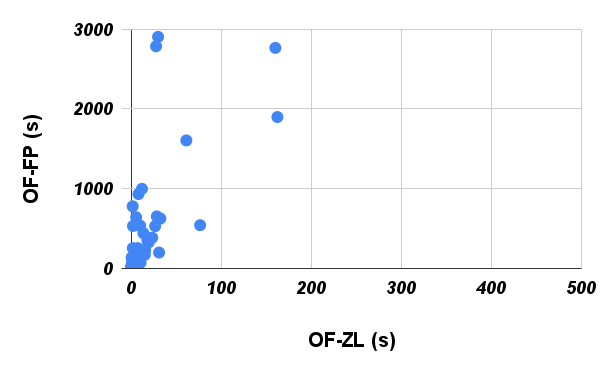}
     %\caption{A subfigure}
     \label{fig:zoomed_in_fair}
\end{subfigure}
\caption{(Zoomed in version) A comparison of \texttt{N-FP} vs. \texttt{N-ZL} in regular parity games (left), and \texttt{OF-FP} vs. \texttt{OF-ZL} on fair parity games (right)}
\label{fig:zoomed_in}
\end{figure}

\begin{figure}
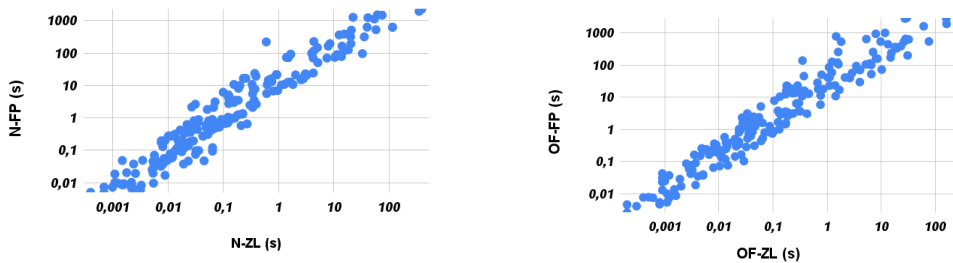
%LOGSCALE
\centering
\begin{subfigure}{.5\textwidth}
     \centering
     \includegraphics[scale=0.28]{figures/experiments/logplot_regular_parity_fp_vs_zl.png}
     %\caption{A subfigure}
     \label{fig:logscale_regular}
\end{subfigure}%
\begin{subfigure}{.5\textwidth}
     \centering
     \includegraphics[scale=0.27]{figures/experiments/logplot_fair_parity_fp_vs_zl.png}
     %\caption{A subfigure}
     \label{fig:logscale_fair}
\end{subfigure}
\caption{A comparison of \texttt{N-FP} vs. \texttt{N-ZL} in regular parity games (left), and \texttt{OF-FP} vs. \texttt{OF-ZL} on fair parity games (right) in terms of log-scale plots where the timeouts are removed.}
\label{fig:logscale}
\end{figure}

\begin{figure}%ZIELONKAS COMPARISON
\centering
\begin{subfigure}{.5\textwidth}
     \centering
     \includegraphics[scale=0.28]{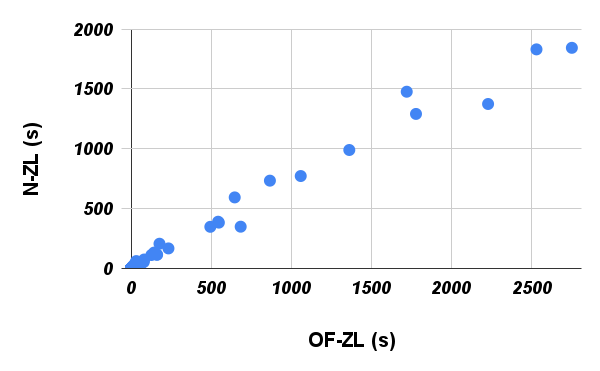}
     %\caption{A subfigure}
     \label{fig:logscale_regular}
\end{subfigure}%
\begin{subfigure}{.5\textwidth}
     \centering
     \includegraphics[scale=0.28]{figures/experiments/zielonkas_comparison_logscale.png}
     %\caption{A subfigure}
     \label{fig:logscale_fair}
\end{subfigure}
\caption{A comparison of \texttt{N-ZL} vs. \texttt{OF-ZL} over examples that do not timeout on both. Right hand side plot visualizes the same data in logscale.}
\label{fig:zielonkas_comparison}
\end{figure}

\smallskip
\textbf{Conclusion:}
The results show that Zielonka's algorithm is significantly faster in solving \Odd-fair parity games compared to the calculation performed by the fixed-point algorithm, as is the case in normal parity games. 
The fixed-point algorithm started timing out as soon as the examples became more complex, being especially sensitive to  
the increase in the number of priorities. Whereas, Zielonka's algorithm preserves its performance considerably in the face of the increase in the same parameters. These 
outcomes match the known comparison results between the naive fixed-point calculation versus Zielonka's algorithm, on normal parity games.

\newpage
% \begin{center}

     \begin{longtable}{||c |c |c |c |c |c ||} %begin{table}
     \caption{Detailed run-time comparison of \texttt{N-FP} and \texttt{N-ZL} on the original parity game instances (yellow rows) with \texttt{OF-FP} and \texttt{OF-ZL} on their respective $30\%$- and $50\%$-liveness \Odd-fair parity game instances (white rows). The instance name is taken from the original benchmark suite.
%      parity examples (the rows with the original example's names)
% and \Odd-fair instances of these examples with 30$\%$- and 50$\%$-liveness (the rows with the original example's names prefixed by \enquote{30$\%$-} or \enquote{50$\%$-})
% solved via a naive implementation of their corresponding fixed point and Zielonka's algorithms. For the regular parity examples \texttt{FP} and \texttt{ZL} correspond to \texttt{N-FP} and \texttt{N-ZL}, and for \Odd-fair parity instances they correspond to \texttt{OF-FP} and \texttt{OF-ZL}, respectively.
}\label{table:FPvsOddfairzlk}\\
     
    %      \begin{tabular}{||c |c |c |c |c |c ||}
 \hline
 Name &  $\#$  &  $\#$  &  $\#$  & \texttt{FP} & \texttt{ZL}  \\ [0.5ex] 
      &    nodes    &   edges     &   priorities &     (sec.)       & (sec.)  \\
 \hline  \hline \rowcolor{Highlight}
 EscalatorCountingInit & 99 & 148 & 3 & 0.064 & 0.012  \\ 
 \hline
 $30\%$-EscalatorCountingInit & 99 & 148 & 3 & 0.075 & 0.018  \\ 
 \hline
 $50\%$-EscalatorCountingInit & 99 & 148 & 3 & 0.072 & 0.02 \\
 \hline \rowcolor{Highlight}
 KitchenTimerV1 & 80 & 124 & 3 & 0.055 & 0.008 \\
 \hline
 $30\%$-KitchenTimerV1 & 80 & 124 & 3 & 0.068 & 0.012 \\
 \hline
 $50\%$-KitchenTimerV1 & 80 & 124 & 3 & 0.21 & 0.009  \\
 \hline \rowcolor{Highlight}
 KitchenTimerV6 & 4099 &  6560 & 3 & 87 & 11 \\
 \hline
 $30\%$-KitchenTimerV6 & 4099 &  6560 & 3 & 88 & 11 \\
 \hline
 $50\%$-KitchenTimerV6 & 4099 &  6560 & 3 & 352 & 18 \\
 \hline \rowcolor{Highlight}
 MusicAppSimple & 344 &  562 & 3 & 0.488 & 0.073 \\
 \hline
 $30\%$-MusicAppSimple & 344 &  562 & 3 & 0.496 & 0.082 \\
 \hline
 $50\%$-MusicAppSimple & 344 &  562 & 3 & 0.799 & 0.089 \\
 \hline \rowcolor{Highlight}
 TwoCountersRefinedRefined & 1933 & 3140 & 3 & 14.9 & 2.5 \\
 \hline
 $30\%$-TwoCountersRefinedRefined & 1933 & 3140 & 3 & 15 & 1.2 \\
 \hline 
 $50\%$-TwoCountersRefinedRefined & 1933 & 3140 & 3 & 74 & 3.72 \\
 \hline \rowcolor{Highlight}
 Zoo5 & 479 & 768 & 3 & 0.96 & 0.135 \\
 \hline
 $30\%$-Zoo5 & 479 & 768 & 3 & 0.981 & 0.152 \\
 \hline 
 $50\%$-Zoo5 & 479 & 768 & 3 & 1.57 & 0.172 \\
 \hline \rowcolor{Highlight}
 amba$\_$decomposed$\_$lock$\_$3 & 1558 & 2336 & 3 & 72 & 1.5  \\
 \hline
 $30\%$-amba$\_$decomposed$\_$lock$\_$3 & 1558 & 2336 & 3 & 73 & 1.5  \\
 \hline 
 $50\%$-amba$\_$decomposed$\_$lock$\_$3 & 1558 & 2336 & 3 & 56 & 2.9 \\
 \hline \rowcolor{Highlight}
 full$\_$arbiter$\_$2 & 204 & 324 & 3 & 0.59 & 0.049 \\
 \hline
 $30\%$-full$\_$arbiter$\_$2 & 204 & 324 & 3 & 0.602 & 0.047 \\
 \hline 
 $50\%$-full$\_$arbiter$\_$2 & 204 & 324 & 3 & 5 & 0.059 \\
 \hline \rowcolor{Highlight}
 full$\_$arbiter$\_$3 & 1403 & 2396 & 3 & 21.18 & 2 \\
 \hline
 $30\%$-full$\_$arbiter$\_$3 & 1403 & 2396 & 3 & 21.5 & 2 \\
 \hline 
 $50\%$-full$\_$arbiter$\_$3 & 1403 & 2396 & 3 & 93 & 3.46 \\
 \hline \rowcolor{Highlight}
 lilydemo06 & 369 & 548 & 3 & 8.1 & 0.18 \\
 \hline 
 $30\%$-lilydemo06 & 369 & 548 & 3 & 8.13 & 0.206 \\ 
 \hline 
 $50\%$-lilydemo06 & 369 & 548 & 3 & 18 & 0.212 \\
 \hline \rowcolor{Highlight}
 lilydemo07 & 78 & 108 & 3 & 0.27 & 0.01 \\ 
 \hline
 $30\%$-lilydemo07 & 78 & 108 & 3 & 0.284 & 0.017 \\ 
 \hline 
 $50\%$-lilydemo07 & 78 & 108 & 3 & 0.33 & 0.008 \\ 
 \hline \rowcolor{Highlight}
 simple$\_$arbiter$\_$unreal1 & 2178 & 3676 & 3 & 22.8 & 3 \\
 \hline
 $30\%$-simple$\_$arbiter$\_$unreal1 & 2178 & 3676 & 3 & 23 & 3 \\
 \hline 
 $50\%$-simple$\_$arbiter$\_$unreal1 & 2178 & 3676 & 3 & 254 & 7 \\ 
 \hline \rowcolor{Highlight}
 amba$\_$decomposed$\_$arbiter$\_$2 & 141 & 212 & 4 & 0.72 & 0.03  \\
 \hline
 $30\%$-amba$\_$decomposed$\_$arbiter$\_$2 & 141 & 212 & 4 & 0.73 & 0.06  \\
 \hline
 $50\%$-amba$\_$decomposed$\_$arbiter$\_$2 & 141 & 212 & 4 & 1 & 0.035  \\
 \hline \rowcolor{Highlight}
 loadfull3 & 1159 & 2030 & 4 & 5.62 & 0.609 \\
 \hline
 $30\%$-loadfull3 & 1159 & 2030 & 4 & 5 & 0.614 \\
 \hline 
 $50\%$-loadfull3 & 1159 & 2030 & 4 & 5 & 0.754 \\
 \hline \rowcolor{Highlight}
 ltl2dba01 & 101 & 152 & 4 & 0.074 & 0.031  \\
 \hline
 $30\%$-ltl2dba01 & 101 & 152 & 4 & 0.075 & 0.030  \\
 \hline 
 $50\%$-ltl2dba01 & 101 & 152 & 4 & 1.4 & 0.028 \\
 \hline \rowcolor{Highlight}
 ltl2dba14 & 97 & 144 & 4 & 0.18 & 0.016 \\
 \hline
 $30\%$-ltl2dba14 & 97 & 144 & 4 & 0.181 & 0.013 \\
 \hline 
 $50\%$-ltl2dba14 & 97 & 144 & 4 & 0.574 & 0.012 \\
 \hline \rowcolor{Highlight}
 ltl2dba22 & 21 & 30 & 4 & 0.037 & 0.002 \\
 \hline
 $30\%$-ltl2dba22 & 21 & 30 & 4 & 0.036 & 0.002 \\
 \hline 
 $50\%$-ltl2dba22 & 21 & 30 & 4 & 0.03 & 0.0009 \\
 \hline \rowcolor{Highlight}
 prioritized$\_$arbiter$\_$unreal2 & 851 & 1412 & 4 & 15.8 & 0.73 \\
 \hline
 $30\%$-prioritized$\_$arbiter$\_$unreal2 & 851 & 1412 & 4 & 16 & 0.759 \\
 \hline 
 $50\%$-prioritized$\_$arbiter$\_$unreal2 & 851 & 1412 & 4 & 126 & 1.2 \\
 \hline \rowcolor{Highlight}
 lilydemo17 & 3102 & 5334 & 7 & 1237 & 41 \\
 \hline
 $30\%$-lilydemo17 & 3102 & 5334 & 7 & Timeout & 41 \\
 \hline 
 $50\%$-lilydemo17 & 3102 & 5334 & 7 & Timeout & 24 \\
 \hline \rowcolor{Highlight}
 lilydemo18 & 449 & 728 & 9 & 220 & 0.6 \\
 \hline
 $30\%$-lilydemo18 & 449 & 728 & 9 & 224 & 0.621 \\
 \hline 
 $50\%$-lilydemo18 & 449 & 728 & 9 & Timeout & 0.552 \\[1ex] 
 \hline
%\end{tabular}
%\caption{Performance comparison between parity examples solved via \texttt{N-FP} and \texttt{N-ZL} (the rows with the original example's names),
%and \Odd-fair instances of the example with 30$\%$- and 50$\%$-liveness solved via \texttt{OF-FP} and \texttt{OF-ZL} (the rows with the original example's names prefixed by \enquote{30$\%$-} or \enquote{50$\%$-}).}

\end{longtable} %\end{table}
% \end{center}
     
% 2 figures side-byy-side using minipages:
%\begin{figure}
%\centering
%\begin{minipage}{.5\textwidth}
%     \centering
%     \includegraphics[width=.4\linewidth]{image1}
%     \captionof{figure}{A figure}
%     \label{fig:test1}
%\end{minipage}%
%\begin{minipage}{.5\textwidth}
%     \centering
%     \includegraphics[width=.4\linewidth]{image1}
%     \captionof{figure}{Another figure}
%     \label{fig:test2}
%\end{minipage}
%\end{figure}

\newpage
\subsection{Additional material for Ex. \ref{ex:1}}\label{app:example}
Below we present an extended version of the fixed-point calculation in \eqref{equ:fpexample}, 
\begin{align*}
&   Y_4^{0} = \emptyset \\
&    \quad X_3^{0, 0} = V \\
&    \quad \quad Y_2^{0,0,0} = \emptyset \\
&    \quad \quad \quad X_1^{0,0,0,0} = V \\
&    \quad \quad \quad X_1^{0,0,0,1} = \Phi^{Y_4^{0}, X_3^{0, 0}, Y_2^{0,0,0}, X_1^{0,0,0,0} } = C_3 \cup C_1 \\
&    \quad \quad \quad X_1^{0,0,0,2} = \Phi^{Y_4^{0}, X_3^{0, 0}, Y_2^{0,0,0}, X_1^{0,0,0,1} } = C_3 \cup (C_1 \cap \Npre(Y_2^{0,0,0}, X_1^{0,0,0,1})) = C_3\\
&    \quad \quad \quad X_1^{0,0,0,3} = \Phi^{Y_4^{0}, X_3^{0, 0}, Y_2^{0,0,0}, X_1^{0,0,0,1} } = C_3 \cup (C_1 \cap \Npre(Y_2^{0,0,0}, X_1^{0,0,0,2})) = C_3\\
&    \quad \quad Y_2^{0,0,1} = X_1^{0,0,0,\infty} = C_3\\
&    \quad \quad \quad X_1^{0,0,1,0} = V \\
&    \quad \quad \quad X_1^{0,0,1,1} = \Phi^{Y_4^{0}, X_3^{0, 0}, Y_2^{0,0,1}, X_1^{0,0,0,0} } = C_3 \cup C_1 \cup \{2b\}\\
&    \quad \quad \quad X_1^{0,0,1,2} = \Phi^{Y_4^{0}, X_3^{0, 0}, Y_2^{0,0,1}, X_1^{0,0,0,1} } = C_3 \cup \{2b\}\\
&    \quad \quad \quad X_1^{0,0,1,3} = \Phi^{Y_4^{0}, X_3^{0, 0}, Y_2^{0,0,1}, X_1^{0,0,0,2} } = C_3 \cup \{2b\}\\
&    \quad \quad Y_2^{0,0,2} = X_1^{0,0,1,\infty} = C_3 \cup \{2b\}\\
&    \quad \quad \quad X_1^{0,0,2,0} = V \\
&    \quad \quad \quad X_1^{0,0,2,1} = \Phi^{Y_4^{0}, X_3^{0, 0}, Y_2^{0,0,2}, X_1^{0,0,0,0} } = C_3 \cup C_1 \cup \{2b, 2c\}\\
&    \quad \quad \quad X_1^{0,0,2,2} = \Phi^{Y_4^{0}, X_3^{0, 0}, Y_2^{0,0,2}, X_1^{0,0,0,1} } = C_3 \cup \{2b,2c\}\\
&    \quad \quad \quad X_1^{0,0,2,3} = \Phi^{Y_4^{0}, X_3^{0, 0}, Y_2^{0,0,2}, X_1^{0,0,0,2} } = C_3 \cup \{2b,2c\}\\
&    \quad \quad Y_2^{0,0,3} = X_1^{0,0,2,\infty} = C_3 \cup \{2b, 2c\}\\
&    \quad \quad \ldots\\
&    \quad \quad Y_2^{0,0,4} = X_1^{0,0,3,\infty} = C_3 \cup \{2b, 2c\}\\
&    \quad X_3^{0, 1} = Y_2^{0,0,\infty} =  C_3 \cup \{2b, 2c\} \\
&    \quad \quad Y_2^{0,1,0} = \emptyset \quad  \\
&    \quad \quad Y_2^{0,1,1} = X_1^{0,1,0,\infty} = \{3b\} \quad \\
&    \quad \quad Y_2^{0,1,2} = X_1^{0,1,1,\infty} = \{2b,3b\}\ \quad \\
&    \quad \quad Y_2^{0,1,3} = Y_2^{0,1,4} = X_1^{0,1,2,\infty} = X_1^{0,1,3,\infty} = \{2b,2c,3b\}\\
&    \quad X_3^{0, 2} = Y_2^{0,1,\infty} = \{2b,2c,3b\} \\
&    \quad \ldots \\
&    \quad X_3^{0, 3} = Y_2^{0,2,\infty} = \{2b,2c,3b\} \\
&    Y_4^{1} = X_3^{0,\infty} = \{2b,2c,3b\}\\
&    \quad X_3^{1, 0} = V \\
&    \quad \quad Y_2^{1,0,0} = \emptyset \\
&    \quad \quad Y_2^{1,0,1} = X_1^{1,0,0,\infty} = C_3 \cup C_4\\
\end{align*}
\begin{align*}
&    \quad \quad Y_2^{1,0,2} = X_1^{1,0,1,\infty} = C_3 \cup C_4 \cup \{2b\}\\
&    \quad \quad Y_2^{1,0,4} =  Y_2^{1,0,3} = C_1 \cup C_3 \cup C_4 \cup \{2b, 2c\}\\
&    \quad X_3^{1, 1} =  Y_2^{1,0,\infty} = C_1 \cup C_3 \cup C_4 \cup \{2b, 2c\}\\
&    \quad \quad Y_2^{1,1,0} = \emptyset \\
&    \quad \quad Y_2^{1,1,1} = C_3 \cup C_4 \\
&    \quad \quad Y_2^{1,1,2} = C_3 \cup C_4 \cup \{2b\}\\
&    \quad \quad Y_2^{1,1,3} = Y_2^{1,1,4} = C_1 \cup C_3 \cup C_4 \cup \{2b, 2c\}\\
&    \quad X_3^{1, 2} =  Y_2^{1,1,\infty} = C_1 \cup C_3 \cup C_4 \cup \{2b, 2c\}\\
&   Y_4^{2} = X_3^{1,\infty} = C_1 \cup C_3 \cup C_4 \cup \{2b, 2c\}\\
&   \ldots \\
&   Y_4^{3} = C_1 \cup C_3 \cup C_4 \cup \{2b, 2c\}
\end{align*}
And finally,
\begin{equation*}
   \mathcal{W}_{Odd} = Y_4^\infty = C_1 \cup C_3 \cup C_4 \cup \{2b, 2c\} = V \setminus \{2a\}
\end{equation*}

\end{document}